\newtheorem{pro}{Proposition}
\newtheorem{rem}{Remark}
\newtheorem{lem}{Lemma}
\newtheorem{thm}{Theorem}
\numberwithin{equation}{section}
\numberwithin{thm}{section}
\numberwithin{lem}{section}
\numberwithin{rem}{section}
\newdefinition{rmk}{Remark}[section]
\renewenvironment{proof}{{\bfseries Proof}}
\journal{Submitted to XXX Journal}
\begin{document}

\begin{highlights}	
	\item The first study of energy-stable ODE discretization and the energy-stable fully discrete scheme by the SPH method for two-phase problems 
	\item The scheme ensures the inheritance of momentum conservation and the energy dissipation law from the PDE level to the ODE level, and then to the fully discrete level. 
	\item This novel method based on NSCH model enjoys the physical consistency, and the detailed mathematical proof is also provided. 
	\item The scheme helps increase the stability of the numerical method, which allows much larger time step sizes than the traditional ISPH method. 
	\item This energy-stable scheme not only alleviates tensile instability, but it also captures the behavior of the interface and energy variation well.
\end{highlights}

\begin{frontmatter}
\title{An energy-stable Smoothed Particle Hydrodynamics discretization of the Navier-Stokes-Cahn-Hilliard model for incompressible two-phase flows} \tnotemark[1]
\tnotetext[1]{https://ctpl.kaust.edu.sa}
\author[inst1]{Xiaoyu Feng} 
\affiliation[inst1]{organization={Computational Transport Phenomena Laboratory (CTPL), Division of Physical Sciences and Engineering (PSE), King Abdullah University of Science and Technology (KAUST)},
            addressline={Thuwal}, 
            postcode={23955-6900}, 
            country={Kingdom of Saudi Arabia}}
\ead{xiaoyu.feng@kaust.edu.sa}          
\author[inst2]{Zhonghua Qiao}
\affiliation[inst2]{organization={Department of Applied Mathematics and Research Institute for Smart Energy, The Hong Kong Polytechnic University},
            addressline={ Hung Hom, Kowloon, Hong Kong},
            city={Hong Kong}}
\ead{zhonghua.qiao@polyu.edu.hk}            
\author[inst1]{Shuyu Sun\corref{cor1}} 
\ead{shuyu.sun@kaust.edu.sa}

\author[inst3]{Xiuping Wang}
\affiliation[inst3]{organization={Computational Transport Phenomena Laboratory (CTPL), Division of Computer, Electrical and Mathematical Sciences and Engineering (CEMSE), King Abdullah University of Science and Technology (KAUST)},
            addressline={Thuwal}, 
            postcode={23955-6900}, 
            country={Kingdom of Saudi Arabia}}
\ead{xiuping.wang@kaust.edu.sa}
\cortext[cor1]{Corresponding author}

\begin{abstract}
Varieties of energy-stable numerical methods have been developed for incompressible two-phase flows based on the Navier-Stokes–Cahn–Hilliard (NSCH) model in the Eulerian framework, while few investigations have been made in the Lagrangian framework. Smoothed particle hydrodynamics (SPH) is a popular mesh-free Lagrangian method for solving complex fluid flows. In this paper, we present a pioneering study on the energy-stable SPH discretization of the NSCH model for incompressible two-phase flows. We prove that this SPH method inherits mass and momentum conservation and the energy dissipation properties at the fully discrete level. With the projection procedure to decouple the momentum and continuity equations, the numerical scheme meets the divergence-free condition.  Some numerical experiments are carried out to show the performance of the proposed energy-stable SPH method for solving the two-phase NSCH model.  The inheritance of mass and momentum conservation and the energy dissipation properties are verified numerically.
\end{abstract}

\begin{keyword}
Smoothed Particle Hydrodynamics \sep Energy stability \sep Two-phase Flow \sep Navier-Stokes-Cahn-Hilliard model
\end{keyword}

\end{frontmatter}



\section{Introduction}  \label{section:Introduction}
Initially proposed \cite{gingold1977smoothed,lucy1977numerical} for astrophysical problems, the smoothed particle hydrodynamics (SPH) approach has become one of the most popular mesh-free particle methods and it is now widely used in various applications, such as multi-phase flow \cite{hu2009constant}, flow in porous media \cite{tartakovsky2016smoothed}, fracture and crack propagation \cite{chen2021modified}, lava flow \cite{zago2018semi}, magneto-hydrodynamics \cite{price2012smoothed}, the evolution of planets and stars \cite{nakajima2015melting}, and even movie special effects \cite{ihmsen2014sph}. Unlike the nodes in other mesh-free methods \cite{wang2002point}, the SPH particles can be regarded not only mathematically as interpolation points but also physically as material components, just like the ``embedded atoms" in molecular dynamics (MD) simulation \cite{hoover1998isomorphism}.

SPH possesses many well-known appealing features. 
First of all, compared to well-studied grid-based Eulerian numerical methods, SPH naturally takes advantage of the Lagrangian framework, and it treats the convection term effectively and stably.  
Free surfaces, material interfaces, and moving boundaries can all be traced naturally through this method without tracking or reconstructing. This mesh-free feature facilitates applications like multi-phase flow and high-energy events like explosions, high-speed collisions, and penetrations. 
Local mass conservation is automatically retained in SPH since the mass is carried by each particle as a unique property of the particle.
By using the unique ``color'' property of each particle, it is much easier to describe complex multi-component or multi-material phenomena, such as the hydrocarbon components in oil and gas reservoirs. 
In addition, the SPH method can handle large deformations and complicated geometry without causing mesh distortion. 
With the progress of neighbor searching schemes and parallel computing strategies \cite{nishiura2015computational}, SPH enjoys even greater computational performance and potential. 
Some review literature for the SPH method can be referred to \cite{monaghan2005smoothed,liu2010smoothed}. 

In this paper, we consider incompressible two-phase flows based on the Navier-Stokes–Cahn–Hilliard (NSCH) model. Even though possessing a lot of the appealing features mentioned above, SPH still faces a number of open problems and challenges. First of all, the energy-stability and physical consistency of SPH for the NS equation have not yet been rigorously studied. Second, the treatment of incompressibility in SPH remains tricky and challenging; in fact, most incompressibility treatments in the literature are not energy stable. Third, to the best of our knowledge, there is not an energy-stable SPH method for the NSCH system proposed in the literature. 

Two common perspectives for the stability analysis of the SPH method include: 
(1) tensile instability caused by disordered distributions of particles; (2) the maximum time step that keeps the simulation stable. For the former, a novel technique called ``particle shifting'' has been employed to alleviate particle gathering \cite{khayyer2017comparative}. For the second, to date, several attempts have been made to study the energy variation of the SPH methods, which reveal that the entropy-increasing (energy dissipation) methods possess better stability \cite{price2012smoothed}. In many approaches, the energy dissipation was explicitly added by an artificial viscosity term, but the artificial viscosity term can introduce additional numerical errors (commonly known as numerical diffusion). 
For the treatment of the incompressibility of the fluid, two common approaches are used: (1) approximately simulating incompressible flows with small compressibility, known as Weakly Compressible SPH (WCSPH); (2) simulating incompressible flows by enforcing the incompressibility, known as Incompressible SPH (ISPH)\cite{xu2009accuracy}. However, energy stable treatment of incompressibility has not fully been addressed in the literature. 
Recently, Sun and Zhu \cite{zhu2022energy} proposed an energy-stable SPH method for incompressible single-phase flow that does not use artificial viscosity terms, and it serves as the basis and inspiration for this paper.

The interface, as a critical element of the two-phase fluid system, can be modeled either as a sharp interface   \cite{osher1988fronts,hirt1981volume,sun2010coupled} or as a diffuse interface \cite{gibbs1878equilibrium}. 
The NSCH model belongs to the diffuse interface approach, and it obeys thermodynamically consistent energy dissipation laws. The same energy law is also desired to be retained in the discretized equations. 
Some techniques, like convex-concave splitting \cite{eyre1998unconditionally,feng2018novel} and the stabilizing approach, are used to construct energy-stable schemes. Several advances in energy-stable schemes include the scalar auxiliary variable method (SAV), the invariant energy quadratization method (IEQ) \cite{shen2018scalar}, the exponential time differencing method (ETD) \cite{du2019maximum}, and the linear energy-factorization method (EF) \cite{wang2021linear,feng2022fully}. The diffuse interface model based on the EoS is also proposed for modeling complex two-phase and multi-component mixtures \cite{qiao2014two,kou2020novel,fan2017componentwise}. However, the above studies are all based on a mesh and the Euler framework. Most of the SPH two-phase methods are based on the sharp interface model \cite{yang2019comprehensive}. Researchers rarely looked into how to combine the SPH method and the diffuse interface model together \cite{hirschler2014application} and how to design an energy-stable scheme for the system. 

The rest of this paper is organized as follows. In Section \ref{section:PDE}, we review the NSCH system and give a brief proof of energy law at the PDE level. In Section \ref{section:SPH&ODE}, some basics of the SPH method are given and an energy-stable ODE model is proposed based on the SPH framework. In Section \ref{section:Scheme}, an energy-stable fully discrete scheme is well developed. We provide the detailed proof and derivation of the ODE system and fully discrete scheme. In Section \ref{section:Numerical examples}, three numerical examples are presented to validate the scheme. Finally, some concluding remarks are given in Section \ref{section:Conclusion}.

\section{PDE model and its energy law}  \label{section:PDE}
\setcounter{equation}{0}
\renewcommand{\theequation}{\arabic{section}.\arabic{equation}}
We now study a mixture of two immiscible, incompressible fluids in a confined domain $\Omega \subset \mathbb{R}^{d}(d=2,3)$ and the NSCH model is introduced. To identify the regions occupied by the two fluids, we introduce a phase function $\phi$, such that 
\begin{equation*}
\phi(\mathbf{x}, t)= \begin{cases}1 & \text { fluid 1 } \\ -1 & \text { fluid 2 }\end{cases}
\end{equation*}
According to the idea of the diffuse interface and gradient flow theory, one of the most popular thermodynamic theories for inhomogeneous fluids, the total mixing energy has two contributions: $F_{b}(\phi)$ from the homogeneous part of the fluid and $F_{\nabla}(\phi)$ from the inhomogeneity of the fluid. The thermodynamic behavior of the entire two-phase system is governed by the mixing energy functional $F(\phi, \nabla \phi)$,
\begin{equation}  \label{TotalE}
\begin{split}
F(\phi, \nabla \phi) & = \int_{\Omega} f(\phi, \nabla \phi) d \mathbf{x} = F_{b}(\phi) + F_{\nabla}(\phi)  = \int_{\Omega} f_{b}(\phi) d \mathbf{x}+\int_{\Omega} f_{\nabla}(\phi) d \mathbf{x}\\
 & = \int_{\Omega}\left\{f_{b}(\phi) + \frac{\lambda}{2}\|\nabla \phi\|^{2} \right\} d \mathbf{x},
\end{split}
\end{equation}
where $f$ represents the energy density, and $\lambda$ denotes the characteristic strength of the phase mixing energy with respect to $\phi$. $\lambda$ has a relation with the surface tension coefficient $\sigma$ at the equilibrium state: $\lambda=\frac{3 \sigma}{2 \sqrt{2}} \varepsilon$, where $\varepsilon$ is the capillary width of the interface thickness. The energy density function from homogeneity, $f_{b}(\phi)$, only depends on $\phi$ locally and $f_b(\phi)=\lambda\left(\phi^{2}-1\right)^{2} /\left(4 \varepsilon^{2}\right)$. The equilibrium process of a two-phase system minimizes the total mixing energy \eqref{TotalE}. Through the variational derivative of the energy functional $F$ with respect to $\phi$, we obtain the chemical potential $\mu$ and $\mu=\mu_{b}+\mu_{\nabla}$. By setting the coefficient $\lambda_{b}=\lambda/\varepsilon^{2}$, we have
$\displaystyle\mu_{b}: = \frac{\delta F_{b}}{\delta \phi} =  f_b^{\prime}\left(\phi\right) = \lambda_{b}(\phi^3-\phi)$, and
 $\displaystyle\mu_\nabla:= \frac{\delta F_{\nabla}}{\delta \phi} = -\lambda \Delta \phi.$

The dynamics of the phase function $\phi$ can be determined by an $H^{-1}$ gradient
flow of \eqref{TotalE}, which leads to the following Cahn-Hilliard (CH) equations:
\begin{subequations}
\begin{eqnarray}
\frac{D \phi}{D t}= \frac{\partial \phi}{\partial t}+(\mathbf{v} \cdot \nabla) \phi = \nabla \cdot(M \nabla \mu),  \label{CHpde1}
\\
\mu=\lambda_{b}(\phi^3-\phi) - \lambda \Delta \phi.	\label{CHpde2}
\end{eqnarray}
\end{subequations}
Here, $M$ is a mobility parameter related to the relaxation time scale, and $\mathbf{v}$ is the velocity of the fluids. When it turns to the Navier-Stokes (NS) part, for the two-phase system, the momentum equation takes the usual form below.
\begin{equation*}
\rho \frac{D \mathbf{v}}{D t} = \rho\left( \frac{\partial \mathbf{v}}{\partial t} +(\mathbf{v} \cdot \nabla) \mathbf{v}\right) = -\nabla p+\nabla \cdot \left( \eta  D(\mathbf{v})\right) -\lambda \nabla \cdot( \nabla \phi \otimes \nabla \phi),
\end{equation*}
where $\rho$ denotes the density, $p$ is the pressure, and the shear viscosity is denoted by $\eta$. $-\lambda \nabla \phi \otimes \nabla \phi$ represents the extra elastic stress induced by the interfacial tension. With the aid of modified pressure and the divergence theorem, we arrive at the mathematical formula for the matched density and viscosity case:
\begin{equation}
\rho \frac{D \mathbf{v}}{D t}=-\nabla p+\eta \Delta \mathbf{v}+\mu \nabla \phi.  \label{NSpde1}
\end{equation}
with the continuity equation characterizing the mass conservation:
\begin{equation}\label{ContinuityE}
\frac{D \rho }{D t} = \frac{\partial \rho}{\partial t}+\nabla \cdot(\rho \mathbf{v})=0.
\end{equation}
For incompressible flows, \eqref{ContinuityE} can be converted into the divergence-free condition:
\begin{equation}
\nabla \cdot \mathbf{v}=0. \label{div-free}	
\end{equation}
The periodic boundary condition is applied here:
\begin{equation}
\Phi(x)=\Phi(x+nL),  \quad n \in Z, \label{per-bdry}
\end{equation}
where $\Phi$ denotes a general variable, and it can be $\phi$, $p$, $\mathbf{v}$, $\rho$ and so on. $L$ is a period of function. This periodic boundary condition makes $\oint_{\partial \Omega} \Phi(\mathbf{x}) \cdot \mathbf{n} d s=0$ hold.

We now derive the energy law at the PDE level for the NSCH system includes \eqref{CHpde1} , \eqref{CHpde2}, \eqref{NSpde1}, \eqref{div-free} and \eqref{per-bdry}. Here and after, for any functions $f, g \in L^{2} (\Omega)$, we use $(f, g)$ to denote $\int_{\Omega} f g \mathrm{~d} \mathbf{x}$ and $\|f\|^{2}=(f,f)$.
\begin{pro}
The NSCH system satisfies the following energy dissipation law:
\begin{equation}
\frac{d E_{\text {total }}}{d t}
= \frac{d E_{k}}{d t}+\frac{d F}{d t} = - \eta\|\nabla \mathbf{v}\|^{2} - M\|\nabla \mu\|^{2}  \leq 0,   \label{NSCHPDEenergy}
\end{equation}
where the kinetic energy $\displaystyle E_{k}:=\frac{\rho}{2}\|\mathbf{v}\|^{2}$ and the total energy $E_{\text {total }}=E_k+F$.
\end{pro}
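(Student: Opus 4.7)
The plan is the standard energy-estimate template for phase-field flow models: test each evolution equation against the variable that produces the corresponding energy term, then show that the only terms coupling the Navier--Stokes and Cahn--Hilliard blocks cancel exactly. Concretely, I would first test the momentum equation \eqref{NSpde1} in $L^2(\Omega)$ against $\mathbf{v}$, and test the Cahn--Hilliard equation \eqref{CHpde1} against $\mu$.

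For the momentum step, I would rewrite $\rho\, \mathbf{v}\cdot D\mathbf{v}/Dt = \rho\, D(|\mathbf{v}|^2/2)/Dt$ and integrate, using $\nabla\cdot\mathbf{v}=0$ (with the periodic boundary condition to discard the boundary flux of $|\mathbf{v}|^2\mathbf{v}$) to replace the material derivative under the integral by $d/dt$, obtaining $dE_k/dt$ on the left. On the right, the pressure contribution $-(\nabla p,\mathbf{v})$ vanishes by integration by parts together with \eqref{div-free} and \eqref{per-bdry}; the viscous term becomes $-\eta\|\nabla\mathbf{v}\|^2$ by one integration by parts; the capillary coupling yields $(\mu\nabla\phi,\mathbf{v})$, which I would keep aside for the cancellation step.

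For the Cahn--Hilliard step, I would use the chain rule for the energy functional $F(\phi,\nabla\phi)$ together with the identification $\mu = \delta F/\delta\phi = \lambda_b(\phi^3-\phi)-\lambda\Delta\phi$ (with the boundary term from varying $F_\nabla$ vanishing by periodicity), so that $dF/dt=(\mu,\partial_t\phi)$. Substituting $\partial_t\phi = -(\mathbf{v}\cdot\nabla)\phi + \nabla\cdot(M\nabla\mu)$ from \eqref{CHpde1} and integrating by parts on the mobility term produces $-M\|\nabla\mu\|^2$ and the transport contribution $-(\mu,(\mathbf{v}\cdot\nabla)\phi)$. Adding the two identities, this transport term and the capillary coupling $(\mu\nabla\phi,\mathbf{v})$ from the momentum step are exact negatives of one another and cancel, leaving precisely \eqref{NSCHPDEenergy}.

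The only genuinely delicate step, and the one I expect to be the main obstacle to present cleanly, is converting $(\rho\, D\mathbf{v}/Dt,\mathbf{v})$ into $dE_k/dt$: one must either invoke the matched-density hypothesis already used to derive \eqref{NSpde1} (so that $\rho$ passes through the time and spatial derivatives) or use the continuity equation \eqref{ContinuityE} jointly with \eqref{div-free} to justify $\int_\Omega \rho\, D(|\mathbf{v}|^2/2)/Dt\, d\mathbf{x} = d/dt\int_\Omega (\rho/2)|\mathbf{v}|^2 d\mathbf{x}$. Every other manipulation is a routine integration by parts in which boundary contributions are killed by the periodic condition \eqref{per-bdry}.
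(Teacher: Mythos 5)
Your proposal is correct and follows essentially the same route as the paper: test the momentum equation against $\mathbf{v}$ and the Cahn--Hilliard pair against $\mu$ and $\partial_t\phi$, kill the pressure term via incompressibility and periodicity, and cancel the transport term $(\mathbf{v}\cdot\nabla\phi,\mu)$ against the capillary term $(\mu\nabla\phi,\mathbf{v})$. The one point you flag as delicate --- justifying $(\rho\,D\mathbf{v}/Dt,\mathbf{v}) = dE_k/dt$ via the matched-density assumption or the continuity equation --- is in fact passed over silently in the paper, so your treatment is, if anything, slightly more careful.
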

\begin{proof}
By taking the inner product of equation \eqref{NSpde1} with $\mathbf{v}$ and applying the divergence-free condition \eqref{div-free}, we get
\begin{equation*}
 \frac{d E_{k}}{d t} = \frac{d(\frac{1}{2} \rho\|\mathbf{v}\|^{2})}{d t}
 = ( \rho \frac{D \mathbf{v}}{D t}, \mathbf{v} )  =  (-\nabla p, \mathbf{v}) +
 ( \eta \Delta \mathbf{v}, \mathbf{v} ) + ( \mu \nabla \phi , \mathbf{v} ).
\end{equation*}
Combining the inner product of equation \eqref{CHpde1} with $\mu$ and the inner product of equation \eqref{CHpde2} with $\frac{\partial \phi}{\partial t}$, we have
\begin{equation}
\frac{d F}{d t} = \frac{d}{d t}\left\{\left(F_{b}(\phi), 1\right)+\frac{\lambda}{2}\|\nabla \phi\|^{2}\right\} = -(\mathbf{v} \cdot \nabla \phi, \mu) -M\|\nabla \mu\|^{2}.	
\end{equation}
Based on the divergence theorem and the boundary condition, the inner product of the interfacial term $(\mu \nabla \phi, \mathbf{v})$ in the NS system and the inner product of the convection term $(\mathbf{v} \cdot \nabla \phi, \mu)$ in the CH system is equal. These two terms characterize the energies transfer between the kinetic energy $E_k$ and the free energy $F$. We define $\text{transfer}_\mathrm{F-EK} := \int_{\Omega} (\mu \nabla \phi, \mathbf{v}) d \mathbf{x}$ as the energy quantity transferred from the free energy to the kinetic energy per unit time. The subscript $\mathrm{F-EK}$ means the energy transfer from $F$ to $E_k$ and vice versa. In the system, we have the relation
\begin{equation*}
\text{transfer}_\mathrm{F-EK} = - \text{transfer}_\mathrm{EK-F}.	
\end{equation*}
For the NS system, we have $\displaystyle (-\nabla p, \mathbf{v})= \int_{\partial \Omega} p \mathbf{v} \cdot \mathbf{n} d s+\int_{\Omega} p \nabla \cdot \mathbf{v} d \mathbf{x}=0$. The energy conservation laws for the CH system and the NS system can be concluded, respectively, as follows, 
$$\frac{d F}{d t}=\operatorname{transfer}_{\mathrm{EK}-\mathrm{F}}- M\|\nabla \mu\|^{2}, 
\quad \frac{d E_{k}}{d t} = \operatorname{transfer}_{\mathrm{F}-\mathrm{EK}} - \eta\|\nabla \mathbf{v}\|^{2}.$$ 	
Consequently, the desired energy law \eqref{NSCHPDEenergy} for the NSCH system is obtained.
\end{proof}

\section{SPH discretization and the NSCH ODE system}  \label{section:SPH&ODE}
\setcounter{equation}{0}
\renewcommand{\theequation}{\arabic{section}.\arabic{equation}}

In this section, some basics of the SPH method will be given first. Then a physical consistent NSCH ODE system will be derived based on the SPH discretization in space.

There are two main steps to the SPH method: the integral representation and the particle approximation. At the integral representation step, a physical field function $f(x)$ is converted into its integral representation based on the convolution of variables through a smoothed and weighted function, which gives
\begin{equation}
\langle f(\mathbf{x}) \rangle_h=\int_{\Omega} f\left(\mathbf{x}^{\prime}\right) W\left(\mathbf{x}-\mathbf{x}^{\prime}, h\right) d \mathbf{x}^{\prime}.
\end{equation}
Here, $\Omega$ is the supporting domain, and with a smoothing length $h$, a smoothing kernel function $W \left(\mathbf{x}-\mathbf{x}^{\prime},  h\right)$ is used to approximate the $\delta$ function in the convolution. The SPH approximation is represented by the symbol $\langle \rangle_h$, or by the subscript $h$, as in $f_{h}(\mathbf{x})$. The expression specifies the contribution to any field variable at position $\mathbf{x}$ by a particle at $\mathbf{x}^{\prime}$ that lies within the compact support of the kernel function. The kernel function becomes a Dirac delta function when $h \rightarrow 0$, namely $\lim _{h \rightarrow 0} W\left(\mathbf{x}-\mathbf{x}^{\prime}, h\right)=\delta\left(\mathbf{x}-\mathbf{x}^{\prime}\right)$, and it must satisfy the unity condition:
\begin{equation*}
\int_{\Omega} W\left(\mathbf{x}-\mathbf{x}^{\prime}, h\right) d \mathbf{x}^{\prime}=1.
\end{equation*}

Particle approximation: it is achieved by replacing the integral representation of the field function with the sum of all particle values. It also means that these material particles interact with each other, and the kernel function controls the range of their effects.  Function values at any position in the domain can be approximated through
\begin{equation}
f_{h}(\mathbf{x})= \sum_{j} f\left(\mathbf{x}_{j}\right) W\left(\mathbf{x}-\mathbf{x}_{j}, h\right) V_{j}
= \sum_{j} \frac{m_{j}}{\rho_{j}} f\left(\mathbf{x}_{j}\right) W\left(\mathbf{x}-\mathbf{x}_{j}, h\right),
\end{equation}
where $V_{j}$ is the volume element of one particle $j$ at $\mathbf{x}_{j}$, and it has been replaced by the ratio between the mass and density: $V_{j}=m_{j} / \rho_{j}$. If we consider the location of one particle $i$:
\begin{equation}
f_{h}(\mathbf{x}_i)=\sum_{j} \frac{m_{j}}{\rho_{j}} f\left(\mathbf{x}_{j}\right) W\left(\mathbf{x}_i-\mathbf{x}_{j}, h\right).
\end{equation}
Here, $i$ and $j$ are particle indexes, and particle $j$ is one of neighboring particles for $i$. The physical variable $f\left(\mathbf{x}_{i}\right)$ can be approximated using the interpolation formula with the corresponding physical variables of neighboring particles $f\left(\mathbf{x}_{j}\right)$. We also use $\mathbf{x}_{ij}$ and $r_{ij}$ to replace $\mathbf{x}_{i}-\mathbf{x}_{j}$ and $\|\mathbf{x}_{i}-\mathbf{x}_{j}\|$ respectively in the following content. It must be noted that the kernel function applied in this paper is the Quintic Spline SPH kernel.
\begin{equation*}
W\left(\mathbf{x}_{i j}, h\right)=\sigma_{d} \begin{cases}(3-q)^{5}-6(2-q)^{5}+15(1-q)^{5}, & 0 \leq q< 1,
 \\ (3-q)^{5}-6(2-q)^{5}, & 1 \leq q< 2,
 \\ (3-q)^{5}, & 2 \leq q< 3, \\ 0, & 3 \leq q.
\end{cases}
\end{equation*}
If we set the support radius of the domain of influence to be $h$, we have $\hat h = h/3$ and $q=\left\|\mathbf{x}_{i j}\right\| / \hat h$. $\sigma_{d}$ is the dimensional normalization factor, where $\sigma_{1}=1 / 120 \hat h, \sigma_{2}=7 / 478 \pi \hat h^{2}$ and $\sigma_{3}=3 / 359 \pi \hat h^{3}$ in 1D, 2D and 3D domains respectively.
\begin{table}[!htb]
\begin{center}
\caption{Some common SPH operators for discretizing PDE formulas}
\label{SPH operators table}
\renewcommand{\arraystretch}{1.5}
\begin{tabular}{l|l|l}
\hline  Divergence & $\nabla \cdot \mathbf{A}(\mathbf{x}_i)$ &
$\nabla_h \cdot \mathbf{A}(\mathbf{x}_i)=-\sum_{j} \frac{m_{j}}{\rho_{i}} \left(  \mathbf{A}_i - \mathbf{A}_j  \right) \cdot \nabla_i W_{ij}$  \\
\hline  Gradient & $\nabla f(\mathbf{x}_i)$        &
 $\nabla_{h} f\left(\mathbf{x}_{i}\right) = -\sum_{j} \frac{m_{j}}{\rho_{i}}\left(f_{i}-f_{j}\right) \nabla_{i} W_{ij}$  \\
\hline  Gradient (symmetric) &  $ \nabla \frac{f}{\rho}|_{\mathbf{x}=\mathbf{x}_i}$ &
$\nabla_h \frac{f}{\rho}|_{\mathbf{x}=\mathbf{x}_i}  = \sum_{j} m_{j}\left(\frac{f_{i}}{\rho_{i}^{2}}+\frac{f_{j}}{\rho_{j}^{2}}\right) \nabla_{i} W_{i j}$  \\
\hline Laplace (scalar/vector) & $\nabla^{2} f(\mathbf{x}_i)$  &
$\nabla_h^{2} f(\mathbf{x}_i) = 2 \sum_{j} \frac{m_{j}}{\rho_{j}} \frac{f(\mathbf{x}_{i})-f(\mathbf{x}_{j})}{\left\|\mathbf{x}_{i j}\right\|^{2}}\left(\mathbf{x}_{i j} \cdot \nabla_{i} W_{i j}\right)$  \\
\hline Laplace (vector) & $\nabla^{2} \mathbf{A}(\mathbf{x}_i)$  &
$\nabla_h^{2} \mathbf{A}(\mathbf{x}_i) = 2(d+2) \sum_{j} \frac{m_{j}}{\rho_{j}} \frac{\left(\mathbf{A}_{i}-\mathbf{A}_{j}\right) \cdot \mathbf{x}_{ij}}{\left\|\mathbf{x}_{ij}\right\|^{2}} \nabla_{i} W_{i j}$\\
\hline
\end{tabular}
\end{center}
\end{table}

It is essential to discretize the PDE equations through appropriate SPH operators to maintain the inheritance of physical properties like mass conservation, momentum conservation and energy dissipation law at the ODE and fully discrete level. Some common SPH operators are introduced in Table \ref{SPH operators table}. $W\left(\mathbf{x}_{ij}, h\right)$ is denoted by $W_{ij}$ as well and $\nabla_{i}W_{ij}$ is used to represent the gradient of the kernel function $W$ at the position of particle $i$. Therefore, the anti-symmetric property of gradient can be obtained:
\begin{equation}
\nabla_{j} W_{i j}=-\nabla_{i} W_{i j}.   \label{anti-symmetric}
\end{equation}
The number of dimensions is denoted by $d$ in the SPH Laplace operator. In the following, we will explain why the symmetric or anti-symmetric properties of the SPH operators or the SPH mathematical expression with physical consistency are vital. In order to avoid singularities, a minor term of $0.01h^2$ is added to the denominator of the SPH Laplace operator, where $\left\|\mathbf{x}_{ij}\right\|^{2}$ may vanish when $i$ is equal to $j$ as follows:
\begin{equation*}
\nabla_h^{2} f(\mathbf{x}_i) = 2 \sum_{j} \frac{m_{j}}{\rho_{j}} \frac{f(\mathbf{x}_{i})-f(\mathbf{x}_{j})}{\left\|\mathbf{x}_{i j}\right\|^{2} + 0.01h^2}\left(\mathbf{x}_{i j} \cdot \nabla_{i} W_{i j}\right).
\end{equation*}
Based on the anti-symmetric property of gradient as \eqref{anti-symmetric}, the important negative property of the kernel function can be deduced:
\begin{equation}
\nabla_{i} W_{i j}=\mathbf{x}_{i j} \omega\left( r_{ij}\right), \label{negative}
\end{equation}
where $\omega$ is a scalar function and $\omega \leq 0$ always. 
This negative property can be easily seen from the Figure \ref{kernel function} as well.
\begin{figure}[h]
\centering  
\subfigure[]{
\includegraphics[width=0.35\textwidth]{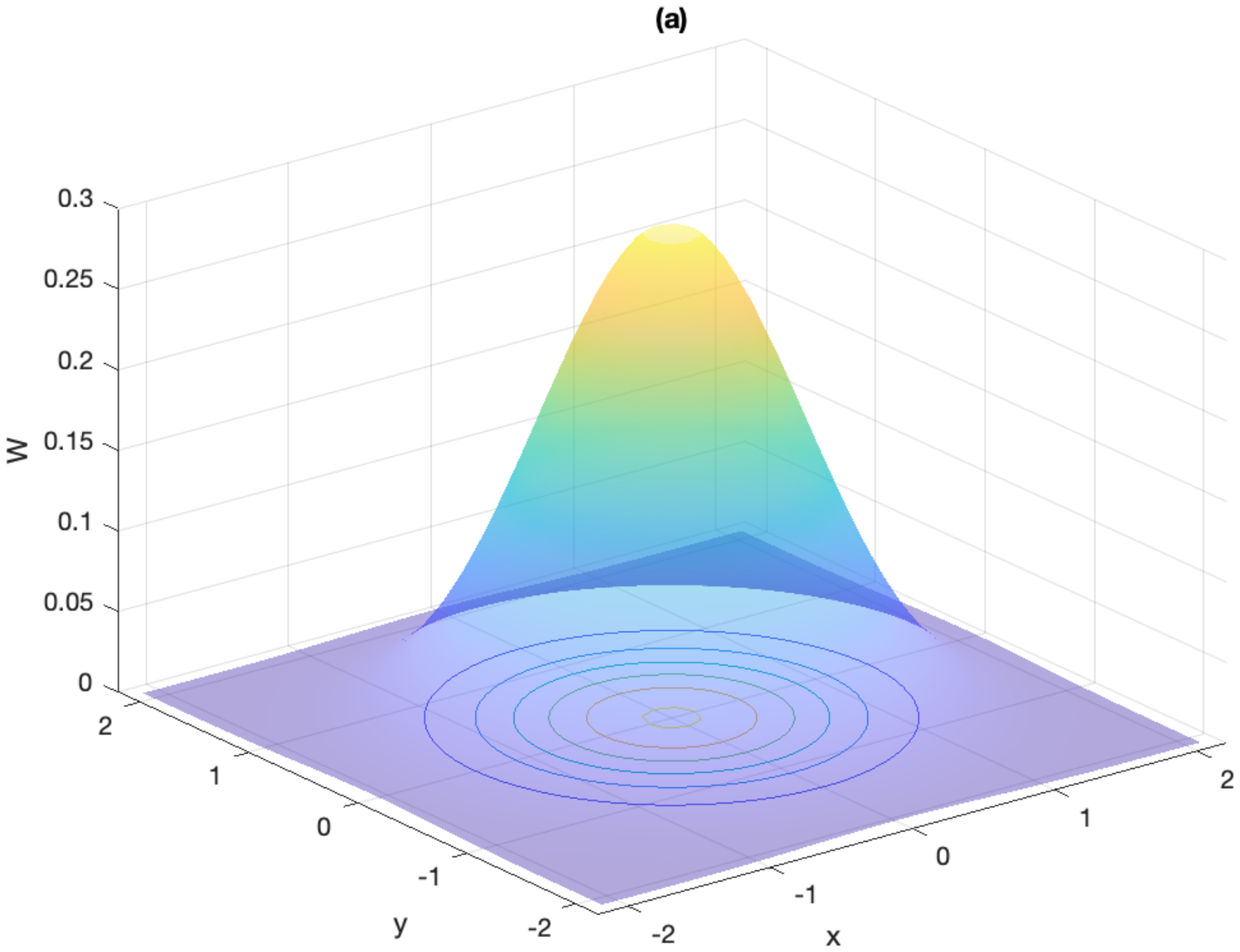}}
\subfigure[]{
\includegraphics[width=0.35\textwidth]{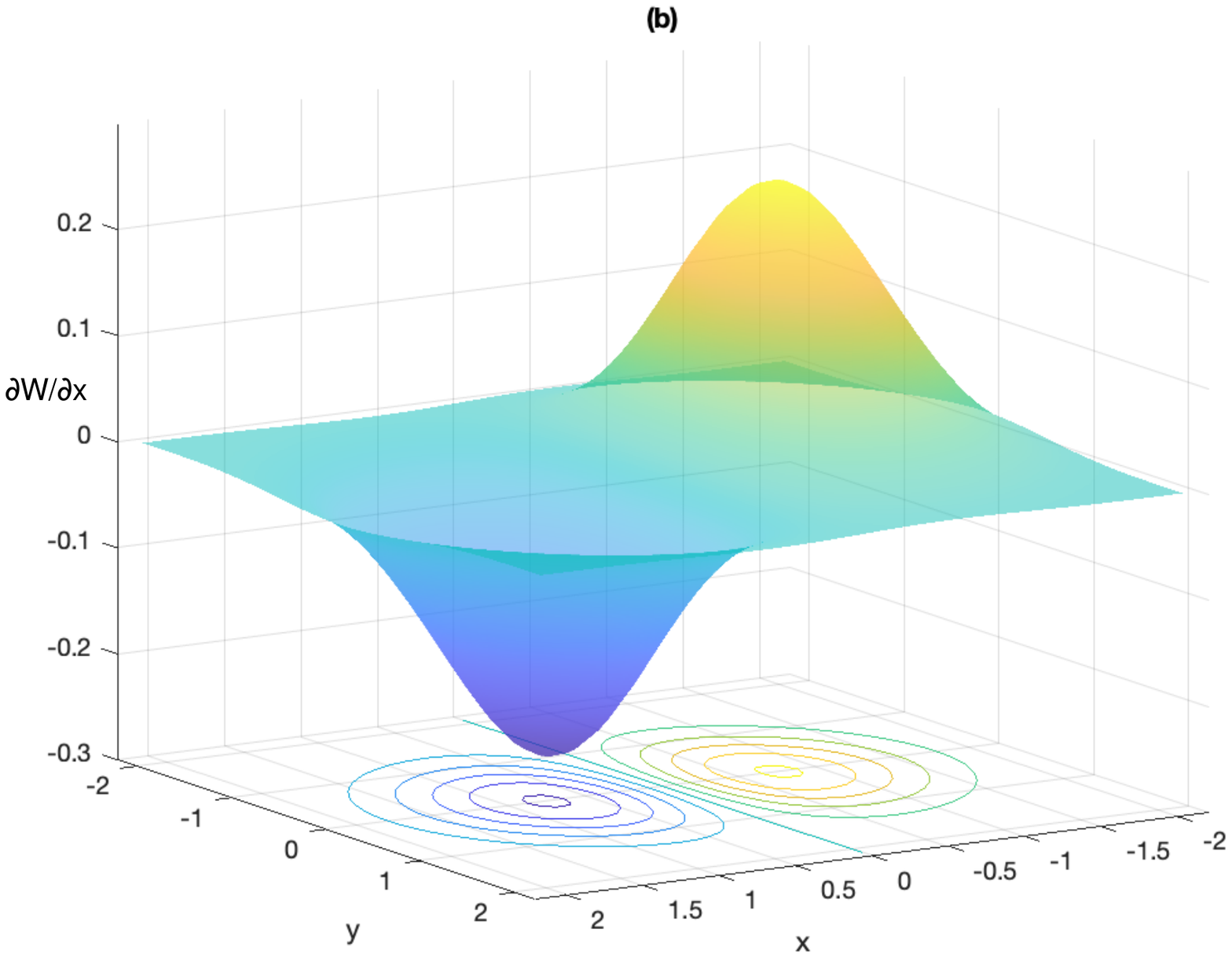}}
\subfigure[]{
\includegraphics[width=0.35\textwidth]{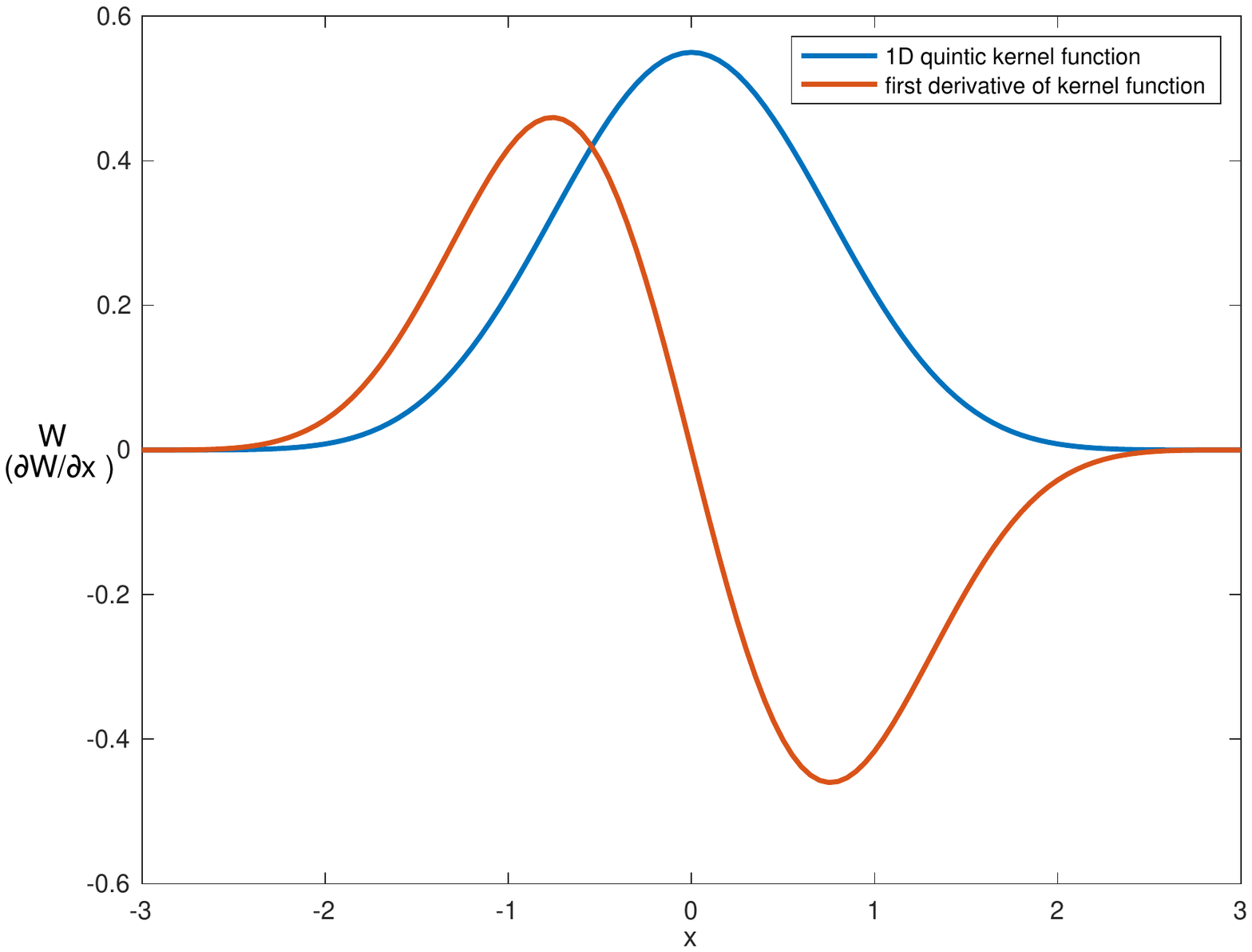}}
\caption{(a) Contour graph for Quintic kernel in 2D; (b) First derivative of kernel function; (c) Plot of Quintic kernel function and its first derivative in 1D}
\label{kernel function}
\end{figure}

From \eqref{anti-symmetric} and \eqref{negative}, we can derive the following lemma. We omit the proof and leave it to the interested readers.
\begin{lem}
Assume the SPH operator $\sigma_{ij}$ is symmetric $(
\sigma_{ij} = \sigma_{ji})$, we obtain
\begin{equation}
\sum_{i} \sum_{j} m_{i} m_{j} \sigma_{i j} \nabla_{i} W_{i j} = 0,  \label{SPHsymmetric}
\end{equation}
and if $\sigma_{i j} \geq 0$, we have
\begin{equation} \label{SPHnegative}
\left\{\begin{array}{l}
\sum_{i} \sum_{j} m_{i} m_{j}  \sigma_{i j}\left(\mathbf{x}_{i j} \cdot \nabla_{i} W_{i j}\right) \leq 0,
 \\
\sum_{i} \sum_{j} m_{i} m_{j} \sigma_{i j}\left(\mathbf{v}_{i j} \cdot \mathbf{x}_{i j}\right) \mathbf{v}_{i} \cdot \nabla_{i} W_{i j} \leq 0.
\end{array}\right.	
\end{equation}
\end{lem}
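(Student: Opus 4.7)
The plan is to verify each of the three statements by combining the two properties of the kernel gradient from \eqref{anti-symmetric} and \eqref{negative} with a simple index-swap symmetrization argument. The key identities I will repeatedly invoke are (i) $\nabla_j W_{ji}=-\nabla_i W_{ij}$, which follows by relabeling in \eqref{anti-symmetric} together with the radial symmetry of the Quintic spline kernel, (ii) the assumed symmetry $\sigma_{ij}=\sigma_{ji}$, and (iii) $\nabla_i W_{ij}=\mathbf{x}_{ij}\,\omega(r_{ij})$ with $\omega\le 0$.

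For \eqref{SPHsymmetric}, I would denote the sum by $S$ and then swap the dummy indices $i\leftrightarrow j$. Using $m_i m_j=m_j m_i$, $\sigma_{ji}=\sigma_{ij}$, and the identity $\nabla_j W_{ji}=-\nabla_i W_{ij}$, the relabeled sum equals $-S$. Therefore $S=-S$ and $S=0$. This is essentially a one-line Newton's third law argument and should present no difficulty.

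For the first line of \eqref{SPHnegative}, I would insert \eqref{negative} directly into the summand. Since $\mathbf{x}_{ij}\cdot \nabla_i W_{ij}=r_{ij}^2\,\omega(r_{ij})\le 0$, and the remaining factors $m_i, m_j, \sigma_{ij}$ are all non-negative, every term of the double sum is non-positive, which gives the inequality term by term. No index manipulation is needed here.

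The second line of \eqref{SPHnegative} is the main obstacle, because the summand is not manifestly sign-definite: the factor $\mathbf{v}_i\cdot\nabla_i W_{ij}$ has no definite sign, unlike the $\mathbf{v}_{ij}$ that one would like to see. My plan is to symmetrize. Call the sum $S_3$; performing the $i\leftrightarrow j$ swap and using $\mathbf{x}_{ji}=-\mathbf{x}_{ij}$, $\mathbf{v}_{ji}=-\mathbf{v}_{ij}$ (so $\mathbf{v}_{ji}\cdot\mathbf{x}_{ji}=\mathbf{v}_{ij}\cdot\mathbf{x}_{ij}$), combined with $\nabla_j W_{ji}=-\nabla_i W_{ij}$ and $\sigma_{ji}=\sigma_{ij}$, yields
\begin{equation*}
S_3 = -\sum_i\sum_j m_i m_j\,\sigma_{ij}\,(\mathbf{v}_{ij}\cdot\mathbf{x}_{ij})\,\mathbf{v}_j\cdot\nabla_i W_{ij}.
\end{equation*}
Averaging the original expression with this relabeled one converts the bare $\mathbf{v}_i$ into $\tfrac{1}{2}(\mathbf{v}_i-\mathbf{v}_j)=\tfrac{1}{2}\mathbf{v}_{ij}$, so
\begin{equation*}
2S_3=\sum_i\sum_j m_i m_j\,\sigma_{ij}\,(\mathbf{v}_{ij}\cdot\mathbf{x}_{ij})\,(\mathbf{v}_{ij}\cdot\nabla_i W_{ij}).
\end{equation*}
Finally, substituting \eqref{negative} turns $\mathbf{v}_{ij}\cdot\nabla_i W_{ij}$ into $(\mathbf{v}_{ij}\cdot\mathbf{x}_{ij})\,\omega(r_{ij})$, so
\begin{equation*}
2S_3=\sum_i\sum_j m_i m_j\,\sigma_{ij}\,(\mathbf{v}_{ij}\cdot\mathbf{x}_{ij})^2\,\omega(r_{ij})\le 0,
\end{equation*}
because $m_i,m_j,\sigma_{ij}\ge 0$, the squared inner product is non-negative, and $\omega\le 0$. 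This gives $S_3\le 0$ and completes the lemma. The only mildly tricky point is spotting that the symmetrization transforms $\mathbf{v}_i$ into $\mathbf{v}_{ij}$, after which the negativity of $\omega$ closes the argument.
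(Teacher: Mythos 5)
Your proof is correct, and it uses exactly the index-swap symmetrization plus the negativity property $\nabla_i W_{ij}=\mathbf{x}_{ij}\,\omega(r_{ij})$, $\omega\le 0$, that the paper itself relies on (the authors omit the proof of this lemma but perform the identical manipulations inline in \eqref{ODEequalCH1}, \eqref{ODE_pressureterm} and \eqref{ODE_visterm}). In particular, your key observation for the third sum --- that averaging with the relabeled sum converts the bare $\mathbf{v}_i$ into $\tfrac12\mathbf{v}_{ij}$, producing the non-positive factor $(\mathbf{v}_{ij}\cdot\mathbf{x}_{ij})^2\,\omega(r_{ij})$ --- is precisely the step the paper uses to establish \eqref{ODE_visterm}.
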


\subsection{The NSCH ODE system} \label{subsection:ODE}
Firstly, we review the NSCH PDE system in Lagrangian form. Using the identity $\nabla \cdot(\nabla \phi \otimes \nabla \phi)=\Delta \phi \nabla \phi+\frac{1}{2} \nabla\|\nabla \phi\|^{2}$, the PDE system can be rewritten as
\begin{subequations}
\begin{eqnarray}
\frac{D \phi}{D t}=\nabla \cdot(M \nabla \mu), \\
\mu= \lambda_{b}\left(\phi^{3}-\phi\right) -\lambda \Delta \phi, \\
\frac{D \mathbf{v}}{D t}=-\nabla\frac{p}{\rho}+ \frac{\eta}{\rho} \Delta \mathbf{v}+\frac{\mu_{\nabla} \nabla \phi}{\rho}, \\
\nabla \cdot \mathbf{v}=0.
\end{eqnarray} \label{rewritePDE}
\end{subequations}
The design of ODE expression starts from the definition of free energy at the ODE level. The homogenous and non-homogenous parts of total free energy are
\begin{equation*} 
\left\{\begin{array}{l}
F_{h}^{b}=F_{h}^{b}\left(\phi_{1}, \phi_{2}, \ldots, \phi_{n}\right), \\
F_{h}^{\nabla}=F_{h}^{\nabla}\left(\phi_{1}, \phi_{2}, \ldots, \phi_{n}, \mathbf{x}_{1}, \mathbf{x}_{2}, \ldots, \mathbf{x}_{n}\right), \\
F_{h} = F_{h}^{b} + F_{h}^{\nabla}.
\end{array}\right.	
\end{equation*}
To keep the local and non-local properties respectively, the energy at the ODE level can be defined through the summation of particles one by one as follows:
\begin{equation}
\left\{\begin{array}{l}
F_{h}^{b}:=\sum_{i} \frac{m_{i}}{\rho_{i}} f_{b}\left(\phi_{i}\right), \\
F_{h}^{\nabla}:=\sum_{i} \frac{m_{i}}{\rho_{i}} \frac{\lambda}{2}\left(\nabla_{h} \phi_{i}\right) \cdot\left(\nabla_{h} \phi_{i}\right).
\end{array}\right. \label{F_definition}
\end{equation}
%
%
Then, the free energy change with time in the ODE form is given as
 \begin{equation}
\frac{d F_h}{d t} = \frac{d F_{h}^{b} }{d t} + \frac{d F_{h}^{\nabla}}{d t}  =
\sum_{i} \frac{m_{i}}{\rho_{i}} \frac{\partial f_{b}\left(\phi_{i}\right)}{\partial \phi_{i}} \frac{d \phi_{i}}{d t}  +
\sum_{i} \frac{\partial F_h^{\nabla}}{\partial \phi_{i}} \frac{d \phi_{i}}{d t}+
\sum_{i} \frac{\partial F_{h}^{\nabla}}{\partial \mathbf{x}_{i}} \frac{\partial \mathbf{x}_{i}}{\partial t}. \label{F3part}
\end{equation}
The term $\sum_{i} \frac{\partial F_{h}^{\nabla}}{\partial \mathbf{x}_{i}} \frac{\partial \mathbf{x}_{i}}{\partial t}$ is responsible for energy transferred from kinetic energy $E_k$ to free energy $F$ at the ODE level, which gives
\begin{equation}
\langle \text{transfer}_\mathrm{EK-F} \rangle_h  =  \sum_{i} \frac{\partial F_{h}^{\nabla}}{\partial \mathbf{x}_{i}} \frac{\partial \mathbf{x}_{i}}{\partial t}. \label{ODEek2f}
\end{equation}
The energy-transfer term is originated from the inhomogeneity (the interface region) and other two terms control the energy dissipation coming from the diffusive effect of the CH equation. Since the chemical potential $\mu_{i}^{b}$ locally depends on $\phi_i$ as $\mu_{i}^{b}:=\frac{\partial f_{b}\left(\phi_{i}\right)}{\partial \phi_{i}}$ and $\mu_{i}^{\nabla}$ is a non-local function given by $\mu_{i}^{\nabla}:=\frac{\rho_{i}}{m_{i}} \frac{\partial F_{h}^{\nabla}}{\partial \phi_{i}}$, $\mu$ is defined at the ODE level as follows:
\begin{equation}
\mu_{i}= \frac{\partial f_{b}\left(\phi_{i}\right)}{\partial \phi_{i}} + \frac{\rho_{i}}{m_{i}} \frac{\partial F_{h}^{\nabla}}{\partial \phi_{i}}.
\label{ODEmudef}
\end{equation}
%
%
When we turn to the momentum equation of the NS system, in order to keep the physical consistency of energy transfer between the CH system and the NS system, the interfacial force should be defined as
\begin{equation}
\boldsymbol{\mathcal { T }}_{i}:= \frac{\partial F_{h}^{\nabla}}{\partial \mathbf{x}_{i}}. \label{interfacial force}
\end{equation}
This NSCH ODE system guarantees these physical properties and it is finalized as
\begin{subequations}
 \begin{gather}
\begin{aligned}
\frac{d \phi_{i}}{d t}= 2M \sum_{j} \frac{m_{j}}{\rho_{j}} \frac{\mu_{i}-\mu_{j}}{\left\|\mathbf{x}_{i j}\right\|^{2}}\left(\mathbf{x}_{i j} \cdot \nabla_{i} W_{i j}\right), \label{ODECH1}
\end{aligned}
\\
\begin{aligned}
\mu_{i}=\frac{\rho_{i}}{m_{i}} \frac{\partial F_{h}^{\nabla}}{\partial \phi_{i}}+f_{b}^{\prime}\left(\phi_{i}\right),\label{ODECH2}
\end{aligned}
\\
\begin{aligned}
\rho_{i}=\sum_{j} m_{j} W_{i j}, \label{ODERHO}
\end{aligned}
\\
\begin{aligned}
\frac{d \mathbf{v}_{i}}{d t} =& -\frac{\boldsymbol{\mathcal { T }}_{i}}{m_{i}}-\sum_{j} m_{j}\left(\frac{p_{i}}{\rho_{i}^{2}}+\frac{p_{j}}{\rho_{j}^{2}}\right) \nabla_{i} W_{i j}
+ (d+2) \sum_{j} \frac{m_{j}}{\rho_{i} \rho_{j}} \frac{\left(\eta_{i}+\eta_{j}\right) \left( \mathbf{v}_{i j} \cdot \mathbf{x}_{i j} \right) }{\left\|\mathbf{x}_{i j}\right\|^{2}} \nabla_{i} W_{i j}, \label{ODENS}
\end{aligned}
\\
\begin{aligned}
-\frac{1}{\rho_{i}} \sum_{j} m_{j}\left(\mathbf{v}_{i}-\mathbf{v}_{j}\right) \cdot \nabla_{i} W_{i j}=0. \label{ODEdiv}
\end{aligned}
\end{gather}
\end{subequations}
Here and after, the notation $\mathbf{v}_{i j}$ is equivalent to $\mathbf{v}_{i}-\mathbf{v}_{j}$. The appearance of the negative sign of the interfacial term $-\frac{\boldsymbol{\mathcal { T }}_{i}}{m_{i}}$ in the ODE system is consistent with the variational derivative of PDE in $H^{-1}$ space. The viscosity is chosen an average to ensure the symmetric form.

\subsection{Energy dissipation law at the ODE level}
The energy dissipation property of the given ODE system will be proved. Similar to the verification of the PDE model’s energy dissipation law, the inner product is utilized here.
\begin{thm} \label{ODEenergyThm}
Assume that the NSCH ODE system based on SPH discretization as \eqref{ODECH1}, \eqref{ODECH2}, \eqref{ODERHO}, \eqref{ODENS} and \eqref{ODEdiv} is employed, the energy dissipation law at the ODE level can be maintained as follows:
\begin{equation}
\begin{aligned} \left\langle\frac{d E_{\text {total }}}{d t}\right\rangle_{h} &=\left\langle\frac{d E_{k}}{d t}\right\rangle_{h}+\left\langle\frac{d F}{d t}\right\rangle_{h} \leq 0. \label{ODEenergyTotal}
\end{aligned}	
\end{equation}
\end{thm}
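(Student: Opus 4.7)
The approach is to mirror the PDE-level energy argument: differentiate $E_k^h:=\tfrac{1}{2}\sum_i m_i\|\mathbf{v}_i\|^2$ and $F_h=F_h^b+F_h^\nabla$ in time in parallel, substitute the ODE relations, and show that every surviving contribution is either manifestly non-positive or cancels between the two. Three categories of terms are expected after the substitutions: (a) a kinetic/free-energy transfer pair that should annihilate because the interfacial force in \eqref{interfacial force} is defined precisely as $\partial F_h^\nabla/\partial\mathbf{x}_i$; (b) a pressure contribution that must vanish identically via the discrete incompressibility constraint \eqref{ODEdiv}; (c) dissipative contributions from viscosity and from CH mobility, both controlled by the two inequalities in \eqref{SPHnegative}.

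First I would dot \eqref{ODENS} with $m_i\mathbf{v}_i$ and sum over $i$ to extract $dE_k^h/dt$. The interfacial piece contributes $-\sum_i\boldsymbol{\mathcal{T}}_i\cdot\mathbf{v}_i$. For the pressure double sum, I would symmetrize via an $i\leftrightarrow j$ swap together with the anti-symmetry \eqref{anti-symmetric}, rewriting it as $-\tfrac{1}{2}\sum_{i,j}m_im_j(p_i/\rho_i^2+p_j/\rho_j^2)\mathbf{v}_{ij}\cdot\nabla_iW_{ij}$, and then split it into its two pressure halves: pulling out the $i$- or $j$-indexed factor, each half collapses to an inner sum of the shape $\sum_j m_j\mathbf{v}_{ij}\cdot\nabla_iW_{ij}$, which vanishes by \eqref{ODEdiv}. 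The viscous piece falls directly under the second inequality of \eqref{SPHnegative} applied with the symmetric non-negative weight $\sigma_{ij}=(d+2)(\eta_i+\eta_j)/(\rho_i\rho_j\|\mathbf{x}_{ij}\|^2)$, hence is non-positive.

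Next I would expand $dF_h/dt$ via the chain rule \eqref{F3part}. Since $d\mathbf{x}_i/dt=\mathbf{v}_i$, the position-derivative term equals $\sum_i\boldsymbol{\mathcal{T}}_i\cdot\mathbf{v}_i$ and cancels the interfacial contribution from $dE_k^h/dt$ exactly. The two $\phi_i$-derivative terms combine into $\sum_i(m_i/\rho_i)\mu_i\,d\phi_i/dt$ by the definition \eqref{ODEmudef}. Substituting \eqref{ODECH1} and symmetrizing in $i,j$ via the algebraic identity $\mu_i(\mu_i-\mu_j)+\mu_j(\mu_j-\mu_i)=(\mu_i-\mu_j)^2$, this reduces to $M\sum_{i,j}\frac{m_im_j}{\rho_i\rho_j}\frac{(\mu_i-\mu_j)^2}{\|\mathbf{x}_{ij}\|^2}(\mathbf{x}_{ij}\cdot\nabla_iW_{ij})$, which is $\le 0$ by the first inequality of \eqref{SPHnegative}. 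Adding the two surviving dissipative contributions yields \eqref{ODEenergyTotal}.

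The main obstacle I expect is the pressure/incompressibility bookkeeping. Unlike the continuous case, no integration by parts is available, so the only route is to force every pressure term into a shape where the discrete constraint \eqref{ODEdiv} can be invoked, and this closes only because the symmetric-gradient form chosen in \eqref{ODENS} is precisely the one whose anti-symmetrization yields $\mathbf{v}_{ij}\cdot\nabla_iW_{ij}$. A secondary subtlety is that $F_h^b$ depends on positions through $\rho_i$, whereas \eqref{F3part} implicitly treats $m_i/\rho_i$ as a frozen volume element; I would adopt the paper's convention on this point explicitly and note that any correction would have to be absorbed into a redefinition of $\boldsymbol{\mathcal{T}}_i$ in order for the transfer-term cancellation to persist.
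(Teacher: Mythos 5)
Your proposal is correct and follows essentially the same route as the paper's proof: the interfacial/transfer pair cancels exactly because $\boldsymbol{\mathcal{T}}_i=\partial F_h^\nabla/\partial\mathbf{x}_i$, the symmetrized pressure sum vanishes via the discrete divergence-free condition \eqref{ODEdiv}, and the viscous and Cahn--Hilliard contributions are non-positive by \eqref{SPHnegative} after the same $i\leftrightarrow j$ symmetrization. Your closing remark about $m_i/\rho_i$ being treated as a frozen volume element in \eqref{F3part} is a fair reading of the paper's (incompressible-flow) convention and does not affect the argument.
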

\begin{proof}
\textbf{\textit{Conservation of energy transfer:}} We have defined the interfacial force as \eqref{interfacial force} and the chemical potential that comes from inhomogeneity as \eqref{ODEmudef}. Then by taking the inner product of the time derivative term with $m_i \mathbf{v}_{i}$, we get the change of kinetic energy $E_k$ per time in the system:
\begin{equation}
\left\langle\frac{d E_{k}}{d t}\right\rangle_{h}=\sum_{i} \left( m_{i} \mathbf{v}_{i} , \frac{d \mathbf{v}_{i}}{d t} \right)=\sum_{i} \frac{d\left(\frac{1}{2} m_{i}\left\|\mathbf{v}_{i}\right\|^{2}\right)}{d t}.	 \label{ODE_Ekterm}
\end{equation}
The interfacial term $-\frac{\boldsymbol{\mathcal { T }}_{i}}{m_{i}}$ in the momentum equation governs the energy transferred from free energy $F$ and the kinetic energy $E_k$. By taking the inner product of this term with $m_i \mathbf{v}_{i}$ and adding up the total for all particles, we obtain
\begin{equation}
\langle \text{transfer}_\mathrm{F-EK} \rangle_h = \sum_{i} \left( -\frac{\boldsymbol{\mathcal { T }}_{i}}{m_{i}} , m_{i} \mathbf{v}_{i} \right)
= - \sum_{i} \frac{\partial F_{h}^{\nabla}}{\partial \mathbf{x}_{i}} \frac{\partial \mathbf{x}_{i}}{\partial t}. \label{ODE_INTERterm}
\end{equation}
Thus, we can conclude that
\begin{equation}
\langle \text{transfer}_\mathrm{EK-F} \rangle_h
= - \langle \text{transfer}_\mathrm{F-EK} \rangle_h
=\sum_{i} \frac{\partial F_{h}^{\nabla}}{\partial \mathbf{x}_{i}} \frac{\partial \mathbf{x}_{i}}{\partial t}. \label{ODEenergytransfer}
\end{equation}
The conservation of energy transfer between kinetic energy $E_k$ and free energy $F$ is examined.

\textbf{\textit{Conservation of energy in the CH system:}}
If we take the inner product of \eqref{ODECH1} with $\mu_{i}$ and integrate it throughout the whole domain, taking into account the particle's control volume $\frac{m_{i}}{\rho_{i}}$, the results turn out easily by swapping the index $i$ and $j$ and taking advantage of the SPH negative property as \eqref{negative}.
\begin{equation}
\begin{split}
\sum_{i} \frac{m_{i}}{\rho_{i}}\left( \frac{d \phi_{i}}{d t}, \mu_{i}\right)
&= \sum_{i} \frac{m_{i}}{\rho_{i}}\left( 2M \sum_{j} \frac{m_{j}}{\rho_{j}} \frac{\mu_{i}-\mu_{j}}{\left\|\mathbf{x}_{i j}\right\|^{2}}  \left(\mathbf{x}_{i j} \cdot \nabla_{i} W_{i j}\right) , \mu_{i}\right) \\
&=M \sum_{i} \sum_{j} \frac{m_{i} m_{j}}{\rho_{i} \rho_{j}} \frac{\left(\mu_{i}-\mu_{j}\right)^{2}}{\left\|\mathbf{x}_{i j}\right\|^{2}}   \left(\mathbf{x}_{i j} \cdot \nabla_{i} W_{i j}\right) \\
&=M \sum_{i} \sum_{j} \frac{m_{i} m_{j}}{\rho_{i} \rho_{j}}\left(\mu_{i}-\mu_{j}\right)^{2} \omega \left(r_{ij}\right) \leq 0. \label{ODEequalCH1}
\end{split}
\end{equation}
Then, by taking the inner product of \eqref{ODECH2} with $\frac{d \phi_{i}}{d t}$ and summing for all particles and based on the equation \eqref{F3part}, we find
\begin{equation}
\begin{split}
\sum_{i} \frac{m_{i}}{\rho_{i}}\left(\mu_{i}, \frac{d \phi_{i}}{d t}\right)
&=\sum_{i} \frac{m_{i}}{\rho_{i}}\left(\frac{\rho_{i}}{m_{i}} \frac{\partial F_{h}^{\nabla}}{\partial \phi_{i}}, \frac{d \phi_{i}}{d t}\right)+\sum_{i} \frac{m_{i}}{\rho_{i}}\left(\frac{\partial f_{b}\left(\phi_{i}\right)}{\partial \phi_{i}}, \frac{d \phi_{i}}{d t}\right) \\
&=\left\langle\frac{d F_{b}}{d t}\right\rangle_{h}+\left\langle\frac{d F_{\nabla}}{d t}\right\rangle_{h} = \left\langle\frac{d F}{d t}\right\rangle_{h}-\langle\text{transfer}_\mathrm{F-EK}\rangle_{h}.
\label{ODEequalCH2}
\end{split}
\end{equation}
Combining equations \eqref{ODEequalCH1} and \eqref{ODEequalCH2}, it reaches the conclusion:
\begin{equation}
\left\langle\frac{d F}{d t}\right\rangle_{h}=\langle\text {transfer}_\mathrm{EK-F}\rangle_{h}+M \sum_{i} \sum_{i} \frac{m_{i} m_{j}}{\rho_{i} \rho_{j}}\left(\mu_{i}-\mu_{j}\right)^{2} \omega\left(r_{ij}\right). \label{ODECHenergy}
\end{equation}

\textbf{\textit{Conservation of energy in the NS system:}}
We take the inner product of the momentum equation \eqref{ODENS} with $m_{i} \mathbf{v}_{i}$ and do summation for all particles. The change of kinetic energy $E_k$ at the ODE level is given in \eqref{ODE_Ekterm}. The divergence free condition \eqref{ODEdiv} yields
\begin{equation*}
-\sum_{i} m_{i} \frac{p_{i}}{\rho_{i}^{2}} \sum_{j} m_{j}\left(\mathbf{v}_{i}-\mathbf{v}_{j}\right) \cdot \nabla_{i} W_{i j}
= -\sum_{i} \sum_{j} m_{i} m_{j} \frac{p_{i}}{\rho_{i}^{2}}\left(\mathbf{v}_{i}-\mathbf{v}_{j}\right) \cdot \nabla_{i} W_{i j} = 0.
\end{equation*}
Let $I_1, I_2$ denote the pressure term $\sum_{i} \left(m_{i} \mathbf{v}_{i}, -\sum_{j} m_{j}\left(\frac{p_{i}}{\rho_{i}^{2}}+\frac{p_{j}}{\rho_{j}^{2}}\right) \nabla_{i} W_{i j} \right)$ and the viscosity term $\sum_{i} \left( m_{i} \mathbf{v}_{i} , \frac{\eta_i+\eta_j}{2\rho_i} \nabla^2_h \mathbf{v}_{i} \right)$, respectively. Then we have
\begin{equation}
I_1= -\frac{1}{2} \sum_{i} \sum_{j} m_{i} m_{j}\left(\frac{p_{i}}{\rho_{i}^{2}}+\frac{p_{j}}{\rho_{j}^{2}}\right)\left(\mathbf{v}_{i}-\mathbf{v}_{j}\right) \cdot \nabla_{i} W_{i j} = 0, \label{ODE_pressureterm}
\end{equation}
%
and
\begin{equation}
\begin{split}
 I_2&= \frac{(d+2)}{2} \sum_{i} \sum_{j} \frac{m_{i} m_{j}}{\rho_{i} \rho_{j}} \frac{\left(\eta_{i}+\eta_{j}\right) \mathbf{v}_{i j} \cdot \mathbf{x}_{i j}}{\left\|\mathbf{x}_{i j}\right\|^{2}}\left(\mathbf{v}_{i}-\mathbf{v}_{j}\right) \cdot \nabla_{i} W_{i j} \\
 &= \frac{(d+2)}{2} \sum_{j} \frac{m_{i}m_{j}}{\rho_{i} \rho_{j}} \frac{\left(\eta_{i}+\eta_{j}\right)}{\left\|\mathbf{x}_{i j}\right\|^{2}}\|\mathbf{v}_{i j} \cdot \mathbf{x}_{i j}\|^{2} \omega\left(r_{ij}\right) \leq 0.  \label{ODE_visterm}
\end{split}
\end{equation}
%
Combining \eqref{ODE_Ekterm}, \eqref{ODE_INTERterm}, \eqref{ODE_pressureterm} and \eqref{ODE_visterm}, we obtain
\begin{equation}
\left\langle\frac{d E_{k}}{d t}\right\rangle_{h}=\langle\text {transfer}_\mathrm{F-EK}\rangle_{h}+ \frac{(d+2)}{2} \sum_{j} \frac{m_{i}m_{j}}{\rho_{i} \rho_{j}} \frac{\left(\eta_{i}+\eta_{j}\right)}{\left\|\mathbf{x}_{i j}\right\|^{2}}\|\mathbf{v}_{i j} \cdot \mathbf{x}_{i j}\|^{2} \omega\left(r_{ij}\right). \label{ODENSenergy}
\end{equation}
%
Finally, combining \eqref{ODEenergytransfer}, \eqref{ODECHenergy} and \eqref{ODENSenergy}, we complete the proof.
\end{proof}

\subsection{Momentum conservation of the ODE system}
Next, we turn to the proof of the momentum conservation holds as well for the designed ODE system.
\begin{thm} \label{theorem_ODEmomentum}
The NSCH ODE system which consists of equations \eqref{ODECH1}, \eqref{ODECH2}, \eqref{ODERHO}, \eqref{ODENS} and \eqref{ODEdiv} can maintain momentum conservation, namely
\begin{equation}
\sum_{i} m_{i} \frac{d \mathbf{v}_{i}}{d t}=0. \label{ODEmomentum}
\end{equation}
\end{thm}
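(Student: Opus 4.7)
The plan is to sum the momentum ODE \eqref{ODENS} multiplied by $m_i$ over all particles, and to show each of the three forces on the right-hand side contributes zero to the total. The pressure and viscosity contributions will both be handled by the symmetric cancellation identity \eqref{SPHsymmetric}; the interfacial term $-\boldsymbol{\mathcal{T}}_i$ will be handled by a translation-invariance argument on $F_h^{\nabla}$.

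First I would cancel the pressure sum. Writing $\sigma^{p}_{ij} := p_i/\rho_i^2 + p_j/\rho_j^2$, the total pressure contribution is $-\sum_i \sum_j m_i m_j\, \sigma^{p}_{ij}\, \nabla_i W_{ij}$. Since $\sigma^{p}_{ij}$ is manifestly symmetric in $i\leftrightarrow j$ and $\nabla_i W_{ij}$ is antisymmetric by \eqref{anti-symmetric}, \eqref{SPHsymmetric} yields zero. Next, for the viscosity sum I set $\sigma^{\eta}_{ij} := (\eta_i+\eta_j)(\mathbf{v}_{ij}\cdot\mathbf{x}_{ij})/(\rho_i\rho_j\|\mathbf{x}_{ij}\|^2)$. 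Because $\mathbf{v}_{ij}\cdot\mathbf{x}_{ij} = \mathbf{v}_{ji}\cdot\mathbf{x}_{ji}$, this scalar is symmetric in $i\leftrightarrow j$, so \eqref{SPHsymmetric} again kills the $(d+2)\sum_i\sum_j m_i m_j \sigma^{\eta}_{ij}\nabla_i W_{ij}$ term.

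The remaining piece is $-\sum_i \boldsymbol{\mathcal{T}}_i = -\sum_i \partial F_h^{\nabla}/\partial \mathbf{x}_i$, which I would show equals zero by translation invariance. From \eqref{F_definition} together with $\rho_i = \sum_j m_j W_{ij}$ and the SPH gradient operator, $F_h^{\nabla}$ depends on the particle positions only through the differences $\mathbf{x}_{ij}$ (both $W_{ij}$ and $\nabla_h \phi_i$ involve only $\mathbf{x}_i-\mathbf{x}_j$). Hence for any constant vector $\mathbf{c}\in\mathbb{R}^d$, $F_h^{\nabla}(\{\mathbf{x}_i+\mathbf{c}\}) = F_h^{\nabla}(\{\mathbf{x}_i\})$. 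Differentiating in $\mathbf{c}$ at $\mathbf{c}=0$ gives $\sum_i \partial F_h^{\nabla}/\partial \mathbf{x}_i = 0$, so $\sum_i \boldsymbol{\mathcal{T}}_i = 0$.

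Combining the three cancellations yields $\sum_i m_i\, d\mathbf{v}_i/dt = 0$, which is \eqref{ODEmomentum}. The main obstacle is the interfacial term: unlike the pressure and viscosity terms, it is not a pairwise symmetric SPH sum, so the cancellation does not follow directly from \eqref{SPHsymmetric}. The translation-invariance argument is clean but relies on the crucial structural fact that both $\rho_i$ (through $W_{ij}$) and the discrete gradient $\nabla_h\phi_i$ depend on positions purely through the differences $\mathbf{x}_i-\mathbf{x}_j$; I would state this dependence explicitly before invoking invariance, since it is exactly what makes the definition \eqref{interfacial force} of $\boldsymbol{\mathcal{T}}_i$ compatible with momentum conservation.
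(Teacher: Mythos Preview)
Your handling of the pressure and viscosity sums matches the paper exactly: both collapse the two terms into a single symmetric scalar $\sigma_{ij}$ and invoke \eqref{SPHsymmetric}. The only cosmetic difference is that the paper bundles the two into one $\sigma_{ij}$ whereas you treat them separately.

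For the interfacial term $\sum_i\boldsymbol{\mathcal{T}}_i$, your translation-invariance argument is correct but genuinely different from the paper's route. The paper proceeds by brute force: it writes out $F_h^{\nabla}$ as a triple sum $\tfrac{\lambda m^3}{2\rho^3}\sum_k\sum_l\sum_q f_{klq}$ with $f_{klq}=(\phi_k-\phi_l)(\phi_k-\phi_q)\nabla_kW_{kl}\cdot\nabla_kW_{kq}$, differentiates $f_{klq}$ with respect to $\mathbf{x}_i$ via the representation $\nabla_kW_{kl}=\mathbf{x}_{kl}\,\omega(r_{kl})$, and then observes that each resulting term carries a factor $(\delta_{ki}-\delta_{li})$ or $(\delta_{ki}-\delta_{qi})$, which vanishes upon summing over $i$. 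In doing so the paper first sets $m_i\equiv m$, $\rho_i\equiv\rho$ to strip the density dependence. Your argument is cleaner and more general: by noting that $W_{ij}$, $\nabla_iW_{ij}$, and hence $\rho_i$ and $\nabla_h\phi_i$, all depend on positions only through differences $\mathbf{x}_i-\mathbf{x}_j$, you get $\sum_i\partial F_h^{\nabla}/\partial\mathbf{x}_i=0$ in one line, without ever assuming constant density. The paper's Kronecker-delta cancellation $\sum_i(\delta_{ki}-\delta_{li})=0$ is, of course, precisely the infinitesimal form of the translation invariance you invoke, so the two arguments are the same fact viewed at different levels of abstraction; yours buys brevity and avoids the constant-$\rho$ simplification, while the paper's explicit expansion has the minor advantage of displaying the actual form of $\boldsymbol{\mathcal{T}}_i$ that would be implemented numerically.
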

\begin{proof}
In the first place, by taking the inner product of momentum equation \eqref{ODENS} with $m_i$ and doing summation for all particles, we have
\begin{equation}
\begin{split}
\sum_{i} m_{i} \frac{d \mathbf{v}_{i}}{d t}=&-\sum_{i} \sum_{j} m_{i} m_{j}\left(\frac{p_{i}}{\rho_{i}^{2}}+\frac{p_{j}}{\rho_{i}^{2}}\right) \nabla_{i} W_{i j}\\
&+(d+2) \sum_{i} \sum_{j} \frac{m_{i} m_{j}}{\rho_{i} \rho_{j}} \frac{\left(\eta_{i}+\eta_{j}\right) \mathbf{v}_{i j} \cdot \mathbf{x}_{i j}}{\left\|\mathbf{x}_{i j}\right\|^{2}} \nabla_{i} W_{i j}
-\sum_{i} \boldsymbol{\mathcal { T }}_{i}.
\end{split}
\end{equation}
Two above SPH operators can be treated as a whole symmetric one ($\sigma_{ij}=\sigma_{ji}$), like $\sigma_{i j}:=-\left(\frac{p_{i}}{\rho_{i}^{2}}+\frac{p_{j}}{\rho_{j}^{2}}\right)+\frac{(d+2)}{\rho_{i} \rho_{j}} \frac{\left(\eta_{i}+\eta_{j}\right) \mathbf{v}_{i j} \cdot \mathbf{x}_{i j}}{\left\|x_{i j}\right\|^{2}}$. Then, based on the property \eqref{SPHsymmetric}, we find
\begin{equation}
\sum_{i} \sum_{j} m_{i} m_{j} \sigma_{i j} \nabla_{i} W_{i j}=0.
\end{equation}
By proving $\sum_{i} \mathcal{T}_{i}=\sum_{i} \frac{\partial F_{h}^{\nabla}}{\partial \mathbf{x}_{i}}=0$, we can conclude momentum conservation. It must be noted that the selected SPH gradient operator for $\phi$ is
\begin{equation*}
\nabla_{h} \phi_i = -\sum_{j} \frac{m_{j}}{\rho_{i}}\left(\phi_{i}-\phi_{j}\right) \nabla_{i} W_{i j}.
\end{equation*}
Based on the definition of $F_{h}^{\nabla}$ as \eqref{F_definition}, it becomes
\begin{equation}
\begin{split}
F^{\nabla}_h
&=\sum_{k} \frac{m_{k}}{\rho_{k}} \frac{\lambda}{2}\left(\nabla_{h} \phi_{k}\right) \cdot\left(\nabla_{h} \phi_{k}\right)\\
&=\sum_{k} \frac{m_{k}}{\rho_{k}} \frac{\lambda}{2}\left[-\sum_{l} \frac{m_{l}}{\rho_{l}}\left(\phi_{k}-\phi_{l}\right) \nabla_{k} W_{k l}\right] \cdot\left[-\sum_{q} \frac{m_{q}}{\rho_{q}}\left(\phi_{k}-\phi_{q}\right) \nabla_{k} W_{k q}\right].	\label{F_grad_sph}
\end{split}
\end{equation}
We consider the compressible flow here. Hence, we can set: $m_{i} \equiv m$  and  $\rho_{i} \equiv \rho$ and define $f_{k l q} \equiv\left(\phi_{k}-\phi_{l}\right)\left(\phi_{k}-\phi_{q}\right) \nabla_{k} W_{k l} \cdot \nabla_{k} W_{k q}$, then
\begin{equation}
F_{h}^{\nabla} = \frac{\lambda m^{3}}{2 \rho^{3}} \sum_{k} \sum_{l} \sum_{q} f_{k l q}.	
\end{equation}
Correspondingly, we get
\begin{equation}
\sum_{i} \boldsymbol{\mathcal { T }}_{i}  = \sum_{i} \frac{\partial F_{h}^{\nabla}}{\partial \mathbf{x}_{i}}=\frac{\lambda m^{3}}{2 \rho^{3}} \sum_{k} \sum_{l} \sum_{q} \sum_{i} \frac{\partial f_{k l q}}{\partial \mathbf{x}_{i}}.
\end{equation}
Using the negative property of the kernel function, $f_{klq}$ can be expanded:
\begin{equation*}
\begin{aligned}
\frac{\partial f_{k l q}}{\partial \mathbf{x}_{i}}
& = \left(\phi_{k}-\phi_{l}\right)\left(\phi_{k}-\phi_{q}\right)\left\{\frac{\partial\left(\mathbf{x}_{k l} \omega\left(r_{k l}\right)\right)}{\partial \mathbf{x}_{i}} \mathbf{x}_{k q} \omega\left(r_{k q}\right)+\frac{\partial\left(\mathbf{x}_{k q} \omega\left(r_{k q}\right)\right)}{\partial \mathbf{x}_{i}} \mathbf{x}_{k l} \omega\left(r_{k l}\right)\right\} \\
&=\left( \phi_{k} -\phi_{l}\right)\left(\phi_{k}-\phi_{q}\right)\left(\delta_{k i}-\delta_{l i}\right)\left\{\omega\left(r_{k l}\right)+r_{k l} \omega^{\prime}\left(r_{k l}\right)\right\} \omega\left(r_{k q}\right) \mathbf{x}_{k q} \\
& \quad ... \quad +\left(\phi_{k}-\phi_{l}\right)\left(\phi_{k}-\phi_{q}\right)\left(\delta_{k i}-\delta_{q i}\right)\left\{\omega\left(r_{k q}\right)+r_{k q} \omega^{\prime}\left(r_{k q}\right)\right\} \omega\left(r_{k l}\right) \mathbf{x}_{k l}.
\end{aligned}	 
\end{equation*}
With the following equations of summation for the $i$ index:
\begin{equation*}
\left\{\begin{array}{l}
\sum_{i} \delta_{ki}=1 ,\\
\sum_{i}\left(\delta_{ki}-\delta_{li}\right)=0   \quad \text{and}  \quad  \sum_{i}\left(\delta_{k i}-\delta_{q i}\right)=0.
\end{array}\right.
\end{equation*}
We reach $\sum_{i} \frac{\partial f_{k l q}}{\partial x_{i, \alpha}} =0.	$
Finally, we get $\sum_{i} \boldsymbol{\mathcal { T }}_{i} = 0$ and \eqref{ODEmomentum} holds naturally.
\end{proof}

\begin{rem}
Through appropriate selections of SPH operators and the derivation based on the original energy and physical consistency, the proposed NSCH ODE system inherits the physical consistency, including mass conservation, momentum conservation, and the energy dissipation law. The spatial discretization has been completed based on the SPH framework. 
\end{rem}

\section{Fully discrete scheme} \label{section:Scheme}

Based on the energy-stable ODE system we proposed in the last section, we now focus on the temporal scheme. In this section, we decouple the CH system and the NS system and solve them separately. The projection method is adopted for the solution of the NS system. Some techniques like the convex-concave splitting approach, stabilizing term, and energy-factorization method are also utilized to construct the energy-stable fully discrete scheme. 
\subsection{Temporal discretization}
Now the fully discrete scheme of NSCH system reads as follows:
\begin{subequations}
 \begin{gather}
\begin{aligned}
\frac{\phi_{i}^{k+1}-\phi_{i}^{k}}{\Delta t}=2 M \sum_{j} \frac{m_{j}}{\rho_{j}} \frac{\mu_{i}^{k+1}-\mu_{j}^{k+1}}{\left\|\mathbf{x}_{i j}\right\|^{2}}\left(\mathbf{x}_{i j} \cdot \nabla_{i} W_{i j}\right), \label{FullCH1}
\end{aligned}
\\
\begin{aligned}
\mu_{i}^{k+1}=
\mu_{i}^{b}\left(\phi_{i}^{k}, \phi_{i}^{k+1}\right) + \mu_{i}^{\nabla}\left(\phi^{k+1}\right)
 =f_{b}^{\prime}\left(\phi_{i}^{k}, \phi_{i}^{k+1}\right) \label{FullCH2}
\end{aligned} + \frac{\rho_{i}}{m_{i}} \frac{\partial F_{h}^{\nabla}\left(\phi^{k+1}\right)}{\partial \phi_{i}^{k+1}},
\\
\begin{aligned}
\rho_{i}^{k+1}=\sum_{j} m_{j} W_{i j}, \label{Fullrho}
\end{aligned}
\\
\begin{aligned}
\frac{\tilde{\mathbf{v}}_{i}^{k+1}-\mathbf{v}_{i}^{k}}{\Delta t}=-\frac{\left\langle\boldsymbol{\mathcal { T }}_{i}\left(\tilde{\mathbf{x}}^{k+1}, \mathbf{x}^{k}\right)\right\rangle_{h}}{m_{i}}
+ (d+2) \sum_{j} \frac{m_{j}}{\rho_{i}^{k+1} \rho_{j}^{k+1}} \frac{\left(\eta_{i}+\eta_{j}\right)\left(\tilde{\mathbf{v}}_{ij}^{k+1} \cdot \mathbf{x}_{ij} \right) }{\left\|\mathbf{x}_{i j}\right\|^{2}} \nabla_{i} W_{i j}, \label{Fullmomentum}
\end{aligned}
\\
\begin{aligned}
\frac{\mathbf{v}_{i}^{k+1}-\tilde{\mathbf{v}}_{i}^{k+1}}{\Delta t}=-\sum_{j} m_{j}\left(\frac{p_{i}^{k+1}}{\left(\rho_{i}^{k+1}\right)^{2}}+\frac{p_{j}^{k+1}}{\left(\rho_{j}^{k+1}\right)^{2}}\right) \nabla_{i} W_{i j}, \label{FullPo}
\end{aligned}
\\
\begin{aligned}
\nabla_{h} \cdot \mathbf{v}_{i}^{k+1}=-\frac{1}{\rho_{i}^{k+1}} \sum_{j} m_{j}\left(\mathbf{v}_{i}^{k+1}-\mathbf{v}_{j}^{k+1}\right) \cdot \nabla_{i} W_{i j}=0, \label{Fulldiv}
\end{aligned}
\\
\begin{aligned}
\mathbf{x}_{i}^{k+1}-\mathbf{x}_{i}^{k}=\mathbf{v}_{i}^{k+1} \Delta t,
\label{FullUpdate}
\end{aligned}
\end{gather}
\end{subequations}
where $\phi$ represents the collection of phase function $\phi_i$ on each particle and $\mathbf{x}$ represents the collection of position information $\mathbf{x}_i$ on each particle.

Some terms in the fully discrete system require special treatment to ensure energy stability. The details of them are presented here. A semi-implicit scheme applied to term $\mu_i^{b}\left(\phi_i^{k}, \phi_i^{k+1}\right)$ reads:
\begin{equation}
\mu_i^{b}\left(\phi_i^{k}, \phi_i^{k+1}\right) = f_{b}^{\prime}\left(\phi_{i}^{k}, \phi_{i}^{k+1}\right)=\lambda_{b}\left[\left(\xi+\left(\phi_{i}^{k}\right)^{2}\right) \phi_{i}^{k+1}-(\xi+1) \phi_{i}^{k}\right]. \label{Fullmub}
\end{equation}
where $\xi$ is the energy-factorization parameter in the EF method. It results in a linear scheme for determining $\phi$. Based on the expression of $F^{\nabla}_h$ as \eqref{F_grad_sph}, a fully implicit scheme applied to term $\mu_i^{\nabla}(\phi^{k+1})$ reads:
\begin{equation}
\mu_{i}^{\nabla}\left(\phi^{k+1}\right) = \frac{\rho_{i}}{m_{i}}  \sum_{j} \frac{m_{j}}{\rho_{j}} \lambda \left[\sum_{p} \frac{m_{p}}{p_{j}}\left(\delta_{i j}-\delta_{i p}\right) \nabla_{j} W_{j p}\right]\left[\sum_{l} \frac{m_{l}}{\rho_{j}}\left(\phi_{j}^{k+1}-\phi_{l}^{k+1}\right) \nabla_{j}W_{jl} \right]. \label{Fullmunabla}
\end{equation}

Then, we utilize the stabilizing method for the interfacial term $\left\langle\boldsymbol{\mathcal { T }}_{i}\left(\tilde{\mathbf{x}}^{k+1}, \mathbf{x}^{k}\right)\right\rangle_{h}$. Based on the convex-concave splitting approach, we have
\begin{equation}
\left\langle\boldsymbol{\mathcal{T}}_{i}\left(\tilde{\mathbf{x}}^{k+1}, \mathbf{x}^{k}\right)\right\rangle_{h} = \frac{2\mathcal{M}}{\Delta t} \tilde{\mathbf{x}}_{i}^{k+1}+\boldsymbol{\mathcal {T}}_{i}\left(\mathbf{x}^{k}\right)- \frac{2\mathcal{M}}{\Delta t} \mathbf{x}_{i}^{k} = 2\mathcal{M} \tilde{\mathbf{v}}_{i}^{k+1} +\boldsymbol{\mathcal {T}}_{i}\left(\mathbf{x}^{k}\right), \label{Fullstabilizing}	
\end{equation}
where $\mathcal{M}$ is a stabilization parameter.

\subsection{Energy dissipation law at the fully discrete level}
Now we set out to prove the energy dissipation law of this full discrete system. Since the entire fully discrete scheme is decoupled into two steps, we update the $\phi$ by solving the CH system in the first step. The free energy of the system is changed as
\begin{equation*}
\frac{F_{h}\left(\phi^{k+1}\right)-F_{h}\left(\phi^{k}\right)}{\Delta t}=\frac{F_{h}\left(\phi^{k+1}, \mathbf{x}^{k}\right)-F_{h}\left(\phi^{k}, \mathbf{x}^{k}\right)}{\Delta t}.  
\end{equation*}
Then, at the second step, particles are moved with the velocity we determined form the NS system. Both the kinetic energy and the free energy are changed.
\begin{equation*}
\frac{E_{k, h}\left(\mathbf{v}^{k+1}\right)-E_{k, h}\left(\mathbf{v}^{k}\right)}{\Delta t}+\frac{F_{h}^{\nabla}\left(\mathbf{x}^{k+1}, \phi^{k+1}\right)-F_{h}^{\nabla}\left(\mathbf{x}^{k}, \phi^{k+1}\right)}{\Delta t}.
\end{equation*}
We give two lemmas to support the proof of the energy dissipation law at the fully discrete level for the CH system and the NS system, respectively.
%
\begin{lem}  \label{lemFullCHenergy}
Assume that the numerical schemes for the CH system are \eqref{FullCH1}, \eqref{FullCH2} and schemes as \eqref{Fullmub}, \eqref{Fullmunabla} are used for chemical potential. The free energy will dissipate with time as follows:
\begin{equation}
\frac{F_{h}\left(\phi^{k+1}\right)-F_{h}\left(\phi^{k}\right)}{\Delta t} \leq M \sum_{i} \sum_{j} \frac{m_{i} m_{j}}{\rho_{i} \rho_{j}}\left(\mu_{i}^{k+1}-\mu_{j}^{k+1}\right)^{2} \omega\left(r_{ij}\right) \leq 0.  \label{FullCHenergy}
\end{equation}
\end{lem}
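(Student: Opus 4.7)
The plan is to discretize in time the same energy identity used to prove Theorem \ref{ODEenergyThm}, then absorb the resulting free-energy difference into the diffusion term. Specifically, I would multiply the discrete Cahn--Hilliard equation \eqref{FullCH1} by $\frac{m_i}{\rho_i}\mu_i^{k+1}\Delta t$ and sum over $i$. By the same antisymmetrization trick used in \eqref{ODEequalCH1} (swap $i\leftrightarrow j$, average, and invoke $\nabla_iW_{ij}=\mathbf{x}_{ij}\omega(r_{ij})$ from \eqref{negative}), the right-hand side collapses to exactly the target quadratic
\[
M\sum_{i}\sum_{j}\frac{m_im_j}{\rho_i\rho_j}(\mu_i^{k+1}-\mu_j^{k+1})^2\omega(r_{ij}),
\]
which is non-positive because $\omega\leq 0$. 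This simultaneously yields an identity for $\sum_i\frac{m_i}{\rho_i}\mu_i^{k+1}(\phi_i^{k+1}-\phi_i^k)$ and gives the second inequality in \eqref{FullCHenergy} for free.

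The bulk of the work is then to show
\[
F_h(\phi^{k+1})-F_h(\phi^{k})\;\leq\;\sum_i\frac{m_i}{\rho_i}\mu_i^{k+1}(\phi_i^{k+1}-\phi_i^{k}),
\]
which I would do by splitting $\mu_i^{k+1}=\mu_i^b+\mu_i^\nabla$ according to \eqref{FullCH2} and handling the pieces separately. For the non-local gradient part, observe that, at the frozen positions $\mathbf{x}^k$, the functional $F_h^\nabla(\phi,\mathbf{x}^k)$ written as in \eqref{F_grad_sph} is a positive semi-definite quadratic form in the vector $\phi$. The elementary identity for quadratics gives
\[
F_h^\nabla(\phi^{k+1})-F_h^\nabla(\phi^k)\leq\sum_i\frac{\partial F_h^\nabla(\phi^{k+1})}{\partial\phi_i}(\phi_i^{k+1}-\phi_i^k)=\sum_i\frac{m_i}{\rho_i}\mu_i^\nabla(\phi^{k+1})(\phi_i^{k+1}-\phi_i^k),
\]
and the fully implicit choice \eqref{Fullmunabla} is precisely what makes this identification exact. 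So the gradient contribution is handled cleanly.

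The expected obstacle is the non-convex bulk piece $f_b(\phi)=\lambda_b(\phi^2-1)^2/4$, since the associated $F_h^b$ admits no direct convexity bound. Here I would invoke the energy-factorization property built into \eqref{Fullmub}: a direct particle-wise expansion (using $f_b''(\phi)=\lambda_b(3\phi^2-1)$, $f_b'''(\phi)=6\lambda_b\phi$, $f_b''''(\phi)=6\lambda_b$) reduces the claimed inequality $f_b'(\phi_i^k,\phi_i^{k+1})(\phi_i^{k+1}-\phi_i^k)\geq f_b(\phi_i^{k+1})-f_b(\phi_i^k)$ to a quadratic in the increment $\delta_i=\phi_i^{k+1}-\phi_i^k$ whose non-negativity is controlled by $\xi$ through a single scalar condition of the form $\xi\geq C(\max_i|\phi_i^k|,\max_i|\phi_i^{k+1}|)-\tfrac{1}{2}$. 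Weighting by $m_i/\rho_i$ and summing then gives $F_h^b(\phi^{k+1})-F_h^b(\phi^k)\leq\sum_i\frac{m_i}{\rho_i}\mu_i^b(\phi_i^{k+1}-\phi_i^k)$, which combines with the gradient estimate to finish the proof. The delicate point is therefore the choice of the EF parameter $\xi$: either an a priori $L^\infty$ bound on $\phi$ must be assumed, or $\xi$ must be taken conservatively large. Beyond this single estimate, the remainder is purely index manipulation using \eqref{anti-symmetric}--\eqref{SPHsymmetric}.
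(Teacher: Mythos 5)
Your proposal is correct and follows essentially the same route as the paper: antisymmetrize the inner product of \eqref{FullCH1} with $\mu_i^{k+1}$ to produce the non-positive quadratic in $\mu_i^{k+1}-\mu_j^{k+1}$, use convexity of $F_h^{\nabla}$ as a positive semi-definite quadratic form in $\phi$ (the paper verifies this via the Hessian in \eqref{Fnabla2ndde}) to justify the fully implicit treatment \eqref{Fullmunabla}, and control the non-convex bulk term through the energy-factorization parameter $\xi$ in \eqref{Fullmub}. Your direct expansion of the bulk increment is just an unpacked version of the paper's factorization through $c(\phi_i)=\tfrac{1}{2}(\xi+1-\phi_i^2)$, and both arguments share the same caveat that $\xi$ must dominate a bound on $|\phi_i^k|,|\phi_i^{k+1}|$ (the paper's $\phi_i\in I_\xi$ assumption).
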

%
\begin{proof}
\textbf{\textit{For homogeneous part:}}
By introducing an intermediate function $c\left(\phi_{i}\right)=\frac{1}{2}\left(\xi+1-\phi_{i}^{2}\right)$ as \cite{wang2021linear}, we have energy density 
$$f_{b}\left(\phi_{i}\right)=\lambda_{b}\left(c\left(\phi_{i}\right)^{2}+\frac{1}{2} \xi \phi_{i}^{2}-\frac{1}{2} \xi-\frac{1}{4} \xi^{2}\right).$$
A sufficient large $\xi>0$ ensures $\phi_{i} \in I_{\xi}=[-\sqrt{\xi+1}, \sqrt{\xi+1}]$. Obviously, $c\left(\phi_{i}\right)$ is a non-negative concave function and the quadratic function $\phi_{i}^{2}$ is a convex function. We note that $\bar{c}=c\left(\phi_{i}^{k}\right)+c^{\prime}\left(\phi_{i}^{k}\right)\left(\phi_{i}^{k+1}-\phi_{i}^{k}\right)$ and $0 \leq c\left(\phi_{i}^{k+1}\right) \leq \bar{c}$ for $\phi_{i}^{k}, \phi_{i}^{k+1} \in I_{\xi}$. The energy density change can be estimated by
\begin{equation*}
\begin{aligned}
f_{b}\left(\phi_{i}^{k+1}\right)-f_{b}\left(\phi_{i}^{k}\right) &=\lambda_{b}\left[c\left(\phi_{i}^{k+1}\right)^{2}-c\left(\phi_{i}^{k}\right)^{2}+\frac{1}{2} \xi\left(\left(\phi_{i}^{k+1}\right)^{2}-\left(\phi_{i}^{k}\right)^{2}\right)\right] \\
& \leq \lambda_{b}\left[\bar{c}^{2}-c\left(\phi_{i}^{k}\right)^{2}+\xi \phi_{i}^{k+1}\left(\phi_{i}^{k+1}-\phi_{i}^{k}\right)\right] \\
&=\lambda_{b}\left[\left(\bar{c}+c\left(\phi_{i}^{k}\right)\right) c^{\prime}\left(\phi_{i}^{k}\right)+\xi \phi_{i}^{k+1}\right]\left(\phi_{i}^{k+1}-\phi_{i}^{k}\right).
\end{aligned}	
\end{equation*}
By substituting values of $c^{\prime}\left(\phi_{i}^{k}\right)$ and $c\left(\phi_{i}^{k}\right)$, the semi-implicit scheme for $\mu_{i}^{b}$ becomes
\begin{equation*}
\begin{split}
\mu_{i}^{b}\left(\phi_{i}^{k}, \phi_{i}^{k+1}\right)&=\lambda_{b}\left[\left(\bar{c}+c\left(\phi_{i}^{k}\right)\right) c^{\prime}\left(\phi_{i}^{k}\right)+\xi \phi_{i}^{k+1}\right]\\
&=\lambda_{b}\left[\left(\xi+\left(\phi_{i}^{k}\right)^{2}\right) \phi_{i}^{k+1}-(\xi+1) \phi_{i}^{k}\right].
\end{split}
\end{equation*}
It naturally yields
\begin{equation}
f_{b}\left(\phi_{i}^{k+1}\right)-f_{b}\left(\phi_{i}^{k}\right) \leq f_{b}^{\prime}\left(\phi_{i}^{k}, \phi_{i}^{k+1}\right)\left(\phi_{i}^{k+1}-\phi_{i}^{k}\right). \label{Fullyfb}
\end{equation}
Therefore, we conclude that there is always a sufficient large enough $\xi>0$ to make inequality \eqref{Fullyfb} hold. For the entire system
\begin{equation}
\frac{F_{h}^{b}\left(\phi^{k+1}\right)-F_{h}^{b}\left(\phi^{k}\right)}{\Delta t} \leq \sum_{i} \frac{m_{i}}{\rho_{i}}\left(f_{b}^{\prime}\left(\phi_{i}^{k}, \phi_{i}^{k+1}\right), \frac{\phi_{i}^{k+1}-\phi_{i}^{k}}{\Delta t}\right).
\label{FullFbineq}
\end{equation}
\begin{rem}
This kind of energy-factorization method finally results in a linear energy-stable discrete scheme for $\mu_i^{b}$, which benefits the implementation. It must be noted that other well-known schemes like SAV and IEQ can be applied to this part as well. Also, the numerical value of $\phi_i$ can be bounded, which validates our conclusion. The relevant proof can be referred to \cite{li2016characterizing,li2017second,li2021convergence}.
\end{rem}

\textbf{\textit{For inhomogeneous part:}} We turn to the scheme of chemical potential $\mu_{\nabla}$ caused by inhomogeneity. Since the corresponding energy $F_{\nabla}$ is not a local function, it depends not only on the $\phi$ value but also on the position $\mathbf{x}$ of all particles. In the current step for solving the CH system, positions of the particles will not change. An intermediate variable $F_{i, h}^{\nabla}$ for a single particle is defined for our proof as follows,
\begin{equation}
F_{i, h}^{\nabla}=\frac{\lambda}{2}\left(\nabla_{h} \phi_{i}\right) \cdot\left(\nabla_{h} \phi_{i}\right).
\end{equation}
The free energy caused by the inhomogeneity of system$F_{h}^{\nabla}=\sum_{i} \frac{m_{i}}{\rho_{i}} F_{i, h}^{\nabla}$.Since $F_{i, h}^{\nabla}$ is not a local function as well, the convex-concave property is related to its Hessian matrix. Taking the first order derivative with respect to,
\begin{equation*}
\frac{\partial F_{i, h}^{\nabla}}{\partial \phi_{m}}
= \lambda \sum_{j} \frac{m_{i}}{\rho_{j}}\left(\delta_{i m}-\delta_{j m}\right) \nabla_{i} W_{i j} \sum_{k} \frac{m_{i}}{\rho_{k}}\left(\phi_{i}-\phi_{k}\right) \nabla_{i} W_{i k}.
\end{equation*}
Then, taking the derivative with respect to $\phi_n$, we have elements of its Hessian matrix,
\begin{equation*}
\frac{\partial^{2} F_{i, h}^{\nabla}}{\partial \phi_{m} \phi_{n}}= \lambda \sum_{m} \sum_{j} \frac{m_{i}}{\rho_{j}}\left(\delta_{i m}-\delta_{j m}\right) \nabla_{i} W_{i j} \sum_{n} \sum_{k} \frac{m_{i}}{\rho_{k}}\left(\delta_{i n}-\delta_{k n}\right) \nabla_{i} W_{i k}.	 \label{Fnabla1stde}
\end{equation*}
For any vector $\bold{b} \neq \bold{0}$, we have
\begin{equation}
\sum_{m} \sum_{n} b_{m} \frac{\partial^{2} F_{i, h}^{\nabla}}{\partial \phi_{m} \phi_{n}} b_{n}
=  \lambda \left(\sum_{n} \sum_{k} b_{n} \frac{m_{i}}{\rho_{k}}\left(\delta_{i n}-\delta_{k n}\right) \nabla_{i} W_{i k}\right)^{2} \geq 0. \label{Fnabla2ndde}
\end{equation}
The Hessian matrix $\frac{\partial^{2} F_{i, h}^{\nabla}}{\partial \phi_{m} \partial \phi_{n}}$ is positive-definition. Thus, $F_{i, h}^{\nabla}$ is convex with respect to $\phi$. If the chemical potential $\mu_{i}^{\nabla}$ is treated with a fully implicit scheme as \eqref{Fullmunabla}, it will make the energy inequality hold as
\begin{equation}
\frac{F_{h}^{\nabla}\left(\phi^{k+1}\right)-F_{h}^{\nabla}\left(\phi^{k}\right)}{\Delta t} \leq \sum_{i} \frac{m_{i}}{\rho_{i}} \sum_{j}\left(\frac{\partial F_{i, h}^{\nabla}\left(\phi^{k+1}\right)}{\partial \phi_{j}^{k+1}}, \frac{\phi_{j}^{k+1}-\phi_{j}^{k}}{\Delta t} \right).
\end{equation}
Since we consider the incompressible fluid, it yields the following identity:
\begin{equation*}
\sum_{i} \frac{m_{i}}{\rho_{i}}\left(\frac{\rho_{i}}{m_{i}} \frac{\partial F_{h}^{\nabla}\left(\phi^{k+1}\right)}{\partial \phi_{i}^{k+1}}, \frac{\phi_{i}^{k+1}-\phi_{i}^{k}}{\Delta t}\right)=\sum_{i} \frac{m_{i}}{\rho_{i}} \sum_{j}\left(\frac{\partial F_{i, h}^{\nabla}\left(\phi^{k+1}\right)}{\partial \phi_{j}^{k+1}}, \frac{\phi_{j}^{k+1}-\phi_{j}^{k}}{\Delta t}\right).	
\end{equation*}
Then we find the inequality
\begin{equation}
\frac{F_{h}^{\nabla}\left(\phi^{k+1}\right)-F_{h}^{\nabla}\left(\phi^{k}\right)}{\Delta t} \leq \sum_{i} \frac{m_{i}}{\rho_{i}}\left(\frac{\rho_{i}}{m_{i}} \frac{\partial F_{h}^{\nabla}\left(\phi^{k+1}\right)}{\partial \phi_{i}^{k+1}}, \frac{\phi_{i}^{k+1}-\phi_{i}^{k}}{\Delta t}\right). \label{FullFnabineq}
\end{equation}
Similarly, we use the inner product to get the energy dissipation law at the fully discrete level. By taking the inner product of equation \eqref{FullCH1} with $\mu_{i}^{k+1}$
\begin{equation}
\sum_{i} \frac{m_{i}}{\rho_{i}}\left(\frac{\phi_{i}^{k+1}-\phi_{i}^{k}}{\Delta t}, \mu_{i}^{k+1} \right) = M \sum_{i} \sum_{j} \frac{m_{i} m_{j}}{\rho_{i} \rho_{j}}\left(\mu_{i}^{k+1}-\mu_{j}^{k+1}\right)^{2} \omega\left(r_{i j}\right).
\label{FullCH1energy}
\end{equation}
By taking the inner product of \eqref{FullCH2} with $\frac{\phi_{i}^{k+1}-\phi_{i}^{k}}{\Delta t}$, we have
\begin{equation}
\begin{split}
\sum_{i} \frac{m_{i}}{\rho_{i}}\left( \mu_{i}^{k+1}, \frac{\phi_{i}^{k+1}-\phi_{i}^{k}}{\Delta t}\right)
= &\sum_{i} \frac{m_{i}}{\rho_{i}} \left( \frac{\rho_{i}}{m_{i}} \frac{\partial F_h^{\nabla}\left(\phi^{k+1}\right)}{\partial \phi_{i}^{k+1}}, \frac{\phi_{i}^{k+1}-\phi_{i}^{k}}{\Delta t}\right)\\
  &+\sum_{i} \frac{m_{i}}{\rho_{i}}\left( f_{b}^{\prime}\left(\phi_{i}^{k}, \phi_{i}^{k+1}\right), \frac{\phi_{i}^{k+1}-\phi_{i}^{k}}{\Delta t}\right).  \label{FullCH2energy}
\end{split}
\end{equation}
Owing to $F_{h}\left(\phi\right) = F_{h}^{b}\left(\phi\right) + F_{h}^{\nabla}\left(\phi\right)$, with the above equations \eqref{FullCH1energy}, \eqref{FullCH2energy} and energy inequalities \eqref{FullFbineq}, \eqref{FullFnabineq} derived from the homogeneous and inhomogeneous parts, we get the energy dissipation law of the CH system as \eqref{FullCHenergy}
\end{proof}
%
%
%
%
\begin{lem} \label{lemFullNSenergy}
Assume the fully discrete scheme for NS system as \eqref{Fullmomentum}, \eqref{FullPo}, \eqref{Fulldiv}, \eqref{FullUpdate} is used. Among them, a stabilizing semi-implicit scheme is applied to the interfacial term as \eqref{Fullstabilizing}. Then a sufficient great $\mathcal{M}$  always makes the energy dissipation law of the NS system at the full discrete level hold well.
\begin{equation}
\frac{E_{k, h}\left(\mathbf{v}^{k+1}\right)-E_{k, h}\left(\mathbf{v}^{k}\right)}{\Delta t} +
\frac{F_{h}^{\nabla}\left(\mathbf{x}^{k+1}, \phi^{k+1}\right)-F_{h}^{\nabla}\left(\mathbf{x}^{k}, \phi^{k+1}\right)}{\Delta t} \leq 0.  \label{FullNSenergy}
\end{equation}
\end{lem}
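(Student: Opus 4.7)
The plan is to decompose the kinetic-energy change across the two substeps of the projection procedure, process each substep with the same inner-product technique used in Theorem 3.1, and then close the argument by a Taylor/convex-concave estimate on $F_h^\nabla$ in the positions that is dominated by the stabilization parameter $\mathcal{M}$. Concretely, write
\begin{equation*}
E_{k,h}(\mathbf{v}^{k+1})-E_{k,h}(\mathbf{v}^{k}) = \bigl[E_{k,h}(\mathbf{v}^{k+1})-E_{k,h}(\tilde{\mathbf{v}}^{k+1})\bigr] + \bigl[E_{k,h}(\tilde{\mathbf{v}}^{k+1})-E_{k,h}(\mathbf{v}^{k})\bigr].
\end{equation*}

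First I would take the inner product of the pressure-correction step \eqref{FullPo} with $m_i\mathbf{v}_i^{k+1}$ and sum over $i$. The polarization identity $2\mathbf{v}^{k+1}\!\cdot(\mathbf{v}^{k+1}-\tilde{\mathbf{v}}^{k+1}) = \|\mathbf{v}^{k+1}\|^2-\|\tilde{\mathbf{v}}^{k+1}\|^2+\|\mathbf{v}^{k+1}-\tilde{\mathbf{v}}^{k+1}\|^2$ handles the left-hand side, while the symmetric SPH gradient of pressure can be reindexed (swap $i\leftrightarrow j$, use the antisymmetry \eqref{anti-symmetric}) to produce exactly the divergence-free expression \eqref{Fulldiv}, forcing the pressure contribution to vanish. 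The first bracket then equals $-\tfrac12\sum_i m_i\|\mathbf{v}_i^{k+1}-\tilde{\mathbf{v}}_i^{k+1}\|^2\le 0$. Next I would take the inner product of \eqref{Fullmomentum} with $m_i\tilde{\mathbf{v}}_i^{k+1}$ and sum: the time-derivative term produces the second bracket plus a non-negative jump $\tfrac12\sum_i m_i\|\tilde{\mathbf{v}}_i^{k+1}-\mathbf{v}_i^{k}\|^2$, the viscosity term is non-positive by \eqref{SPHnegative} exactly as in \eqref{ODE_visterm}, and the interfacial term, after substituting the stabilization identity \eqref{Fullstabilizing}, contributes $-2\mathcal{M}\sum_i\|\tilde{\mathbf{v}}_i^{k+1}\|^2-\sum_i\tilde{\mathbf{v}}_i^{k+1}\!\cdot\boldsymbol{\mathcal{T}}_i(\mathbf{x}^k)$.

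For the free-energy change, the idea is to apply a second-order Taylor expansion in positions at $\mathbf{x}^k$. Since $F_h^\nabla$ depends smoothly on $\mathbf{x}$ through $W_{ij}$ and $\nabla_i W_{ij}$, the Hessian $\partial^2 F_h^\nabla/\partial\mathbf{x}_i\partial\mathbf{x}_j$ is bounded on any bounded configuration by some constant $L$, giving
\begin{equation*}
F_h^\nabla(\mathbf{x}^{k+1},\phi^{k+1})-F_h^\nabla(\mathbf{x}^{k},\phi^{k+1}) \le \sum_i \boldsymbol{\mathcal{T}}_i(\mathbf{x}^k)\!\cdot(\mathbf{x}_i^{k+1}-\mathbf{x}_i^{k}) + \tfrac{L}{2}\sum_i\|\mathbf{x}_i^{k+1}-\mathbf{x}_i^{k}\|^2.
\end{equation*}
Using the position update \eqref{FullUpdate} to replace $\mathbf{x}_i^{k+1}-\mathbf{x}_i^{k}$ by $\Delta t\,\mathbf{v}_i^{k+1}$ converts the right-hand side into $\Delta t\sum_i\boldsymbol{\mathcal{T}}_i(\mathbf{x}^k)\!\cdot\mathbf{v}_i^{k+1}+\tfrac{L\Delta t^2}{2}\sum_i\|\mathbf{v}_i^{k+1}\|^2$.

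The last step is to add everything and show the right-hand side is non-positive. Splitting $\mathbf{v}_i^{k+1}=\tilde{\mathbf{v}}_i^{k+1}+(\mathbf{v}_i^{k+1}-\tilde{\mathbf{v}}_i^{k+1})$ makes the linear $\boldsymbol{\mathcal{T}}_i(\mathbf{x}^k)$ contributions cancel except for a cross term $\sum_i\boldsymbol{\mathcal{T}}_i(\mathbf{x}^k)\!\cdot(\mathbf{v}_i^{k+1}-\tilde{\mathbf{v}}_i^{k+1})$, which I would estimate by Cauchy--Schwarz and Young's inequality and absorb into the two available negative reservoirs: $-\tfrac12\sum_i m_i\|\mathbf{v}_i^{k+1}-\tilde{\mathbf{v}}_i^{k+1}\|^2$ from the projection step and $-2\mathcal{M}\Delta t\sum_i\|\tilde{\mathbf{v}}_i^{k+1}\|^2$ from the stabilized interfacial force. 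Using $\|\mathbf{v}^{k+1}\|^2\le 2\|\tilde{\mathbf{v}}^{k+1}\|^2+2\|\mathbf{v}^{k+1}-\tilde{\mathbf{v}}^{k+1}\|^2$ likewise lets the $L\Delta t$ Hessian residual be swallowed, provided $\mathcal{M}$ is chosen larger than a constant depending on $L$, $\|\boldsymbol{\mathcal{T}}(\mathbf{x}^k)\|$, $\Delta t$, and $\min_i m_i$.

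The main obstacle is precisely this absorption: the stabilization only supplies a bound proportional to $\|\tilde{\mathbf{v}}^{k+1}\|^2$, yet the Taylor remainder of $F_h^\nabla$ and the cross term involve $\mathbf{v}^{k+1}$, and moreover the bound $L$ on the Hessian of $F_h^\nabla$ in the positions is not uniform a priori (it depends on the current particle configuration and on $\phi^{k+1}$). Making the statement rigorous therefore requires (i) a careful configuration-dependent estimate of the second derivatives of $F_h^\nabla$ through $W$ and $\nabla W$ on the trajectory, and (ii) a Young-inequality balancing that shows $\mathcal{M}$ can be taken large enough to dominate both the Hessian residual and the $\mathbf{v}^{k+1}-\tilde{\mathbf{v}}^{k+1}$ cross term without requiring a CFL-like smallness on $\Delta t$ beyond what the stabilization already provides.
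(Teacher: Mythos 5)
Your decomposition is the same as the paper's: the kinetic energy is split across the momentum substep and the pressure-projection substep, the momentum equation is tested with $m_i\tilde{\mathbf{v}}_i^{k+1}$ and the Poisson step with $m_i\mathbf{v}_i^{k+1}$, the viscosity term is killed by \eqref{SPHnegative}, and the pressure term vanishes via the discrete divergence-free condition. Where you diverge is the treatment of the position-dependence of $F_h^{\nabla}$: you Taylor-expand at $\mathbf{x}^k$ with a Hessian bound $L$ and then try to close by Young's inequality, whereas the paper splits $F_h^{\nabla}=F_{h,e}^{\nabla}+F_{h,c}^{\nabla}$ with $F_{h,e}^{\nabla}=\frac{\mathcal{M}}{\Delta t}\sum_i\|\mathbf{x}_i\|^2$, uses convexity/concavity to bound the energy increment by $\sum_i\left\langle\boldsymbol{\mathcal{T}}_i(\mathbf{x}^{k+1},\mathbf{x}^k)\right\rangle_h\cdot(\mathbf{x}_i^{k+1}-\mathbf{x}_i^k)=\Delta t\sum_i\mathbf{v}_i^{k+1}\cdot\left[2\mathcal{M}\mathbf{v}_i^{k+1}+\boldsymbol{\mathcal{T}}_i(\mathbf{x}^k)\right]$, and then pairs this against the identically structured term $-\sum_i\tilde{\mathbf{v}}_i^{k+1}\cdot\left[2\mathcal{M}\tilde{\mathbf{v}}_i^{k+1}+\boldsymbol{\mathcal{T}}_i(\mathbf{x}^k)\right]$ coming from the momentum equation. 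Because the projection step gives $\sum_i\|\mathbf{v}_i^{k+1}\|^2\le\sum_i\|\tilde{\mathbf{v}}_i^{k+1}\|^2$, the paired quadratic terms give $2\mathcal{M}\sum_i\left[(\mathbf{v}_i^{k+1})^2-(\tilde{\mathbf{v}}_i^{k+1})^2\right]\le 0$, and the surviving residual is exactly \eqref{FullDiss}, which the paper controls by increasing $\mathcal{M}$ a posteriori (see the remark following the lemma). The two routes are cousins (concavity of $F_{h,c}^{\nabla}$ is precisely an upper Hessian bound $L\le 2\mathcal{M}/\Delta t$), but the pairing structure matters, as follows.

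The genuine gap in your version is the absorption of the cross term $\sum_i\boldsymbol{\mathcal{T}}_i(\mathbf{x}^k)\cdot(\mathbf{v}_i^{k+1}-\tilde{\mathbf{v}}_i^{k+1})$. Your only negative reservoirs are $-\frac{1}{2\Delta t}\sum_i m_i\|\mathbf{v}_i^{k+1}-\tilde{\mathbf{v}}_i^{k+1}\|^2$ and $-2\mathcal{M}\sum_i\|\tilde{\mathbf{v}}_i^{k+1}\|^2$, both quadratic in velocities, while the cross term is linear in the velocity increment and linear in $\boldsymbol{\mathcal{T}}$. Any Young splitting therefore leaves a strictly positive remainder of the form $C\,\Delta t\sum_i\|\boldsymbol{\mathcal{T}}_i(\mathbf{x}^k)\|^2/m_i$ that contains no velocity factor at all; it cannot be absorbed by either reservoir, and enlarging $\mathcal{M}$ does not shrink it (large $\mathcal{M}$ only strengthens the velocity-quadratic terms). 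So as written your final inequality closes only up to an uncancelled $O(\Delta t)$ positive term, i.e., you prove quasi-dissipation rather than \eqref{FullNSenergy}. You correctly flag this obstacle yourself, but the fix is not a finer Young balancing: it is to use the same stabilized semi-implicit force $2\mathcal{M}\mathbf{v}_i^{k+1}+\boldsymbol{\mathcal{T}}_i(\mathbf{x}^k)$ in the free-energy bound as in the momentum equation, so that the linear $\boldsymbol{\mathcal{T}}(\mathbf{x}^k)$ contributions and the $\mathcal{M}$-quadratic contributions pair into \eqref{FullDiss}, whose positive part scales like $1/\mathcal{M}$ (and which the paper additionally enforces by adapting $\mathcal{M}$ at run time). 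In fairness, the paper's own final step is likewise not a fully a priori estimate for a fixed $\mathcal{M}$, but its residual is controllable by $\mathcal{M}$ whereas yours is not.
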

\begin{proof}
It has been mentioned that the change of energy $F^{\nabla}_h$ is partially caused by the change of the position of particles $\mathbf{x}$. When the position of particles is updated through \eqref{FullUpdate} and $\phi$ is unvarying, this part of energy change characterizes the energy transfer from kinetic energy $E_k$ to the free energy $F$. This relationship is consistent with the definition in the ODE system. Using the symbol $\langle\rangle_{h}^{t e m}$ to denote the fully discrete scheme, including temporal discretization. Thus,
\begin{equation}
\langle\text {transfer}_\mathrm{EK-F} \rangle_{h}^{tem} = \frac{F_{h}^{\nabla}\left(\mathbf{x}^{k+1}, \phi^{k+1}\right)-F_{h}^{\nabla}\left(\mathbf{x}^{k}, \phi^{k+1}\right)}{\Delta t}. \label{FulldefEK-F}
\end{equation}
Correspondingly, the energy transferred from free energy $F$ to kinetic energy $E_k$ is characterized by the inner product of interfacial term  with $m_i\tilde{\mathbf{v}}_{i}^{k+1}$.
\begin{equation}
\begin{split}
\left\langle\operatorname{transfer}_{\mathrm{F}-\mathrm{EK}}\right\rangle_{h}^{\mathrm{tem}}
&= -\sum_{i}\left(m_{i} \tilde{\mathbf{v}}_{i}^{k+1} \cdot \frac{\left\langle\mathcal{T}_{i}\left(\tilde{\mathbf{v}}^{k+1}, \mathbf{v}^{k}\right)\right\rangle_{h}}{m_{i}}\right)\\
&= -\sum_{i} \tilde{\mathbf{v}}_{i}^{k+1} \cdot\left\langle\mathcal{T}_{i}\left(\tilde{\mathbf{x}}^{k+1}, \mathbf{x}^{k}\right)\right\rangle_{h} 	
  \approx -\sum_{i}\left\langle\frac{\partial F_{h}^{\nabla}}{\partial \mathbf{x}_{i}} \frac{\partial \mathbf{x}_{i}}{\partial t}\right\rangle_{h}. \label{FulldefF-EK}
\end{split}
\end{equation}
Since the NS system is determined by projection methods, we split the entire process into three steps: (1) solving the momentum equation; (2) solving the Poisson equation; (3) particle movement. So the energy change can be expressed as the summation of

$\frac{\left[E_{k, h}\left(\tilde{\mathbf{v}}^{k+1}\right)-E_{k, h}\left(\mathbf{v}^{k}\right)\right]}{\Delta t}, \frac{\left[E_{k, h}\left(\mathbf{v}^{k+1}\right)-E_{k, h}\left(\tilde{\mathbf{v}}^{k+1}\right)\right]}{\Delta t}$, and $\frac{\left[F_{h}^{\nabla}\left(\mathbf{x}^{k+1}, \phi^{k+1}\right)-F_{h}^{\nabla}\left(\mathbf{x}^{k}, \phi^{k+1}\right)\right]}{\Delta t}$.

With the help of inner product operation to get the energy and based on the inequality $\frac{1}{2}\|\mathbf{a}\|^{2}-\frac{1}{2}\|\mathbf{b}\|^{2} \leq \mathbf{a} \cdot(\mathbf{a}-\mathbf{b})$, we can get the energy relations for each step.

\textbf{\textit{(1) momentum equation: }}
By multiplying $m_{i} \tilde{\mathbf{v}}_{i}^{k+1}$ with the momentum equation \eqref{Fullmomentum}, the left-hand side (LHS) becomes
\begin{equation*}
E_{k, h}\left(\tilde{\mathbf{v}}^{k+1}\right)-E_{k, h}\left(\mathbf{v}^{k}\right)=\frac{1}{2} \sum_{i} m_{i}\left\|\tilde{\mathbf{v}}_{i}^{k+1}\right\|^{2}-\frac{1}{2} \sum_{i} m_{i}\left\|\mathbf{v}_{i}^{k}\right\|^{2} \leq \sum_{i} m_{i} \tilde{\mathbf{v}}_{i}^{k+1} \cdot\left(\tilde{\mathbf{v}}_{i}^{k+1}-\mathbf{v}_{i}^{k}\right).
\end{equation*}
Then, combining \eqref{FulldefF-EK}, it can concluded that
\begin{equation}
\begin{split}
\frac{E_{k, h}\left(\tilde{\mathbf{v}}^{k+1}\right)-E_{k, h}\left(\mathbf{v}^{k}\right)}{\Delta t} &\leq
\frac{(d+2)}{2} \sum_{i} \sum_{j} \frac{m_{i} m_{j}}{\rho_{i}^{k+1} \rho_{j}^{k+1}} \frac{\left(\eta_{i}+\eta_{j}\right)\left\|\tilde{\mathbf{v}}_{i j}^{k+1} \cdot \mathbf{x}_{i j}^{k}\right\|^{2}}{\left\|\mathbf{x}_{i j}^{k}\right\|^{2}} \omega\left(r_{i j}\right)\\
&+\langle\text {transfer}_\mathrm{F-EK}\rangle_{h}^{tem}. \label{Full1}
\end{split}
\end{equation}

\textbf{\textit{(2) Poisson equation: }}
By multiplying $m_{i}\mathbf{v}_{i}^{k+1}$ with the Poisson equation \eqref{FullPo}, we deduce
\begin{equation}
\sum_{i}\left(m_{i} \mathbf{v}_{i}^{k+1}, \frac{\mathbf{v}_{i}^{k+1}-\tilde{\mathbf{v}}_{i}^{k+1}}{\Delta t}\right)=\sum_{i}\left(m_{i} \mathbf{v}_{i}^{k+1},-\sum_{j} m_{j}\left(\frac{p_{i}^{k+1}}{\left(\rho_{i}^{k+1}\right)^{2}}+\frac{p_{j}^{k+1}}{\left(\rho_{j}^{k+1}\right)^{2}}\right) \nabla_{i} W_{i j}^{n}\right). \label{POinner}
\end{equation}
Based on the divergence-free condition at the discrete level, it naturally yields
\begin{equation*}
-\sum_{i} m_{i} \frac{p_{i}^{k+1}}{\left(\rho_{i}^{k+1}\right)^{2}} \sum_{j} m_{j}\left(\mathbf{v}_{i}^{k+1}-\mathbf{v}_{j}^{k+1}\right) \cdot \nabla_{i} W_{i j}=0.
\end{equation*}
Then, we can conclude the right-hand side (RHS) of \eqref{POinner}
\begin{equation*}
\text { RHS }=-\frac{1}{2} \sum_{i} \sum_{i} m_{i} m_{j}\left(\frac{p_{i}^{k+1}}{\left(\rho_{i}^{k+1}\right)^{2}}+\frac{p_{j}^{k+1}}{\left(\rho_{j}^{k+1}\right)^{2}}\right)\left(\mathbf{v}_{i}^{k+1}-\mathbf{v}_{j}^{k+1}\right) \cdot \nabla_{i} W_{i j}=0.
\end{equation*}
Finally, we get
\begin{equation}
\frac{E_{k, h}\left(\mathbf{v}^{k+1}\right)-E_{k, h}\left(\tilde{\mathbf{v}}_{i}^{k+1}\right)}{\Delta t} \leq \sum_{i}\left(m_{i} \mathbf{v}_{i}^{k+1}, \frac{\mathbf{v}_{i}^{k+1}-\tilde{\mathbf{v}}_{i}^{k+1}}{\Delta t}\right)=0.
\label{Full2}
\end{equation}

\textbf{\textit{(3) movement of particles: }}
Based on the definition of interfacial force: $\boldsymbol{\mathcal{T}}_{i}:=\frac{\partial F_{h}^{\nabla}}{\partial \mathbf{x}_{i}}$ and the definition of $F_{i, h}^{\nabla}$, for $\mathbf{x}_{i}^{k+1}$, $\mathbf{x}_{i}^{k} \in S$ and the open set $S \subseteq \mathbb{R}^{d}(d=2,3)$, let $\mathbf{x}_{i} = (1- \alpha_i)\mathbf{x}_{i}^{k} +\alpha_i \mathbf{x}_{i}^{k+1}$, $\alpha_i \in [0,1]$, and $\mathbf{x}$ is the collection of $\mathbf{x}_{i}$. If we consider the movement as a particle-wise process and based on the mean value theorem for multivariable, we have
\begin{equation*}
F_{h}^{\nabla}\left(\mathbf{x}^{k+1}, \phi^{k+1}\right)-F_{h}^{\nabla}\left(\mathbf{x}^{k}, \phi^{k+1}\right)=\sum_{i} \boldsymbol{\mathcal{T}}_{i}(\mathbf{x}) \cdot\left(\mathbf{x}_{i}^{k+1}-\mathbf{x}_{i}^{k}\right).
\end{equation*}
Through the convex-concave splitting, a semi-implicit scheme can make the following inequality hold:
\begin{equation*}
\boldsymbol{\mathcal{T}}_{i}(\mathbf{x}) \cdot\left(\mathbf{x}_{i}^{k+1}-\mathbf{x}_{i}^{k}\right) \leq\left\langle \boldsymbol{\mathcal{T}}_{i}\left(\mathbf{x}^{k+1}, \mathbf{x}^{k}\right)\right\rangle_{h} \cdot\left(\mathbf{x}_{i}^{k+1}-\mathbf{x}_{i}^{k}\right).
\end{equation*}
namely,
\begin{equation}
\langle\text {transfer}_\mathrm{EK-F} \rangle_{h}^{tem}
=\frac{F_{h}^{\nabla}\left(\mathbf{x}^{k+1}, \phi^{k+1}\right)-F_{h}^{\nabla}\left(\mathbf{x}^{k}, \phi^{k+1}\right)}{\Delta t}
\leq
\sum_{i} \mathbf{v}_{i}^{k+1} \cdot\left\langle \boldsymbol{\mathcal{T}}_{i}\left(\mathbf{x}^{k+1}, \mathbf{x}^{k}\right)\right\rangle_{h}.  \label{ineqseminext}
\end{equation}
By employing a stabilizing term, the energy can be rewritten in a modified form: $F_{h}^{\nabla}=F_{h, e}^{\nabla}+F_{h, c}^{\nabla}$, where subscript $e$ means convex and $c$ means concave. Then the original energy $F_{h}^{\nabla}$ can be split into a convex part and a concave part:
\begin{equation*}
\left\{\begin{array}{l}
F_{h, e}^{\nabla}= \frac{\mathcal{M}}{\Delta t} \sum_{i}\left\|\mathbf{x}_{i}\right\|^{2}, \\
F_{h, c}^{\nabla}=F_{h}^{\nabla}-\frac{\mathcal{M}}{\Delta t} \sum_{i}\left\|\mathbf{x}_{i}\right\|^{2}.
\end{array}\right.	
\end{equation*}
Their second order derivatives are
\begin{equation*}
\frac{\partial^{2} F_{h, e}^{\nabla}}{\partial \mathbf{x}_{i} \partial \mathbf{x}_{j}}=\left\{\begin{array}{ll}
2 \frac{\mathcal{M}}{\Delta t} & , i=j, \\
0 \quad &,  i \neq j,
\end{array} \quad \text { and } \quad \frac{\partial^{2} F_{h, c}^{\nabla}}{\partial \mathbf{x}_{i} \partial \mathbf{x}_{j}}= \begin{cases}-2 \frac{\mathcal{M}}{\Delta t}+\frac{\partial \mathcal{T}_{i}}{\partial \mathbf{x}_{i}} & , i=j, \\
\frac{\partial \mathcal{T}_{i}}{\partial \mathbf{x}_{j}} & , i \neq j.\end{cases}\right.
\end{equation*}
The Hessian matrix of $F_{h, e}^{\nabla}$ is always positive-definite. Moreover, a sufficiently large modified parameter $\mathcal{M}$ always exists to make the Hessian matrix of $F_{h, c}^{\nabla}$ negative-definite. Based on the convex-concave properties, this splitting makes the following inequalities hold:
\begin{equation} \label{convex-concave}
\left\{\begin{array}{l}
F_{h, e}^{\nabla}\left(\mathbf{x}^{k+1}, \phi^{k+1}\right)-F_{h, e}^{\nabla}\left(\mathbf{x}^{k}, \phi^{k+1}\right) \leq \sum_{i} 2 \mathcal{M} \mathbf{x}_{i}^{k+1} \cdot\left(\mathbf{x}_{i}^{k+1}-\mathbf{x}_{i}^{k}\right), \\
F_{h, c}^{\nabla}\left(\mathbf{x}^{k+1}, \phi^{k+1}\right)-F_{h, c}^{\nabla}\left(\mathbf{x}^{k}, \phi^{k+1}\right) \leq \sum_{i}\left(\mathcal{T}_{i}\left(\mathbf{x}^{k}\right)-2 \mathcal{M} \mathbf{x}_{i}^{k}\right) \cdot\left(\mathbf{x}_{i}^{k+1}-\mathbf{x}_{i}^{k}\right).
\end{array}\right.	
\end{equation}
Combining these two inequalities \eqref{convex-concave}, we find the appropriate semi-implicit fully discrete scheme for interfacial force
\begin{equation}
\left\langle \boldsymbol{\mathcal{T}}_{i}\left(\mathbf{x}^{k+1}, \mathbf{x}^{k}\right)\right\rangle_{h} =\frac{2\mathcal{M}}{\Delta t} \mathbf{x}_{i}^{k+1}+\boldsymbol{\mathcal{T}}_{i}\left(\mathbf{x}^{k}\right)- \frac{2\mathcal{M}}{\Delta t} \mathbf{x}_{i}^{k} = 2\mathcal{M} \mathbf{v}_{i}^{k+1} +\boldsymbol{\mathcal{T}}_{i}\left(\mathbf{x}^{k}\right). \label{semi_T}
\end{equation}
The scheme for $\boldsymbol{\mathcal{T}}_{i}$ as \eqref{semi_T}  will ensure inequality \eqref{ineqseminext}. For the physical consistency of the full discrete scheme, we apply the same scheme for the term $\left\langle \boldsymbol{\mathcal{T}}_{i}\left(\tilde{\mathbf{x}}^{k+1}, \mathbf{x}^{k}\right)\right\rangle_{h}$ with the intermediate velocity $\tilde{\mathbf{v}}_{i}^{k+1}$ in the momentum equation as \eqref{Fullstabilizing}. Thus,
\begin{equation*}
\left\langle\text{transfer}_\mathrm{F-EK}\right\rangle_{h}^{tem}
= -\sum_{i} \tilde{\mathbf{v}}_{i}^{k+1} \cdot\left[2 M \tilde{\mathbf{v}}_{i}^{k+1}+\boldsymbol{\mathcal{T}}_{i}\left(\mathbf{x}^{k}\right)\right].
\end{equation*}
By recalling the inequality \eqref{Full2} derived from the pressure Poisson equation, we can conclude that there is always a sufficiently large coefficient $\mathcal{M}$ to make
\begin{equation}
\begin{aligned}
& \quad \langle\text {transfer}_\mathrm{EK-F} \rangle_{h}^{tem} + \langle\text {transfer}_\mathrm{F-EK} \rangle_{h}^{tem} \\
&\leq \sum_{i} \mathbf{v}_{i}^{k+1}  \cdot\left[2 \mathcal{M} \mathbf{v}_{i}^{k+1}  +\mathcal{T}_{i}\left(\mathbf{x}^{k}\right)\right] - \sum_{i} \tilde{\mathbf{v}}_{i}^{k+1}  \cdot\left[2 \mathcal{M} \tilde{\mathbf{v}}_{i}^{k+1} +\mathcal{T}_{i}\left(\mathbf{x}^{k}\right)\right] \\
& = 2\mathcal{M}  \sum_{i}\left[\left(\mathbf{v}_{i}^{k+1}\right)^{2}-\left(\tilde{\mathbf{v}}_{i}^{k+1}\right)^{2}\right]
+ \sum_{i}\left(\mathbf{v}_{i}^{k+1}-\tilde{\mathbf{v}}_{i}^{k+1}\right) \mathcal{T}_{i}\left(\mathbf{x}^{k}\right) \leq 0.
\end{aligned}     \label{Full3}
\end{equation}
At last, combining the energy changes at three steps as a
\eqref{Full1}, \eqref{Full2}, and \eqref{Full3}, we reach the conclusion of the lemma as \eqref{FullNSenergy}.
\end{proof}
\begin{rem}
We could consider the conservation of energy transfer still maintained at the fully discrete level. But the energy inequality of \eqref{Full3} is equivalent to adding a numerical dissipative term to help maintain the energy decaying property. It comes from the numerical approximation of energy transfer.
\begin{equation}
\left\langle\operatorname{transfer}_{diss}\right\rangle_{h}^{\mathrm{tem}} = 2 \mathcal{M} \sum_{i}\left[\left(\mathbf{v}_{i}^{k+1}\right)^{2}-\left(\tilde{\mathbf{v}}_{i}^{k+1}\right)^{2}\right]+\sum_{i}\left(\mathbf{v}_{i}^{k+1}-\tilde{\mathbf{v}}_{i}^{k+1}\right) \mathcal{T}_{i}\left(\mathbf{x}^{k}\right) \leq 0. \label{FullDiss}
\end{equation}
The value of $\mathcal{M}$ is adjusted adaptively posteriori based on the solved $\mathbf{v}_{i}^{k+1}$. If the above energy inequality \eqref{FullDiss} does not hold, the value of $\mathcal{M}$ can be appropriately increased to maintain the energy decaying trend.
\end{rem}

Finally, the theorem for the complete NSCH system at the fully discrete level is proposed here.
\begin{thm} \label{Fullenergydiss}
Assume the numerical scheme for the NSCH system based on the SPH method includes \eqref{FullCH1}, \eqref{FullCH2}, \eqref{Fullrho}, \eqref{Fullmomentum}, \eqref{FullPo}, \eqref{Fulldiv} and \eqref{FullUpdate}. The energy dissipation law holds at the fully discrete level.
\begin{equation}
\frac{E_{\text {total}, h}^{k+1}-E_{\text {total}, h}^{k}}{\Delta t}=\frac{E_{k, h}\left(\mathbf{v}^{k+1}\right)-E_{k, h}\left(\mathbf{v}^{k}\right)}{\Delta t}+\frac{F_{h}\left(\mathbf{x}^{k+1}, \phi^{k+1}\right)-F_{h}\left(\mathbf{x}^{k}, \phi^{k}\right)}{\Delta t} \leq 0.    \label{Fullenergyall}
\end{equation}
\end{thm}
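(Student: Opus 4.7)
The plan is to exploit the fact that the fully discrete scheme is naturally split into two stages — a Cahn-Hilliard stage that updates $\phi$ at frozen particle positions, and a Navier-Stokes stage that updates velocities (through intermediate velocity, projection, and position update) at frozen $\phi$ — and then to invoke \Cref{lemFullCHenergy} and \Cref{lemFullNSenergy} as black boxes on each stage. The main bookkeeping task is to show that when the two pieces are added, they reproduce the telescoping difference on the left-hand side of \eqref{Fullenergyall}.

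First I would write out the total energy difference and telescope it by inserting the intermediate state $(\mathbf{v}^k,\mathbf{x}^k,\phi^{k+1})$:
\begin{equation*}
E_{\text{total},h}^{k+1}-E_{\text{total},h}^{k} = \underbrace{\bigl[F_{h}(\mathbf{x}^{k},\phi^{k+1})-F_{h}(\mathbf{x}^{k},\phi^{k})\bigr]}_{\text{CH stage}} + \underbrace{\bigl[E_{k,h}(\mathbf{v}^{k+1})-E_{k,h}(\mathbf{v}^{k})\bigr] + \bigl[F_{h}^{\nabla}(\mathbf{x}^{k+1},\phi^{k+1})-F_{h}^{\nabla}(\mathbf{x}^{k},\phi^{k+1})\bigr]}_{\text{NS stage}},
\end{equation*}
where the only nontrivial observation is that the homogeneous part $F_h^b$ of the free energy depends only on $\phi$ (and, under the incompressibility assumption $\rho_i\equiv\rho$ used throughout Section \ref{section:Scheme}, its prefactor $m_i/\rho_i$ is unchanged by the particle motion), so $F_h^b(\mathbf{x}^{k+1},\phi^{k+1})=F_h^b(\mathbf{x}^{k},\phi^{k+1})$ and the position update contributes only through $F_h^\nabla$.

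Next I would apply \Cref{lemFullCHenergy} to the CH stage bracket, yielding
\begin{equation*}
\frac{F_{h}(\mathbf{x}^{k},\phi^{k+1})-F_{h}(\mathbf{x}^{k},\phi^{k})}{\Delta t} \leq M\sum_{i}\sum_{j}\frac{m_{i}m_{j}}{\rho_{i}\rho_{j}}\bigl(\mu_{i}^{k+1}-\mu_{j}^{k+1}\bigr)^{2}\omega(r_{ij}) \leq 0,
\end{equation*}
and apply \Cref{lemFullNSenergy} to the NS stage bracket to bound its sum by $0$ (with the stabilization parameter $\mathcal{M}$ chosen large enough). Adding the two inequalities gives \eqref{Fullenergyall}. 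Dividing by $\Delta t>0$ preserves the direction of the inequality, which completes the theorem.

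I do not expect any serious obstacle because the two lemmas already absorb all the hard work: the convex-concave/energy-factorization argument for $F_h^b$, the implicit treatment of $\mu^\nabla$ for the non-local $F_h^\nabla$ in $\phi$, the projection-step cancellation of the pressure term, and the stabilization argument controlling $F_h^\nabla$ under the position update. The only point that requires a careful word is verifying that the intermediate state $(\mathbf{v}^k,\mathbf{x}^k,\phi^{k+1})$ is exactly the state on which \Cref{lemFullCHenergy} and \Cref{lemFullNSenergy} respectively operate — i.e.\ that the CH stage genuinely freezes $\mathbf{x}$ and the NS stage genuinely freezes $\phi$ — which is immediate from the scheme \eqref{FullCH1}--\eqref{FullUpdate}, since $\phi$ does not appear in \eqref{Fullmomentum}--\eqref{FullUpdate} except through the interfacial force $\boldsymbol{\mathcal{T}}_i$ whose dependence on $\phi$ is evaluated at $\phi^{k+1}$ throughout the NS substep.
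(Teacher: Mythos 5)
Your proposal is correct and follows exactly the paper's route: the paper likewise splits the update into the CH stage at frozen positions and the NS stage at frozen $\phi$, and concludes by adding the inequalities \eqref{FullCHenergy} and \eqref{FullNSenergy} from Lemmas \ref{lemFullCHenergy} and \ref{lemFullNSenergy}. Your explicit telescoping through the intermediate state $(\mathbf{v}^{k},\mathbf{x}^{k},\phi^{k+1})$ and the remark that $F_{h}^{b}$ is unaffected by the position update merely spell out bookkeeping the paper leaves implicit.
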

\begin{proof}
This energy inequality for the NSCH system can be concluded through adding these two inequalities, \eqref{FullCHenergy} and \eqref{FullNSenergy}, from Lemmas \ref{lemFullCHenergy} and \ref{lemFullNSenergy} for the CH and NS systems, respectively.
\end{proof}

\section{Numerical examples} \label{section:Numerical examples}
\renewcommand{\thefigure}{\arabic{section}.\arabic{figure}}
In this section, we present some numerical results to show the performance of the proposed energy-stable SPH method for the NSCH two-phase system. We consider the evolution of a droplets with the interfacial tension. In all of the numerical cases, the spatial domain is taken as $\Omega = [-1,1]^2$. The particles are placed uniformly in the initial step, and there are $40\times40$ particles in all examples. The smoothing length is set as $h$.

\subsection{The dynamics of a square droplet}
In the first numerical example, we consider the evolution process of a square droplet in a domain. The parameters are chosen as:
$h = 0.05, \Delta t = 0.01, T = 10, \mathcal{M} = 10^3, \eta = 1, \lambda = 1, \epsilon=0.02, \text{Mobility}=0.002, \xi = 1.$
The initial phase variables are chosen as $ \phi = 1$ inside the square droplet and $\phi = -1$ everywhere else.
\begin{figure}[!b]
    \centering \subfigure[Initial]{
    \begin{minipage}[b]{0.3\textwidth}
    \centering
    \includegraphics[width=1.0\textwidth,height=1.4in]{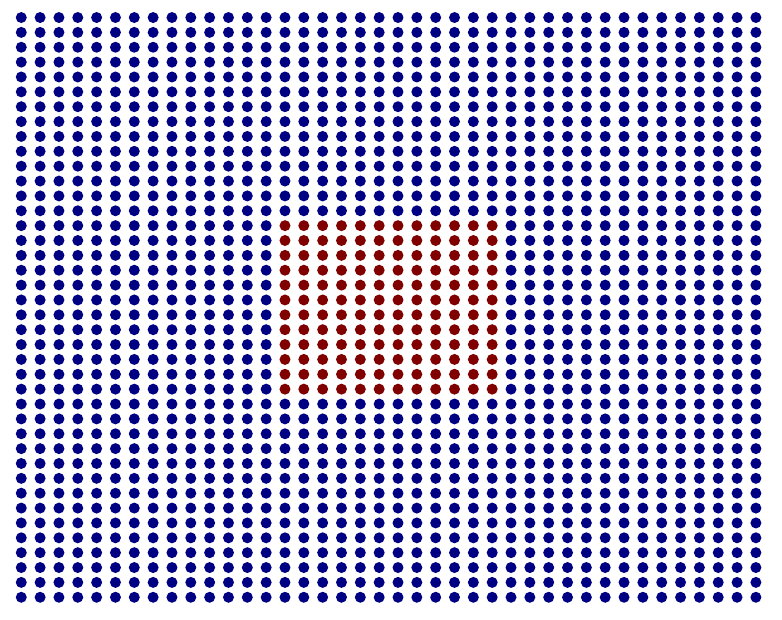}
    \end{minipage}
    }
    \centering \subfigure[$t=1$]{
    \begin{minipage}[b]{0.3\textwidth}
    \centering
    \includegraphics[width=1.0\textwidth,height=1.4in]{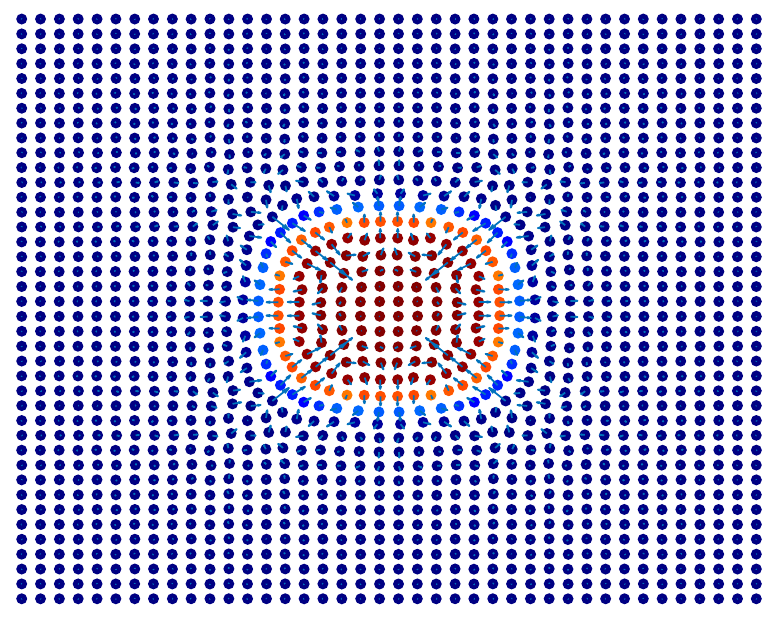}
    \end{minipage}
    }
    \centering \subfigure[$t=20$]{
    \begin{minipage}[b]{0.3\textwidth}
    \centering
    \includegraphics[width=1.0\textwidth,height=1.4in]{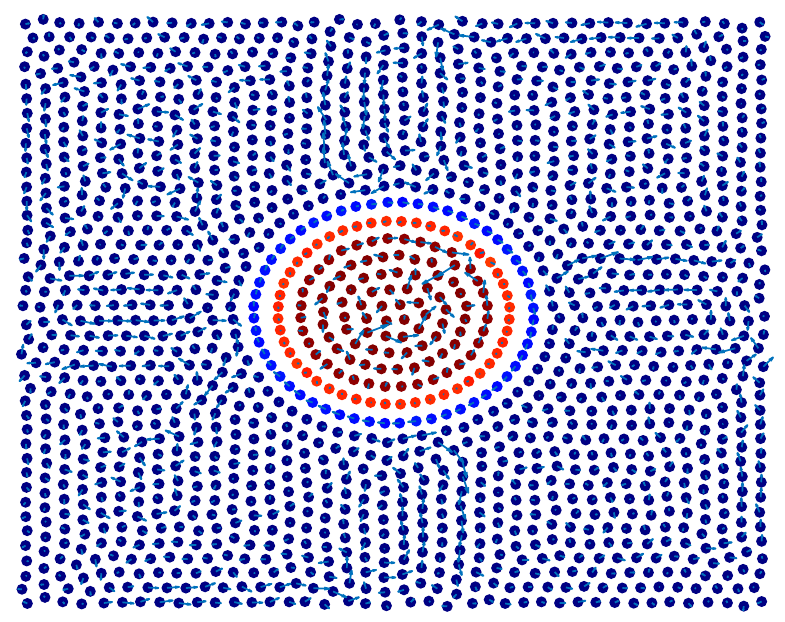}
    \end{minipage}
    }
    \caption{The evolution of $\phi$ and spatial particle distribution with time in example 1}
    \label{example 1_particles}
\end{figure}
 \begin{figure}[!b]
    \centering \subfigure[Initial]{
    \begin{minipage}[b]{0.3\textwidth}
    \centering
    \includegraphics[width=1.0\textwidth,height=1.4in]{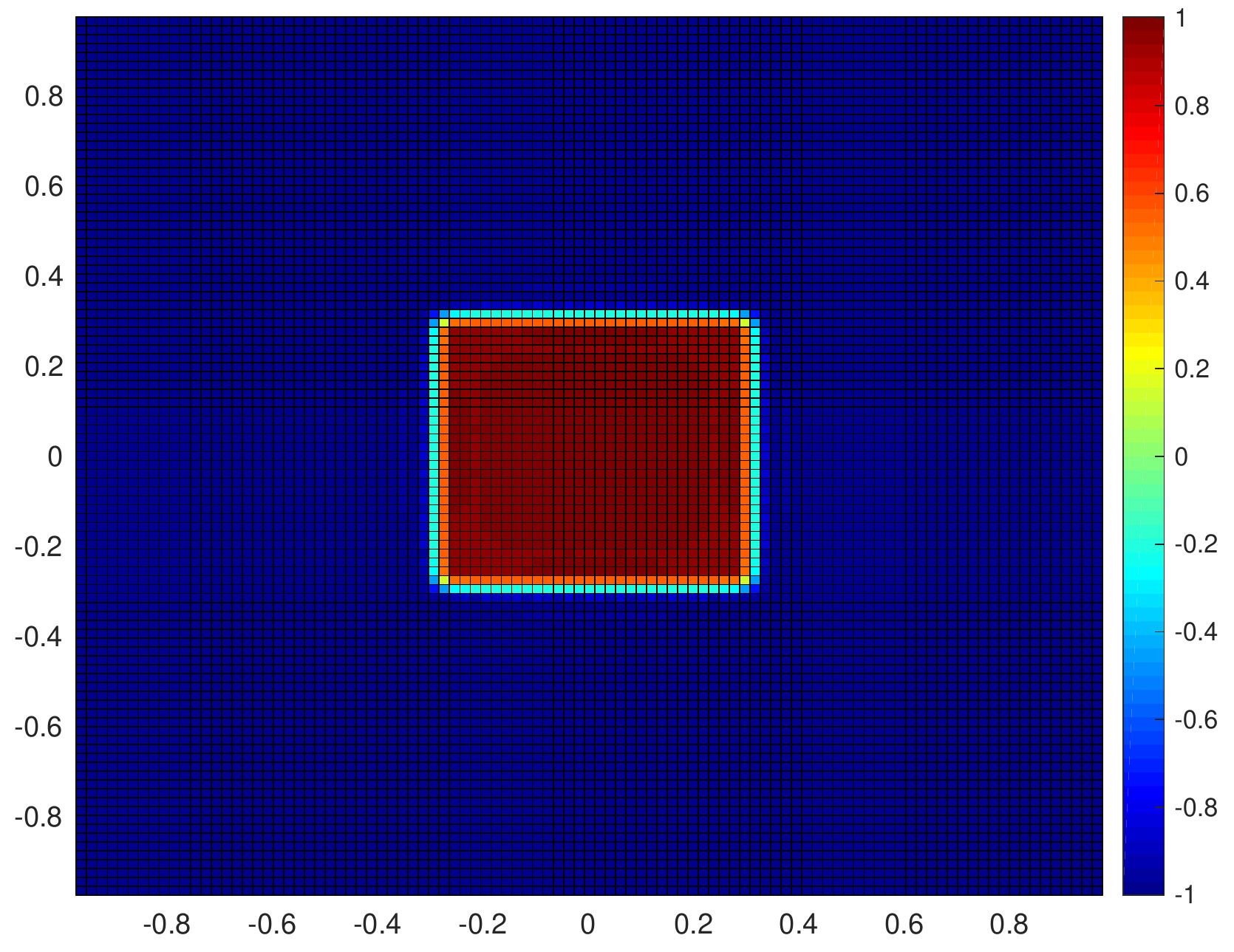}
    \end{minipage}
    }
    \centering \subfigure[$t=1$]{
    \begin{minipage}[b]{0.3\textwidth}
    \centering
    \includegraphics[width=1.0\textwidth,height=1.4in]{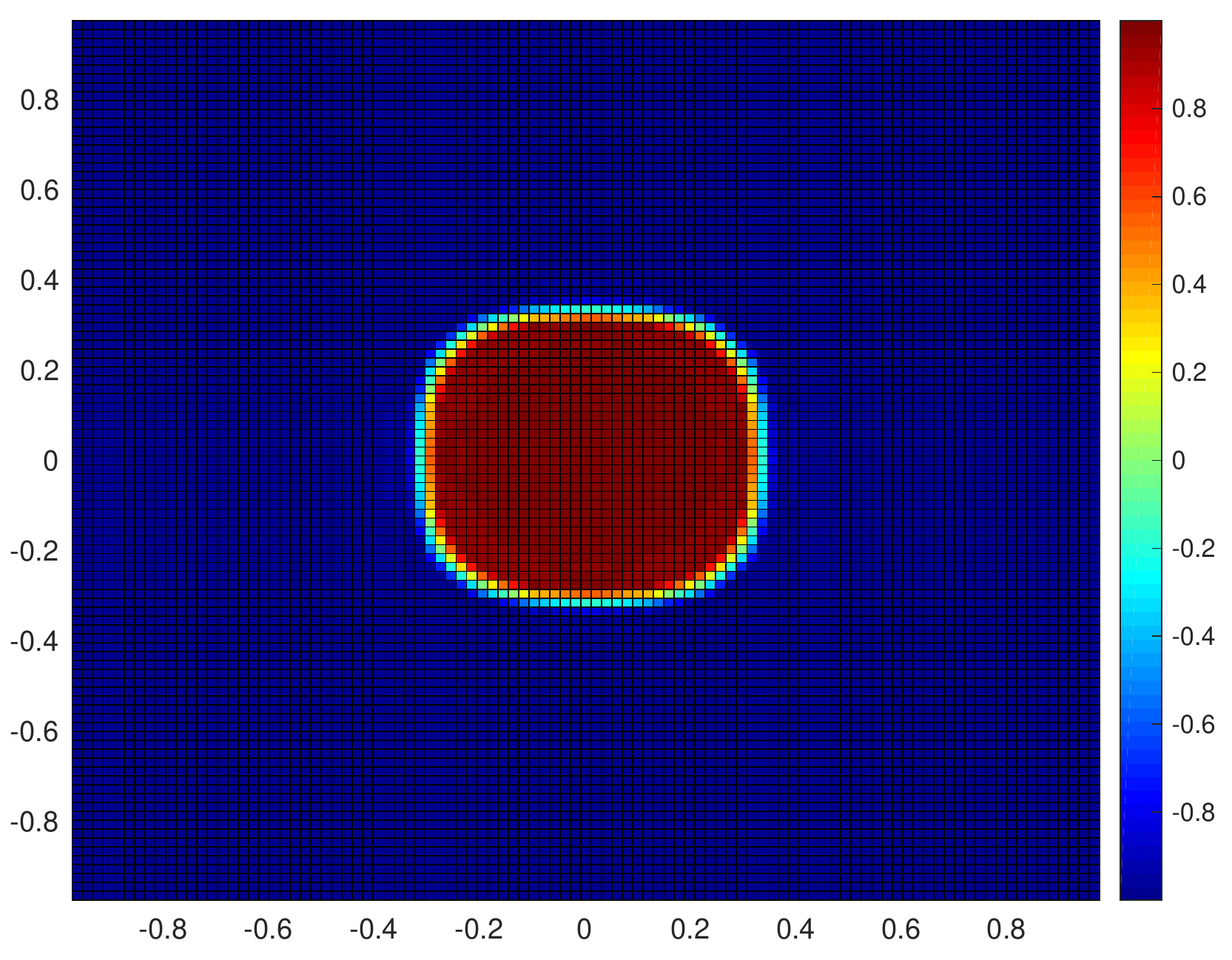}
    \end{minipage}
    }
    \centering \subfigure[$t=20$]{
    \begin{minipage}[b]{0.3\textwidth}
    \centering
    \includegraphics[width=1.0\textwidth,height=1.4in]{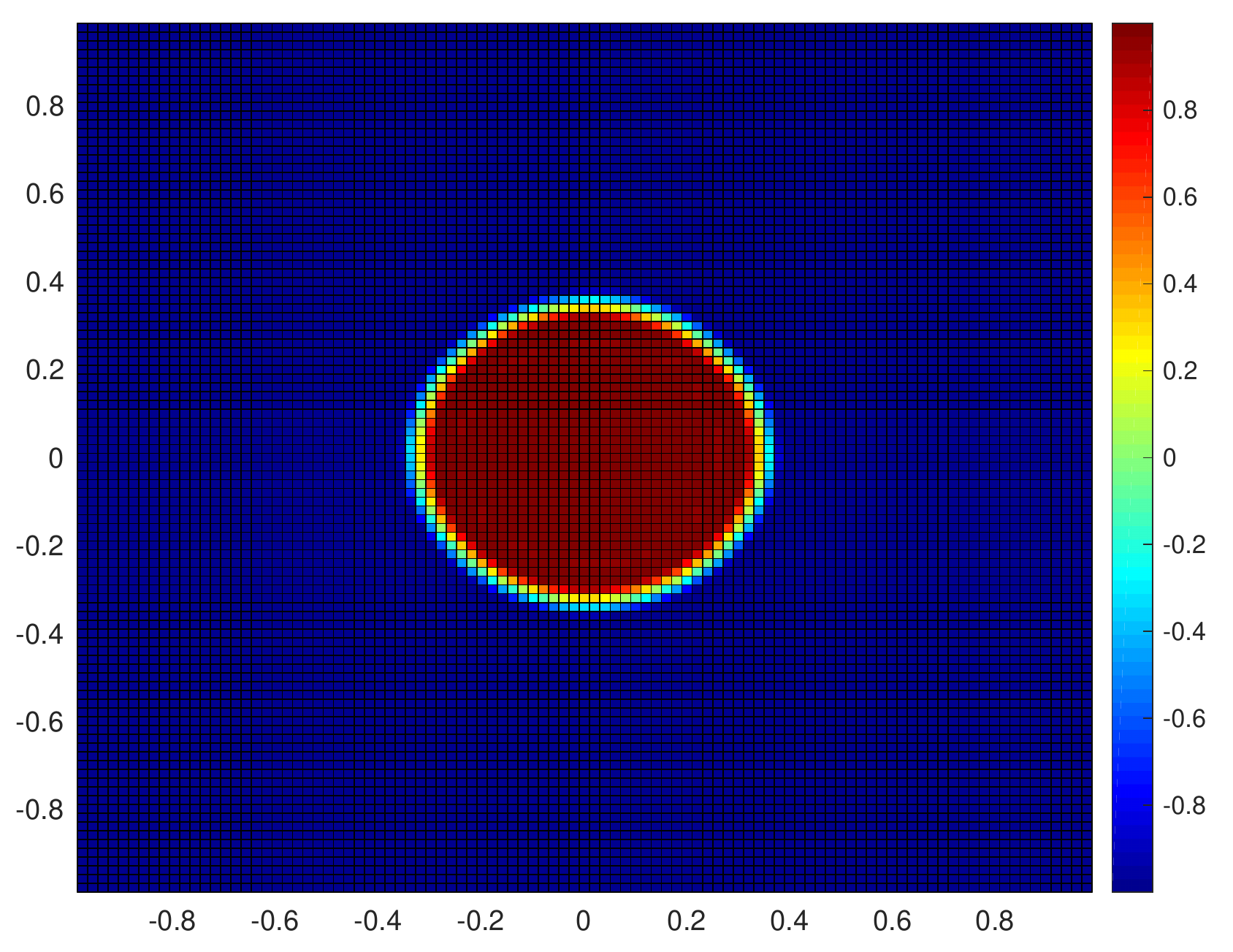}
    \end{minipage}
    }
    \caption{The contour of $\phi$ interpolated by the SPH operator}
    \label{example 1_contour}
 \end{figure}
 \begin{figure}[!t]
    \centering \subfigure[Energy]{
    \begin{minipage}[b]{0.45\textwidth}
    \centering
    \includegraphics[width=0.9\textwidth,height=0.75\textwidth]{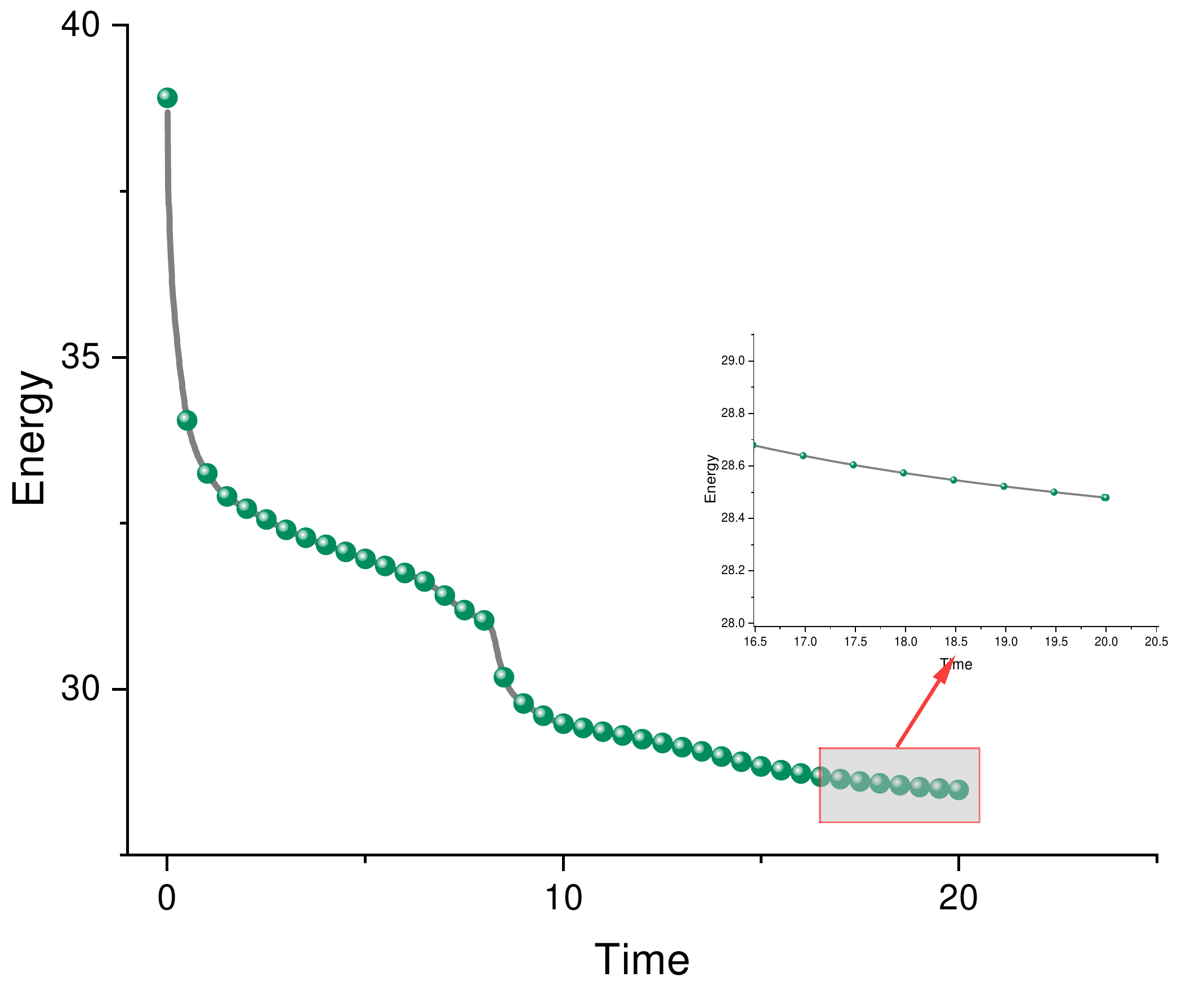}
    \end{minipage}
    }
    \centering      \subfigure[Momentum]{
    \begin{minipage}[b]{0.45\textwidth}
    \centering
    \includegraphics[width=0.9\textwidth,height=0.75\textwidth]{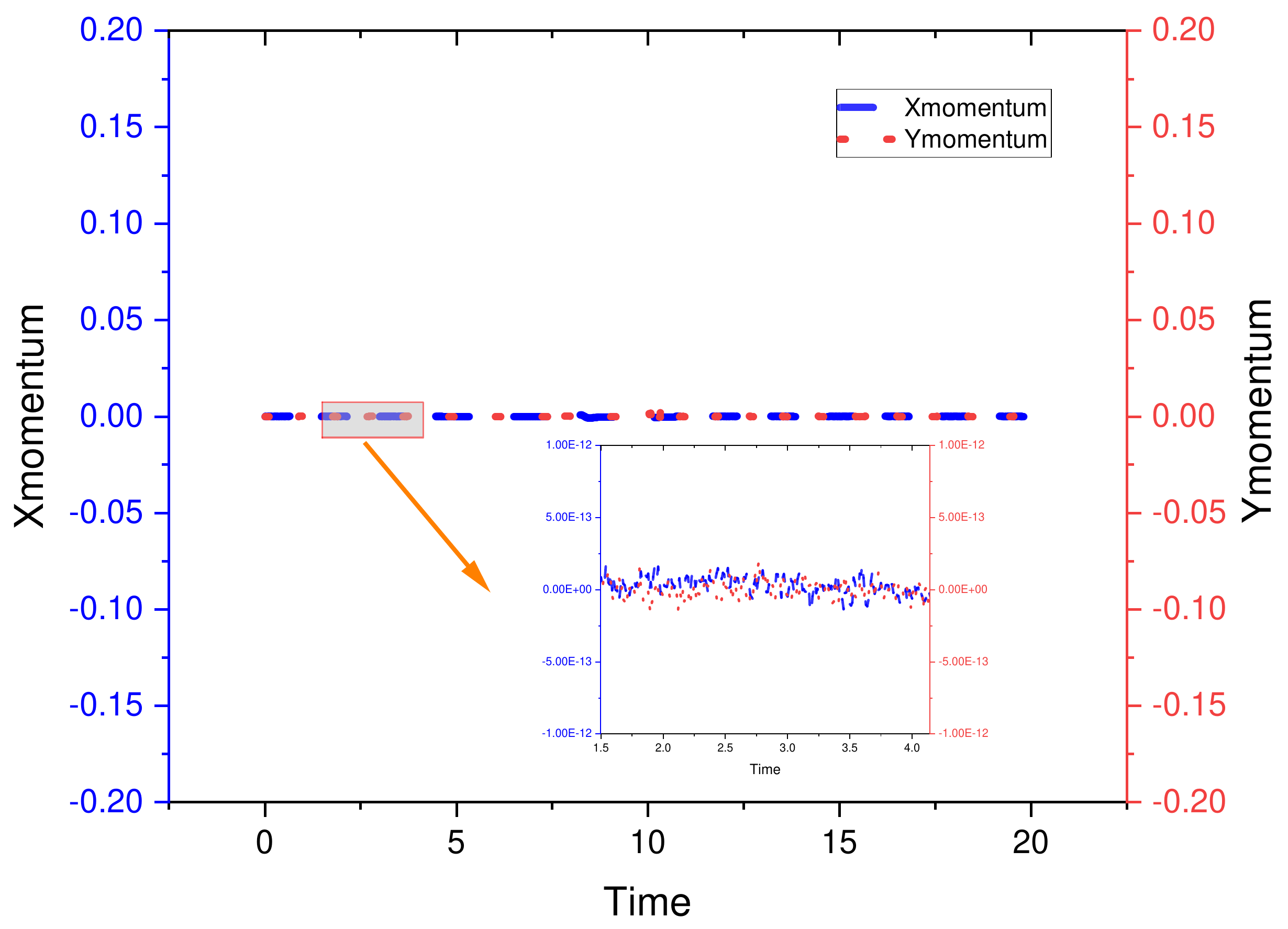}
    \end{minipage}
    }
    \caption{Total energy profile and momentum profile with time} 
    \label{example 1_momentum_energy}
 \end{figure}

In Fig.\ref{example 1_particles}, we observe that the droplet in a square shape deforms into a round one under the interfacial tension. The distribution of particles in the eventual steady state is totally different from the initial uniform distribution. But generally, the spacing between particles remains uniform. So, we conclude that the SPH method can maintain the divergence-free condition well and does not cause tensile instability. The symmetric velocity field and the Fig.\ref{example 1_momentum_energy}(b) also imply momentum conservation at the ODE level and the discrete level. We find the momentum oscillates a little bit numerically around zero from the zoom window. Particles at the interface are also distributed one round layer by one round layer, which is caused by the interfacial force defined by the gradient of $\phi$. While in the bulk phase part, the particles are located more randomly and messily, which makes sense because of the homogeneity.

 In order to show the diffuse characteristics of the NSCH model, an initial condition with a sharp interface is set here. The results show the forming of a diffuse interface with limited width as the numerical calculation process progresses. Finally, the ``color'' of particles on the diffuse interface region also obeys a gradual transition from $-1$ to $1$. This phenomenon coincides with the physical principles. Furthermore, the contour of $\phi$ at the continuous domain interpolated by the SPH in Fig.\ref{example 1_contour} shows that it agrees well with the results of the traditional method based on the fine mesh. 

Moreover, we observe the consistent decay of the total energy in the Fig.\ref{example 1_momentum_energy}(a), which validates our design of ODE expression and the fully discrete scheme. From the zoom window, we can see total energy is dissipating with time, even at the end of the stage of numerical testing. Gradually, when this droplet becomes one in a round shape, the system reaches a steady state with the minimum energy.

\subsection{The deformation and rotation of a droplet in the force field}
We consider the case of a single droplet's deformation and rotation phenomena in the square domain. At the initial stage, a force field $g(x,y,t)$ is enforced in this periodic domain to make this droplet rotate and deform, 
$$
g(x,y,t) =\left\{
       \begin{array}{l}
        10\sin(\pi(x+1)), \quad  0\le t \le 0.5, \\
        0, \quad \text{otherwise}.
       \end{array}
       \right.
$$
The parameters are chosen as: $h = 0.05,\,\Delta t = 0.01,\, T = 20, \,\mathcal{M} = 10^3, \, \eta = 1 , \lambda = 1\, ,\, \epsilon = 0.02,\,\text{Mobility} = 0.002, \xi = 1.$
Initially, the diffusive interface is smoothed by $\phi_0(x,y) = -\tanh((x^2+y^2-0.3^2)/0.02)$
 %
\begin{figure}[h]
    \centering
    \includegraphics[width=0.4\textwidth,height=0.33\textwidth]{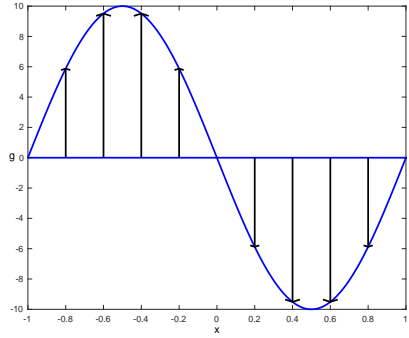}
    \caption{The diagram of force field}
    \label{example 2_force}
 \end{figure}
 \begin{figure}[!b]
    \centering \subfigure[$t=0.01$]{
    \begin{minipage}[b]{0.20\textwidth}
    \centering
    \includegraphics[width=1.0\textwidth,height=0.9in]{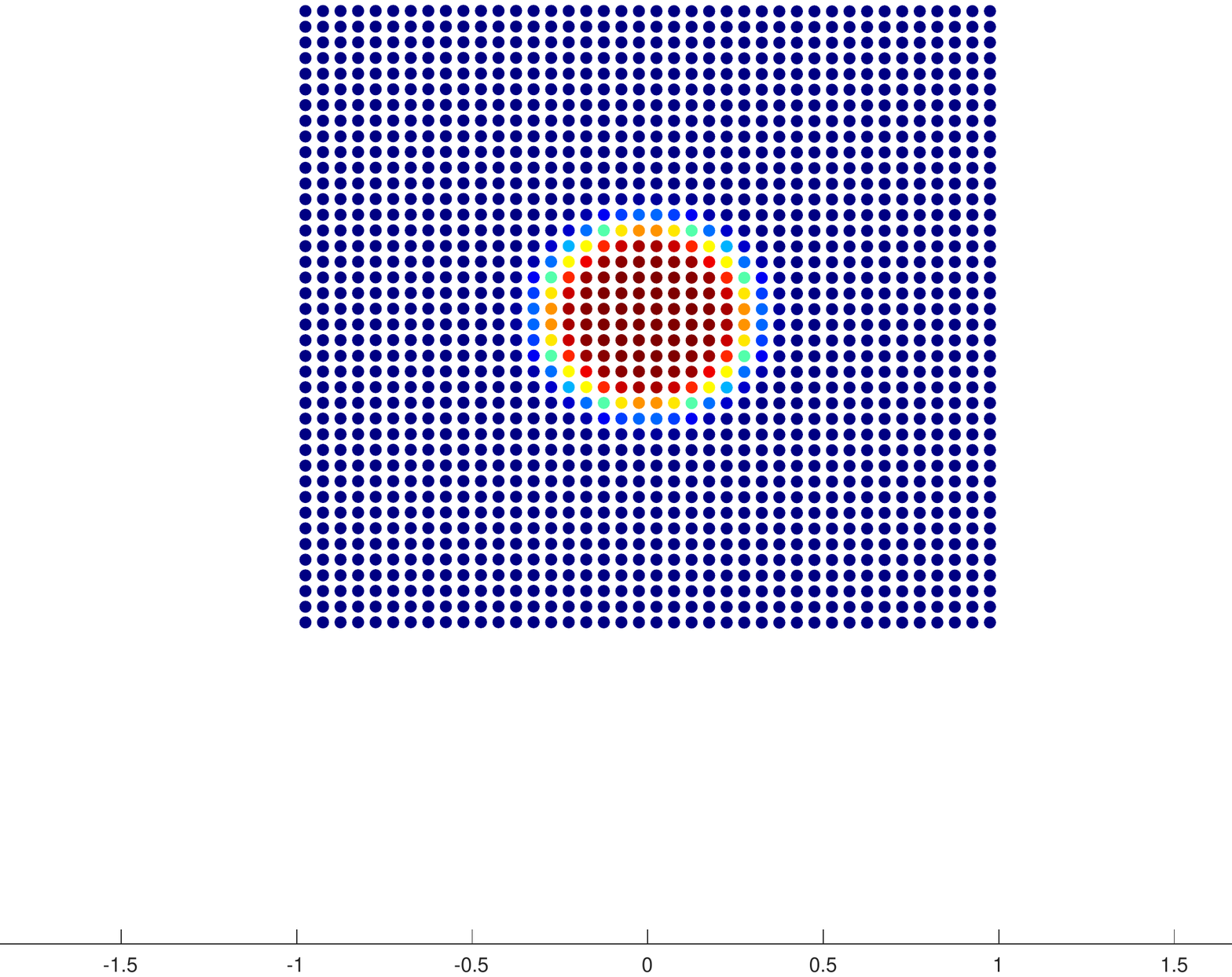}
    \end{minipage}
    }
    \centering \subfigure[$t=0.25$]{
    \begin{minipage}[b]{0.20\textwidth}
    \centering
    \includegraphics[width=1.0\textwidth,height=0.9in]{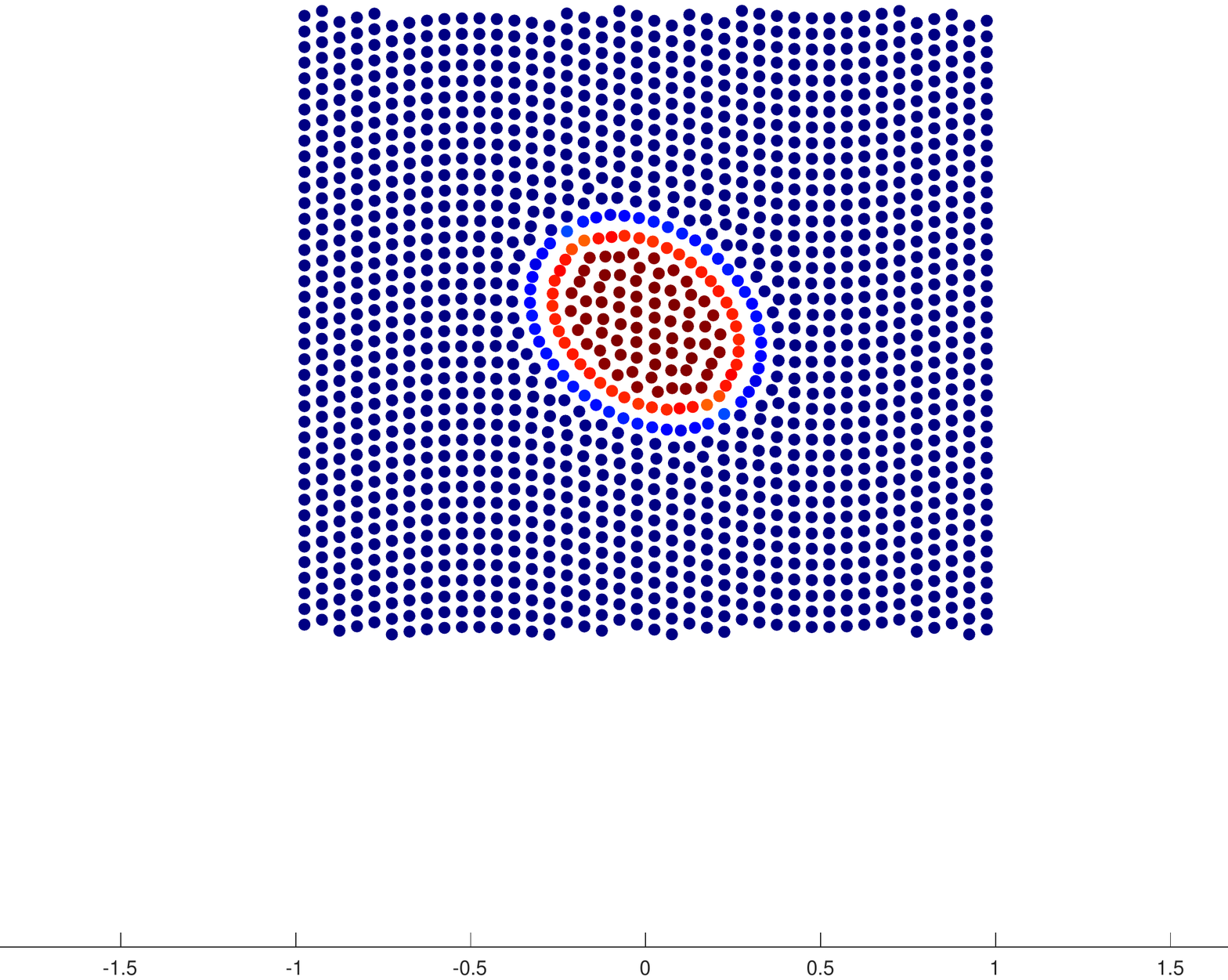}
    \end{minipage}
    }
    \centering \subfigure[$t=0.5$]{
    \begin{minipage}[b]{0.20\textwidth}
    \centering
    \includegraphics[width=1.0\textwidth,height=0.9in]{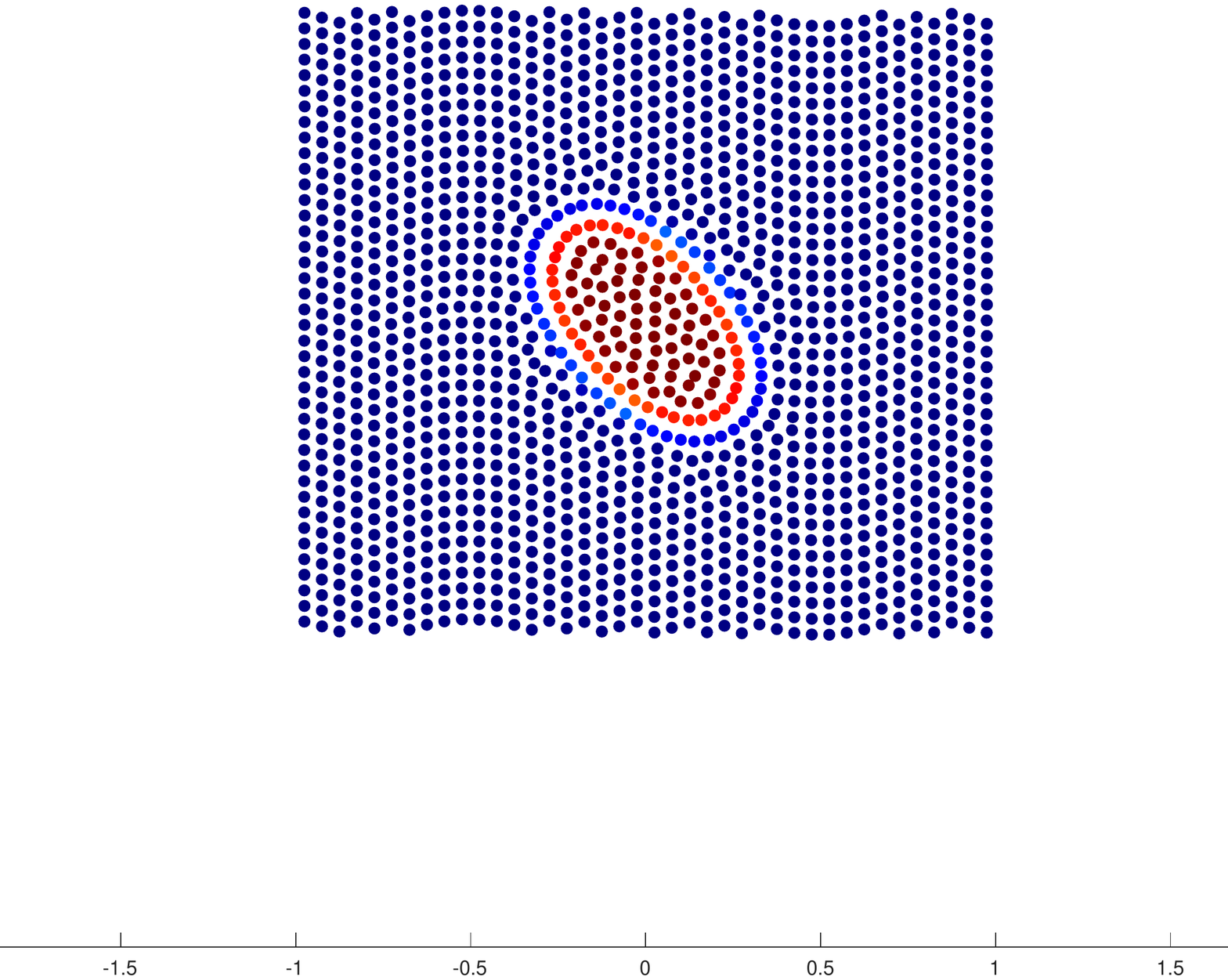}
    \end{minipage}
    }
    \centering \subfigure[$t=20$]{
    \begin{minipage}[b]{0.20\textwidth}
    \centering
    \includegraphics[width=0.95\textwidth,height=0.9in]{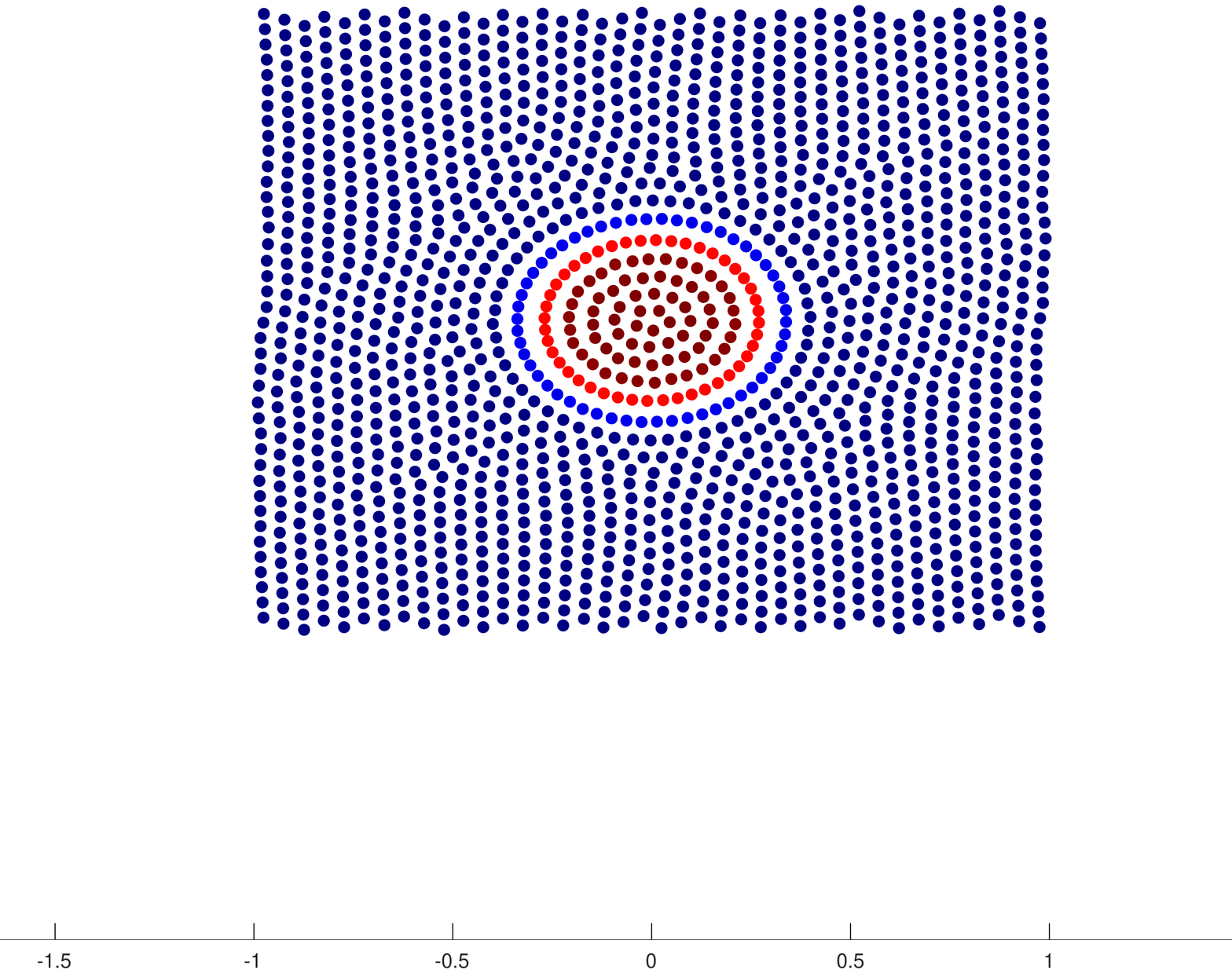}
    \end{minipage}
    }
    \caption{The evolution of $\phi$ and spatial particle distribution with time in example 2}
    \label{example 2_particles}
 \end{figure}
 \begin{figure}[!b]
    \centering \subfigure[$t=0.01$]{
    \begin{minipage}[b]{0.20\textwidth}
    \centering
    \includegraphics[width=0.95\textwidth,height=0.9in]{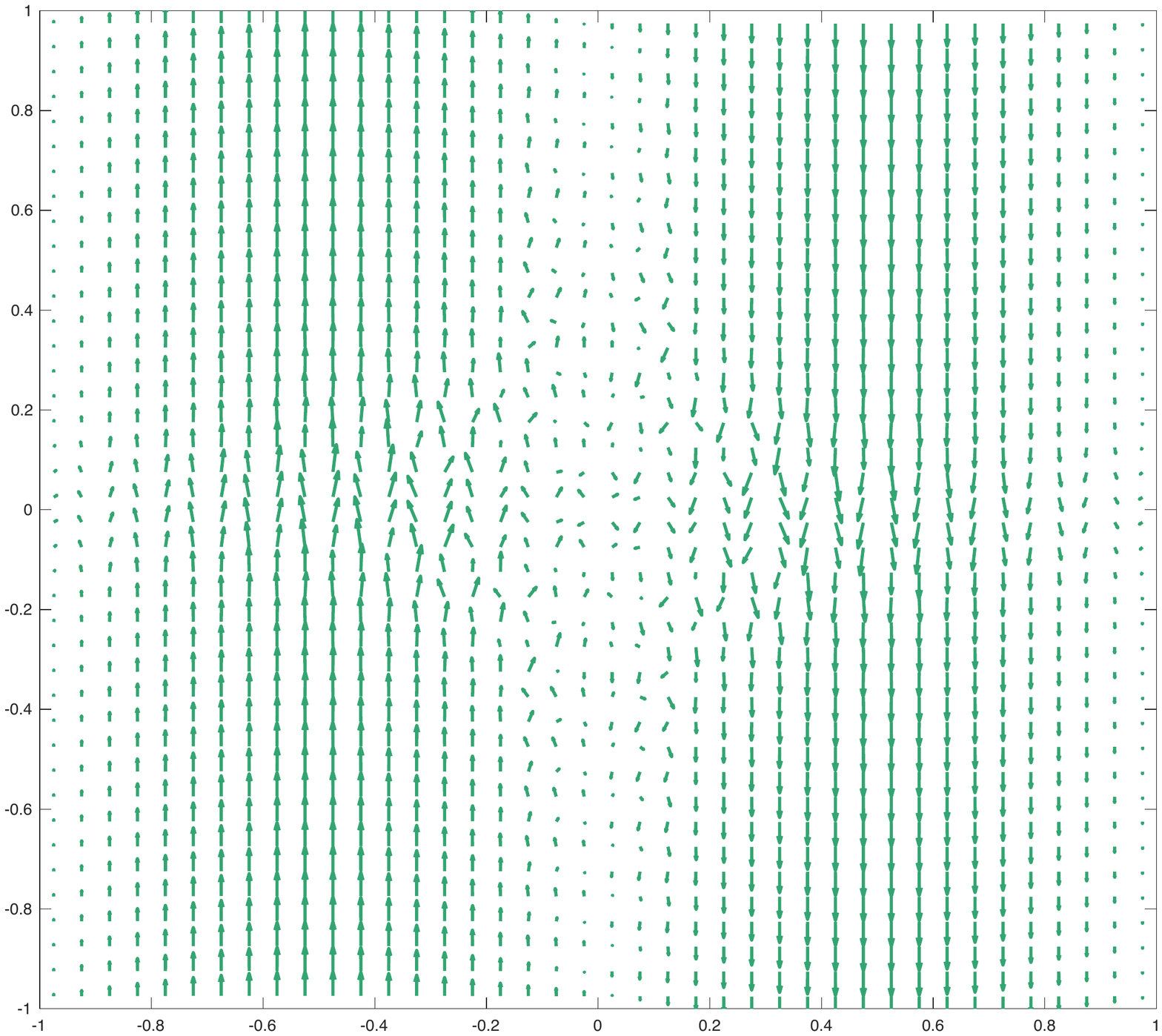}
    \end{minipage}
    }
    \centering \subfigure[$t=0.25$]{
    \begin{minipage}[b]{0.20\textwidth}
    \centering
    \includegraphics[width=0.95\textwidth,height=0.9in]{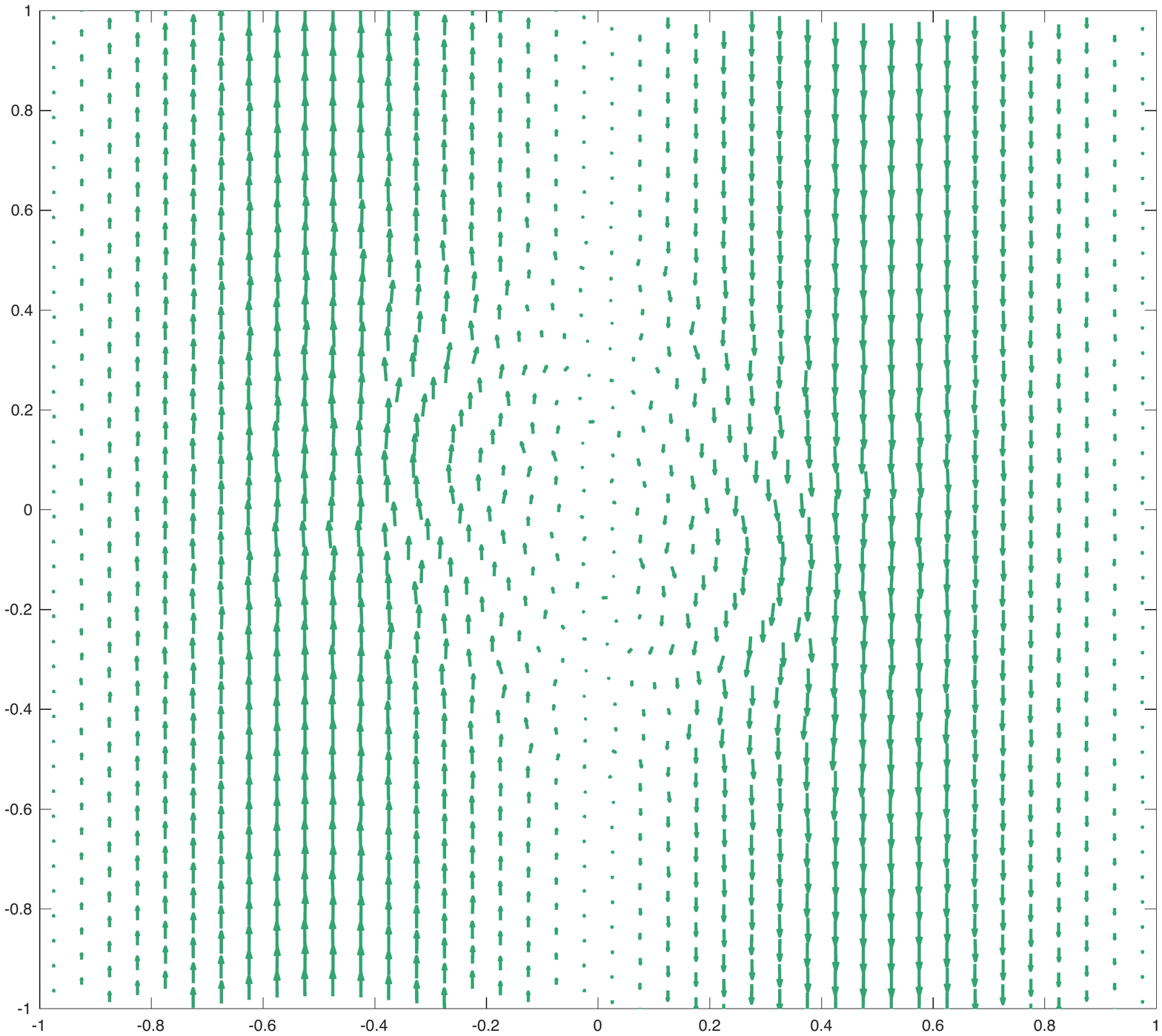}
    \end{minipage}
    }
    \centering \subfigure[$t=0.5$]{
    \begin{minipage}[b]{0.20\textwidth}
    \centering
    \includegraphics[width=0.95\textwidth,height=0.9in]{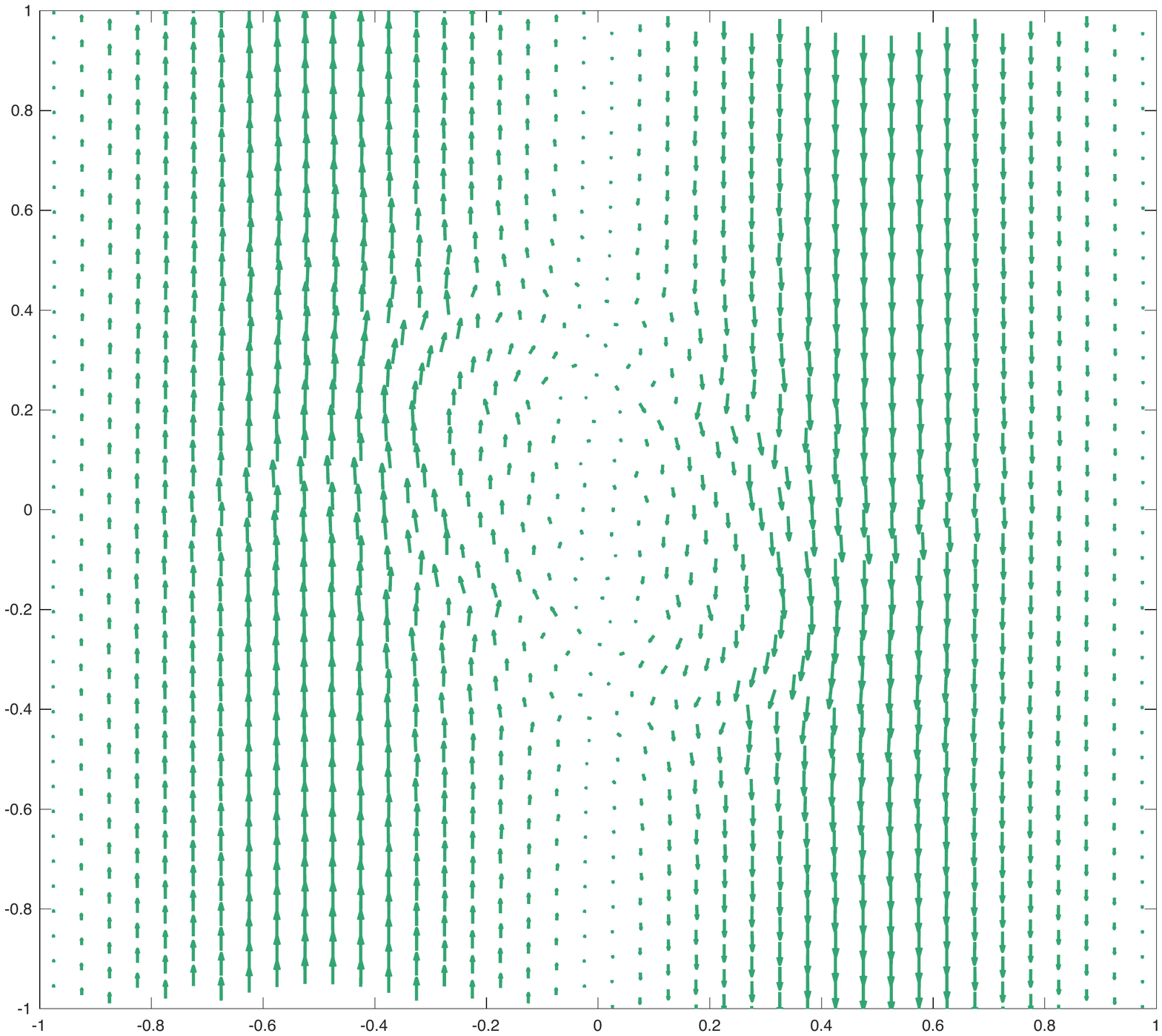}
    \end{minipage}
    }
    \centering \subfigure[$t=20$]{
    \begin{minipage}[b]{0.20\textwidth}
    \centering
    \includegraphics[width=0.95\textwidth,height=0.9in]{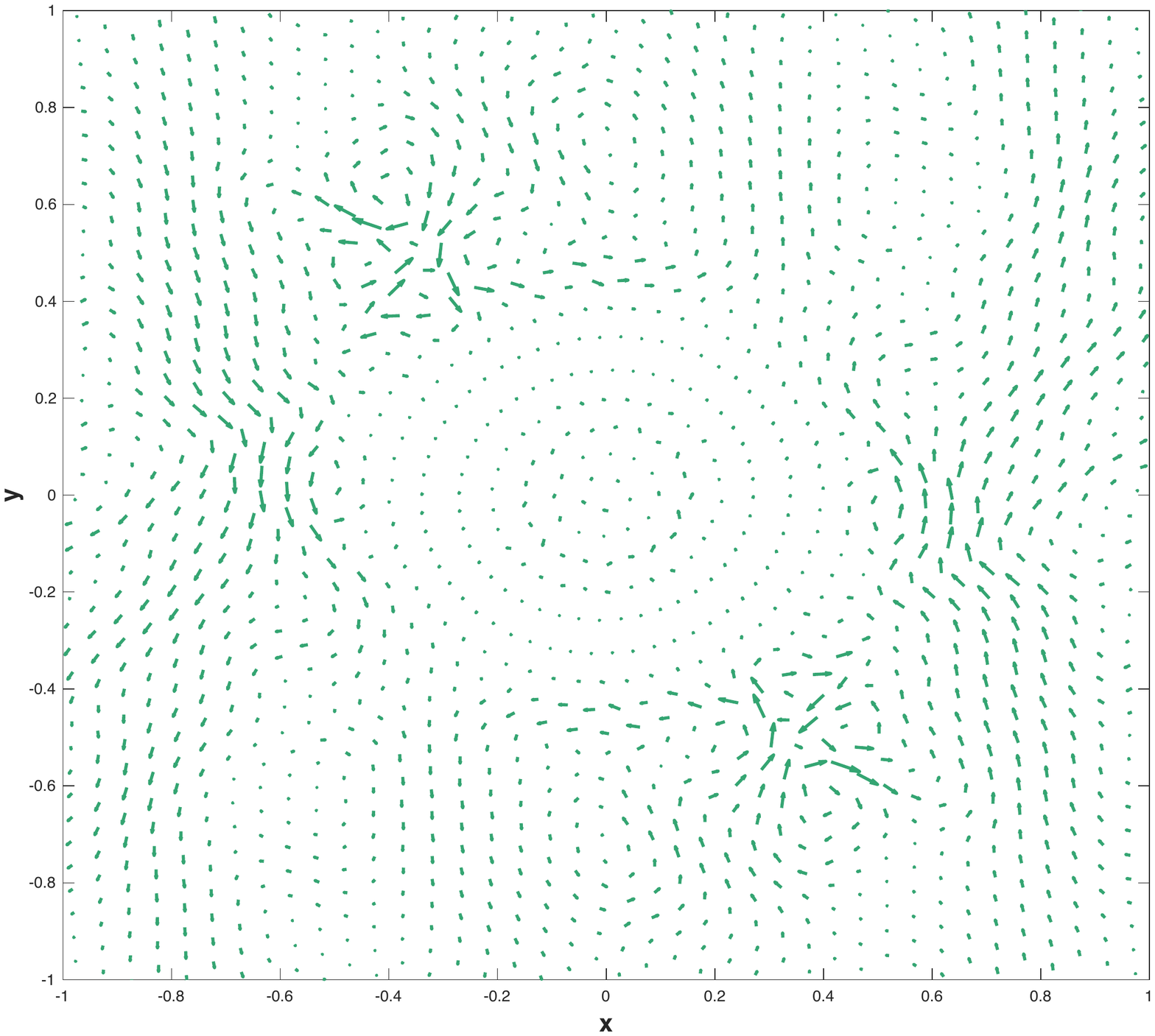}
    \end{minipage}
    }
    \caption{The evolution of velocity quiver plot}
    \label{example2_vel}
 \end{figure}
 \begin{figure}[!t]
    \centering \subfigure[Energy]{
    \begin{minipage}[b]{0.45\textwidth}
    \centering
    \includegraphics[width=0.9\textwidth,height=0.75\textwidth]{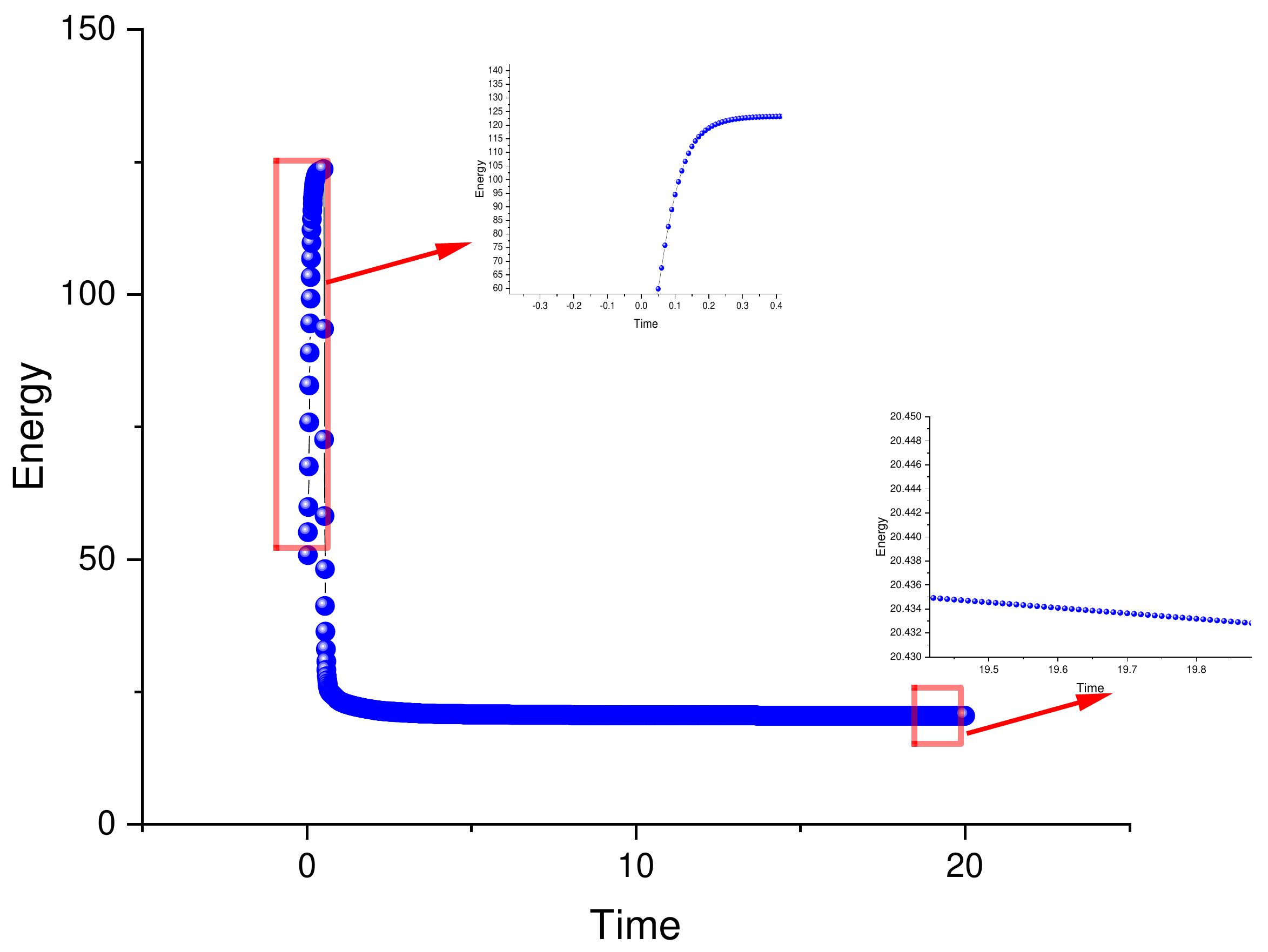}
    \end{minipage}
    }
    \centering      \subfigure[Momentum]{
    \begin{minipage}[b]{0.45\textwidth}
    \centering
    \includegraphics[width=0.9\textwidth,height=0.75\textwidth]{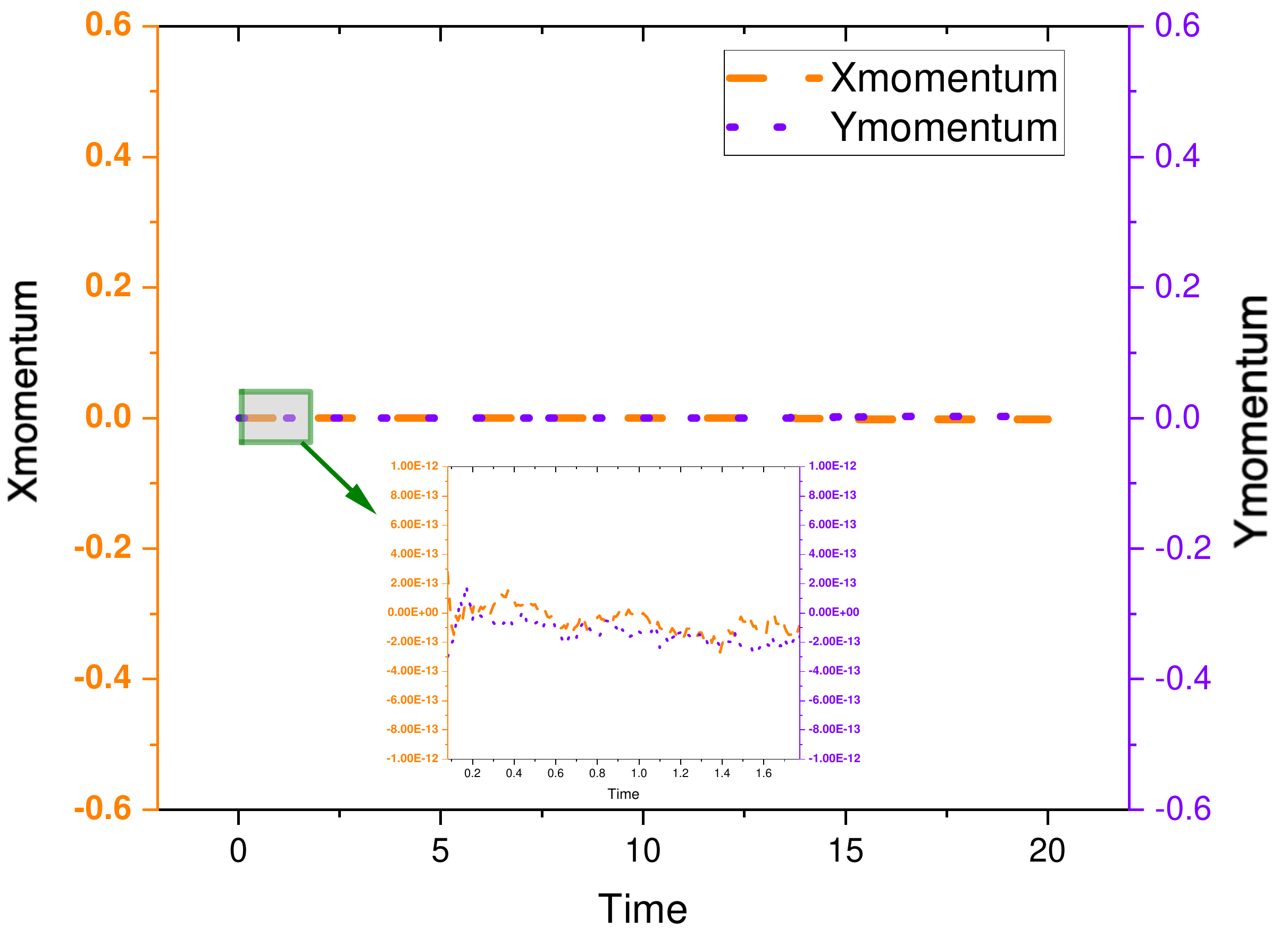}
    \end{minipage}
    }
    \caption{Total energy profile and momentum profile with time}     \label{example 2_energy_momentum}
 \end{figure}

The diffuse interface we set up artificially as the initial conditional gradually evolves into an interface with more physical transition layers. This example shows the good adaptivity to the two-phase flow problems with high deformation. The interface can be tracked naturally with the $\phi$ information carried by the particles as Fig.\ref{example 2_particles}. The symmetric velocity quiver plots as Fig.\ref{example2_vel} and Fig.\ref{example 2_energy_momentum}(b) indicate the maintenance of divergence-free conditions and momentum conservation.

In Fig.\ref{example 2_energy_momentum}(a), the total energy of the system is increasing at the beginning, in the time period $0–0.5$, due to the external work done by the force field. As we expected, after cancelling this force field at time = $0.5$, the total energy of the system decays with time due to the viscous dissipation and the diffusive effect of the CH system and the momentum conservation holds well even facing relative high flow rate at the initial stage as well.
 %
 \subsection{Droplets of different sizes merge into one}
 In this example, we consider the case of four circular droplets merging with one large droplet in the computational domain. The parameters are chosen as:
$h = 0.05, \Delta t = 0.01, T = 10,  \mathcal{M} = 10^3,  \eta = 1 , \lambda = 1, \epsilon = 0.02, \text{Mobility} = 0.002, \xi=1.$
The initial phase variable is chosen as:
$$
\begin{aligned}
\phi_0(x,y) &= -\tanh((x^2+y^2-0.3^2)/0.01)* \\
    &\tanh(((x-0.4)^2+y^2-0.1^2)/0.01)*
     \tanh(((x-0.2)^2+y^2-0.1^2)/0.01)*
     \\
    &\tanh(((y-0.2)^2+x^2-0.1^2)/0.01)*
     \tanh(((y-0.4)^2+x^2-0.1^2)/0.01).
\end{aligned}
$$

The free energy contours in Fig.\ref{example 3_free} show that the maximum free energy decreases from $1.5$ to $0.3$. At the initial stage, the interfaces of droplets with different sizes connect with each other with very sharp angles. Since the free mixing energy characterizes the surface intension, the sharp connection naturally leads to a higher free energy in this region. The maximum value of the free energy gradually decreases as the angle becomes round and smooth. The homogenization of $\phi$ at the interface region also helps decrease the maximum value of free energy. The energy dissipation and momentum conservation are preserved well in example 3 as Fig.\ref{example 3_energy_momentum}.
\begin{figure}[!htb]
    \centering \subfigure[Initial]{
    \begin{minipage}[b]{0.3\textwidth}
    \centering
    \includegraphics[width=1.0\textwidth,height=1.4in]{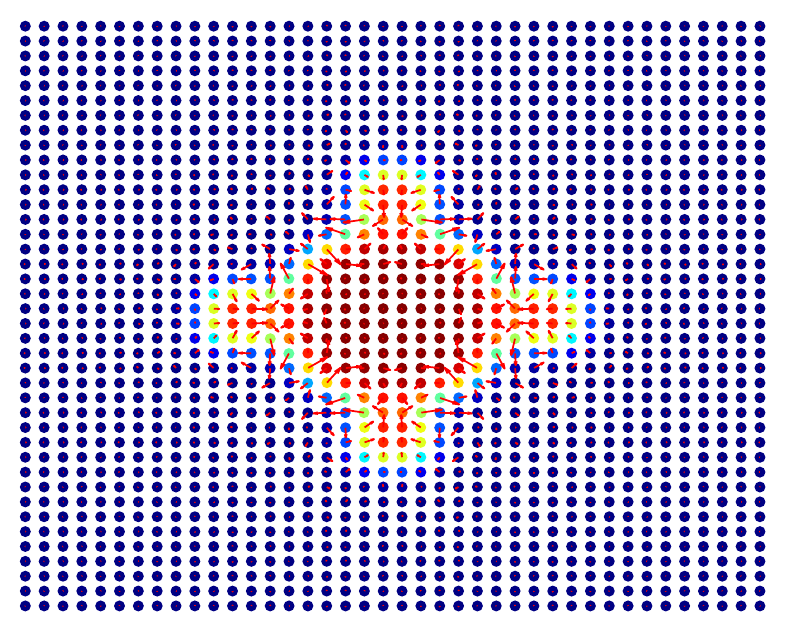}
    \end{minipage}
    }
    \centering \subfigure[$t=1$]{
    \begin{minipage}[b]{0.3\textwidth}
    \centering
    \includegraphics[width=1.0\textwidth,height=1.4in]{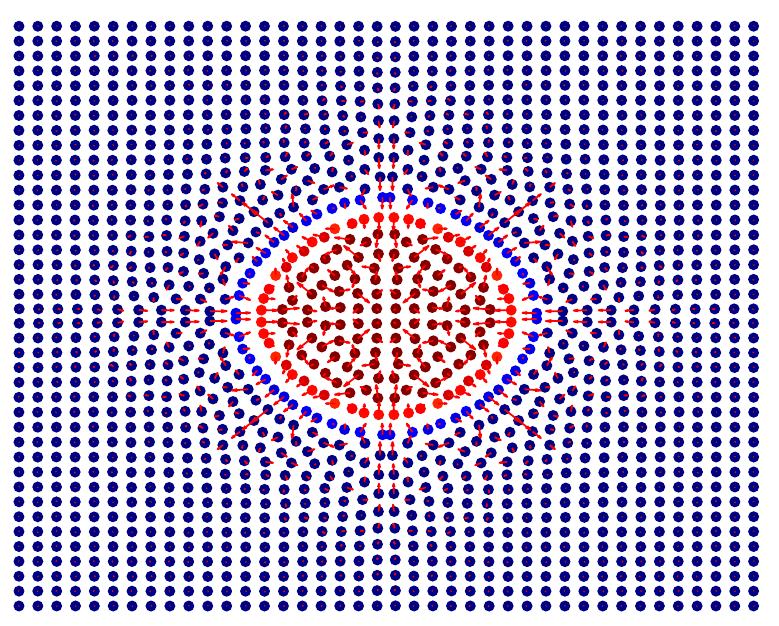}
    \end{minipage}
    }
    \centering \subfigure[$t=30$]{
    \begin{minipage}[b]{0.3\textwidth}
    \centering
    \includegraphics[width=1.0\textwidth,height=1.4in]{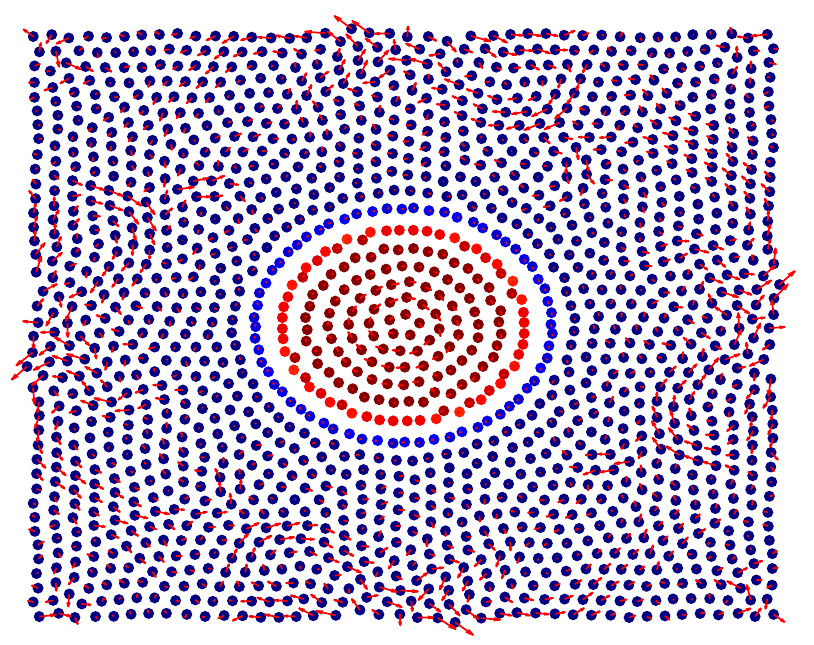}
    \end{minipage}
    }
    \caption{The evolution of $\phi$ and spatial particle distribution with time in example 3}
    \label{example 3_particles}
\end{figure}
\begin{figure}[!htb]
    \centering \subfigure[Initial]{
    \begin{minipage}[b]{0.3\textwidth}
    \centering
    \includegraphics[width=1.0\textwidth,height=1.4in]{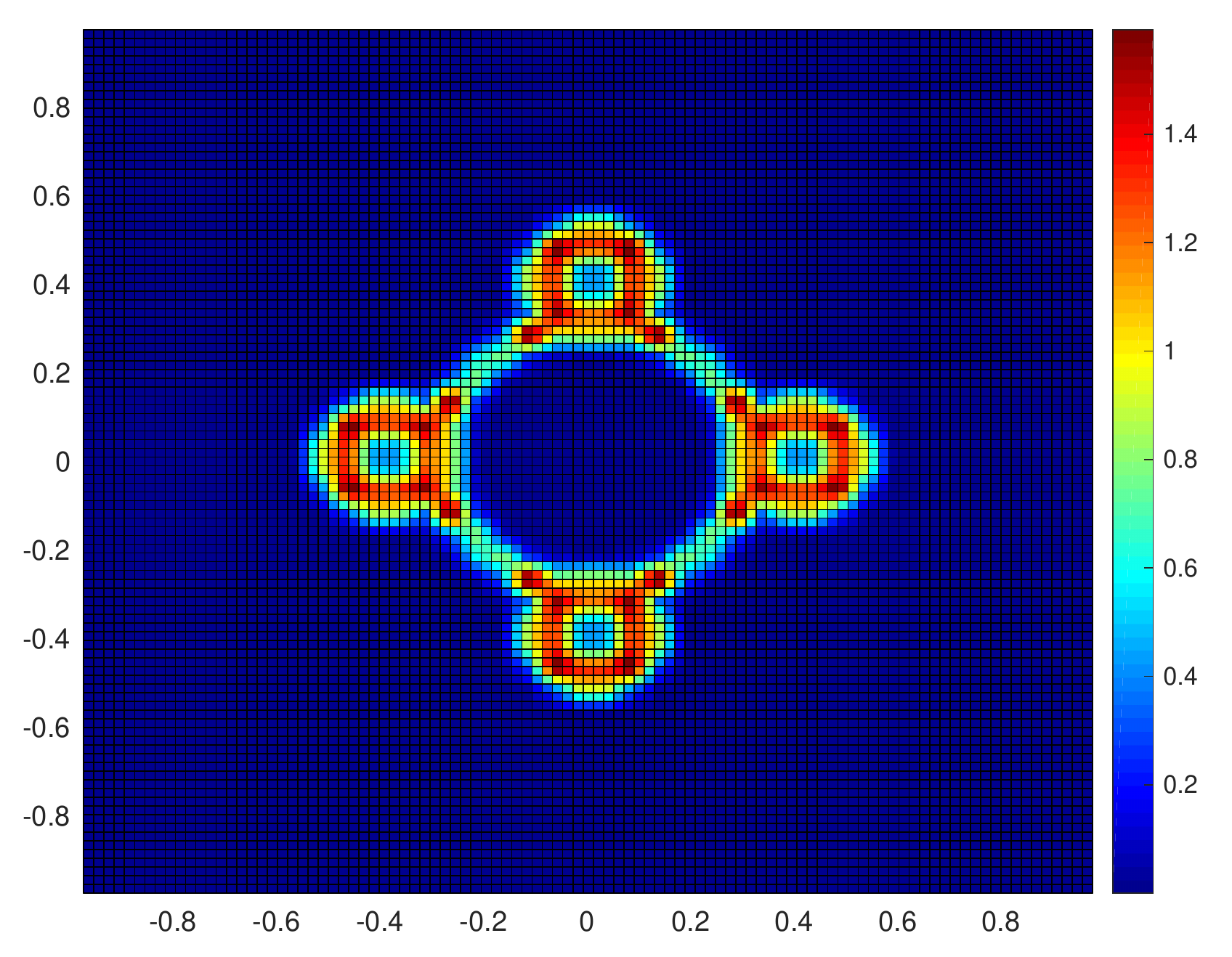}
    \end{minipage}
    }
    \centering \subfigure[$t=1$]{
    \begin{minipage}[b]{0.3\textwidth}
    \centering
    \includegraphics[width=1.0\textwidth,height=1.4in]{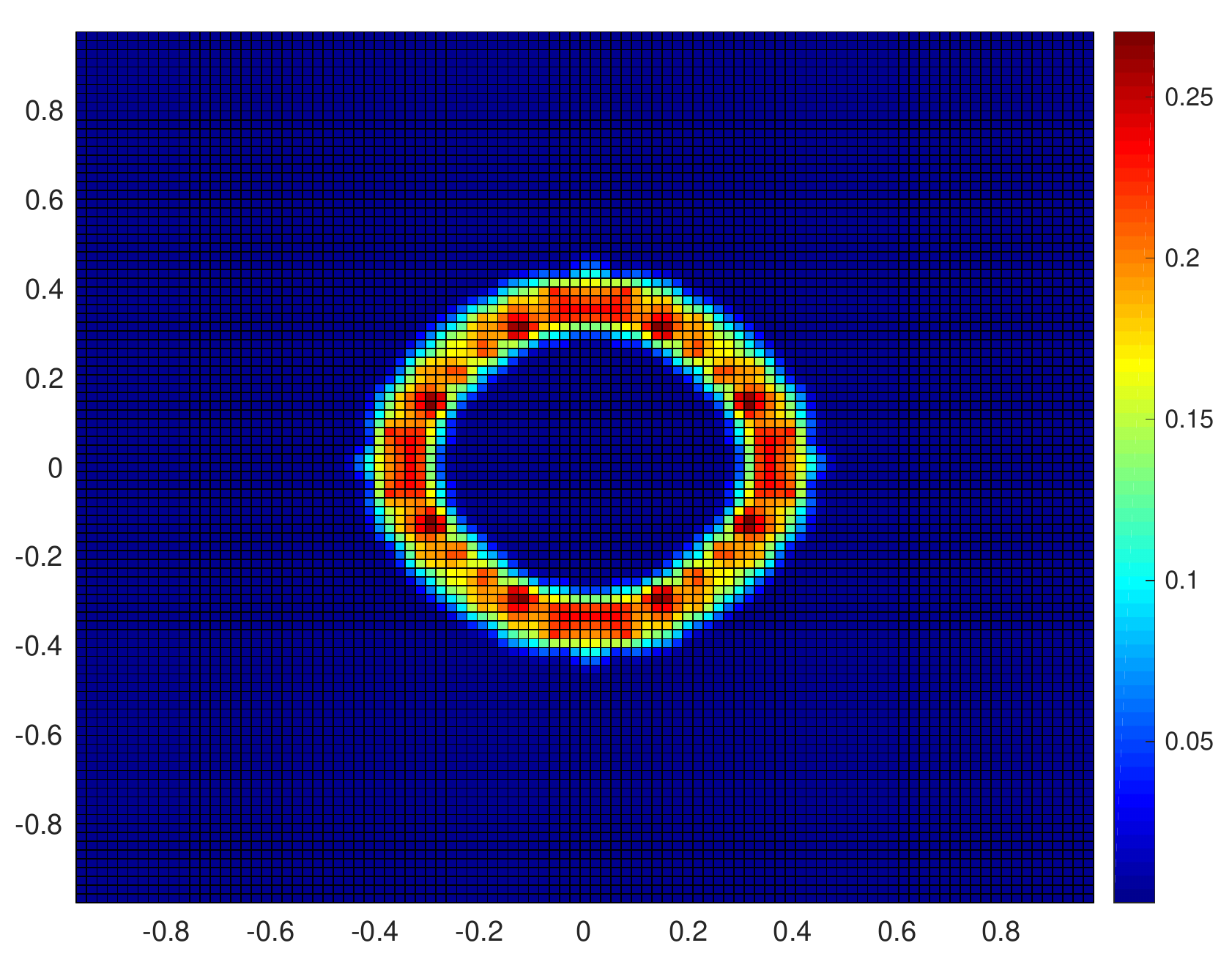}
    \end{minipage}
    }
    \centering \subfigure[$t=30$]{
    \begin{minipage}[b]{0.3\textwidth}
    \centering
    \includegraphics[width=1.0\textwidth,height=1.4in]{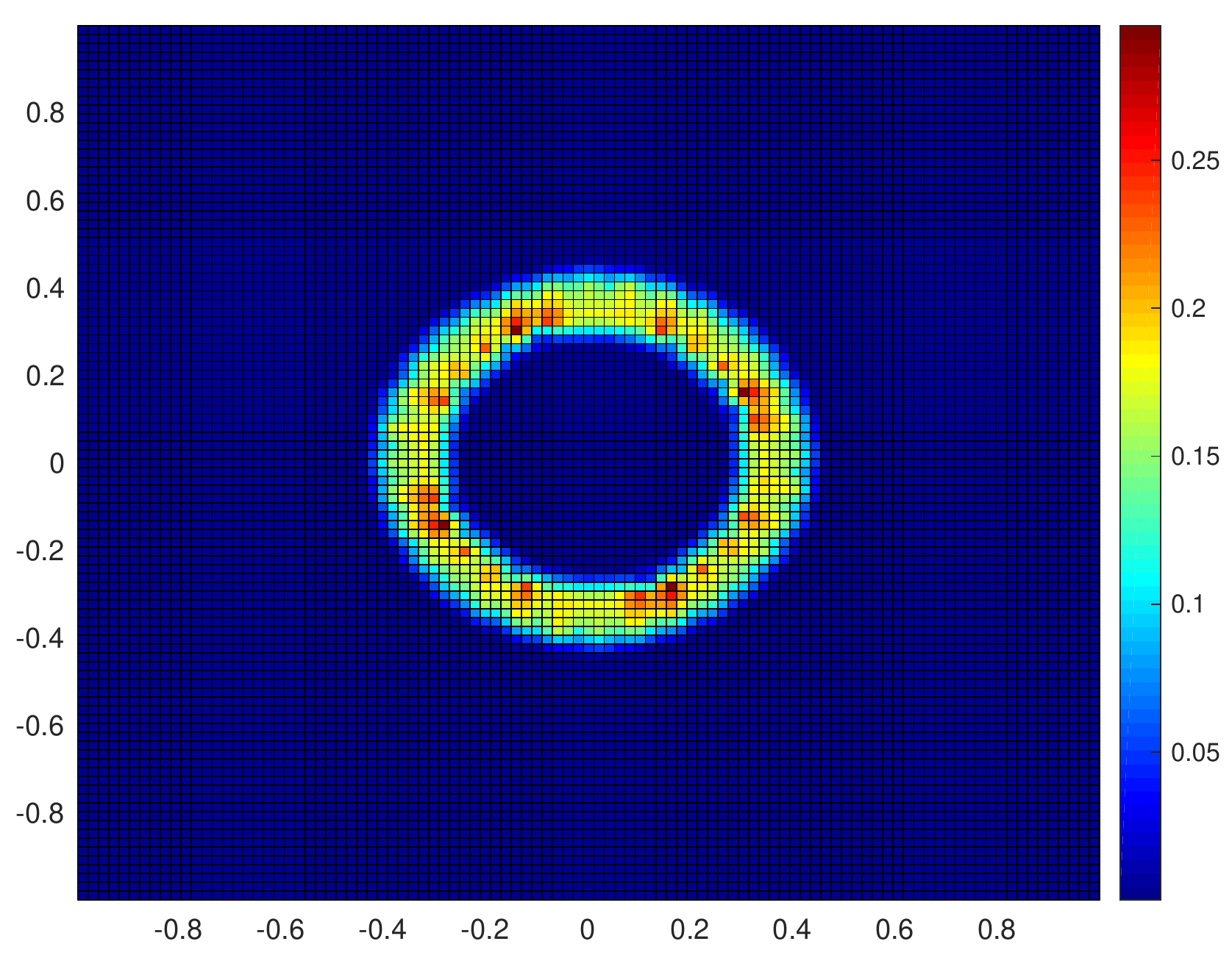}
    \end{minipage}
    }
    \caption{The evolution of free energy throughout the entire domain}
    \label{example 3_free}
\end{figure}
 \begin{figure}[!t]
    \centering \subfigure[Energy]{
    \begin{minipage}[b]{0.45\textwidth}
    \centering
    \includegraphics[width=0.9\textwidth,height=0.75\textwidth]{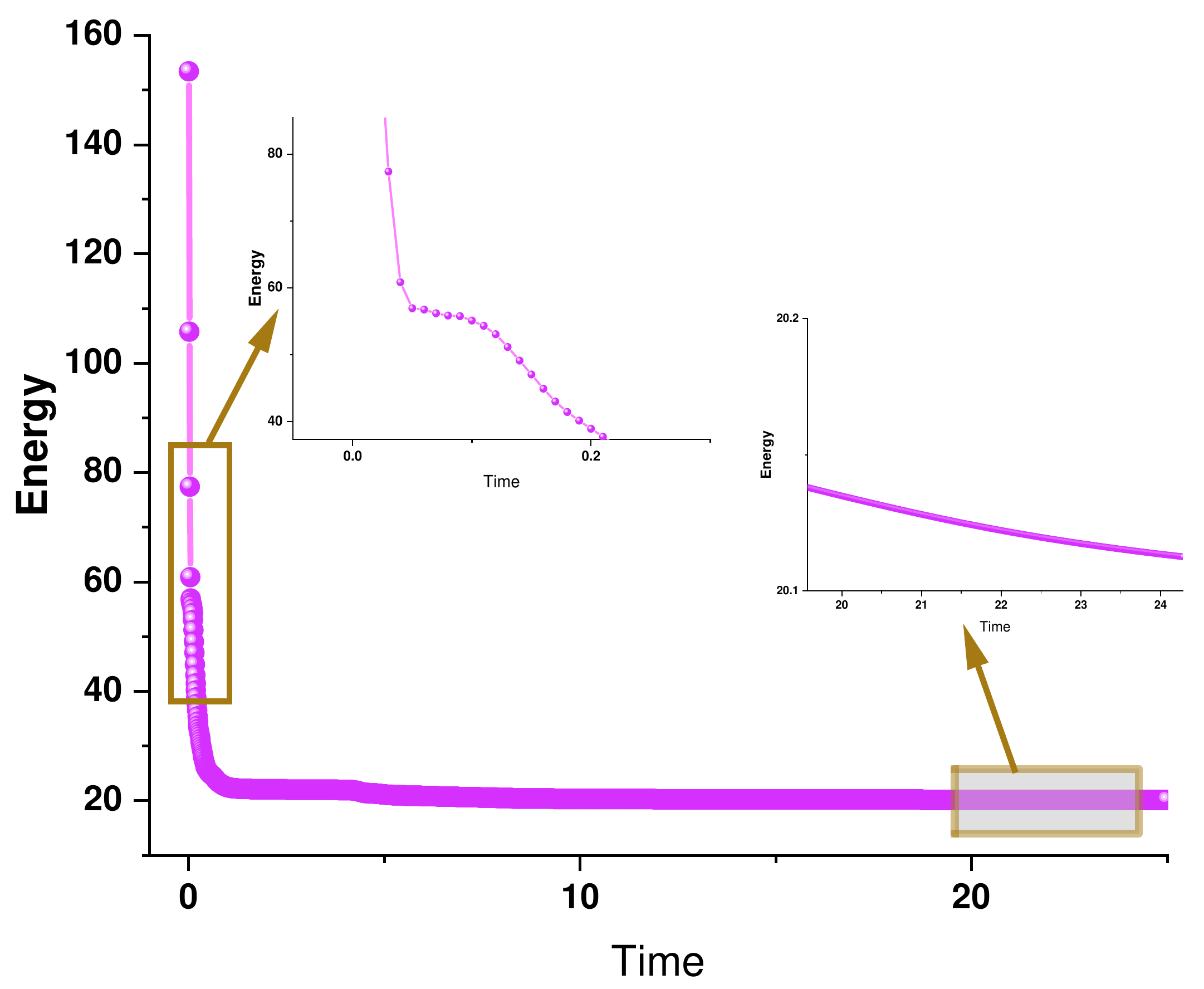}
    \end{minipage}
    }
    \centering      \subfigure[Momentum]{
    \begin{minipage}[b]{0.45\textwidth}
    \centering
    \includegraphics[width=0.9\textwidth,height=0.75\textwidth]{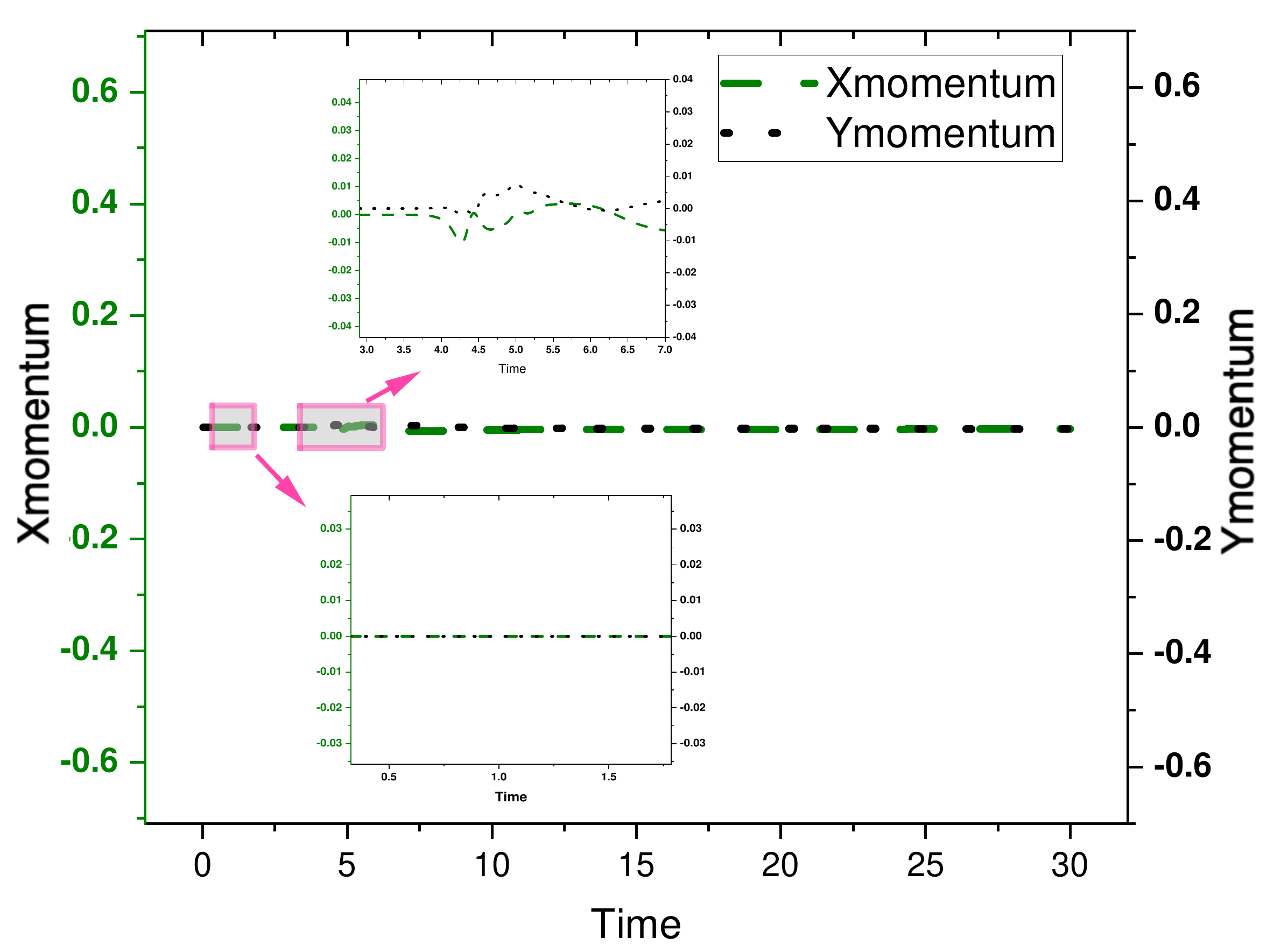}
    \end{minipage}
    }
    \caption{Total energy profile and momentum profile with time}
    \label{example 3_energy_momentum}
 \end{figure}



\section{Conclusion}  \label{section:Conclusion}
In this work, we exploit the great potential of the Lagrangian particle treatment in the SPH, which also fulfills the increasing demand for particle modeling that allows more physical properties to be incorporated. To the best of our knowledge, this is the first study of energy-stable ODE discretization and the energy-stable fully discrete scheme by the SPH method for two-phase problems. Our proposed scheme ensures the inheritance of momentum conservation and the energy dissipation law from the PDE level to the ODE level, and then to the fully discrete level. Consequently and desirably, it also helps increase the stability of the numerical method. The time step size can be much larger than that of the traditional ISPH methods. This energy-stable SPH method also alleviates the tensile instability without using any particle-shifting strategies, which may destroy the rigorous mathematical proof. The numerical results also demonstrate that our method captures the interface behavior and the energy variation process well.

\section*{Acknowledgments}
This work is supported by King Abdullah University of Science and Technology (KAUST) through the grants BAS/1/1351-01, URF/1/407401, and URF/1/3769-01. Z. Qiao's work is partially supported by the Hong Kong Research Grants Council (RFS Project No.RFS2021-5S03 and GRF project No.15302919) and the Hong Kong Polytechnic University internal grant No.4-ZZLS.

\appendix

\bibliographystyle{elsarticle-num} 
\bibliography{reference}
\end{document}